\newcommand{\Z}{\mathbb{Z}}
\newcommand{\lab}{\mathrm{lab}}
\newcommand{\diam}{\mathrm{diam}}
\newcommand{\dis}{\mathrm{dist}}
\newcommand{\I}{\mathrm{I}}
\newcommand{\Sp}{\mathrm{sp}}
\newcommand{\R}{\mathcal{R}}
\newcommand{\fint}{\partial_{\mathrm{in}}}
\newcommand{\fext}{\partial_{\mathrm{ex}}}
\newcommand{\C}{\mathscr{C}}
\newcommand{\calP}{\mathcal{P}}
\def\d{\mathop{\textrm{\rm d}}\nolimits}                  
\def\Int{\mathop{\textrm{\rm Int}}\nolimits}                
\def\Ext{\mathop{\textrm{\rm Ext}}\nolimits}                  
\newcommand{\be}{\begin{equation}}
\newcommand{\ee}{\end{equation}}
\numberwithin{equation}{section}
  \newcounter{dummy} \numberwithin{dummy}{section}
  \theoremstyle{plain}
  \newtheorem*{theorem*}        {Theorem}
	\newtheorem*{conjecture*}   {Conjecture}
  \newtheorem{theorem}[dummy]          {Theorem}
  \newtheorem{lemma}[dummy]              {Lemma}
  \newtheorem*{lemma*}          {Lemma}
  \newtheorem{corollary}[dummy]           {Corollary}
  \newtheorem{proposition}[dummy]       {Proposition}
  \newtheorem{remark}[dummy]           {Remark}
  \theoremstyle{remark}
  \theoremstyle{definition}
   \newtheorem{definition}[dummy]          {Definition}
\newcommand\longleftrightarrowfill@{%
  \arrowfill@\leftarrow\relbar\rightarrow}
\definecolor{Red}{cmyk}{0,1,1,0}
\definecolor{Blue}{cmyk}{1,1,0,0}
\definecolor{DarkBlue}{rgb}{0.1,0.1,0.5}
\definecolor{Red}{rgb}{0.9,0.0,0.1}
\definecolor{DarkGreen}{rgb}{0.10,0.50,0.10}
\definecolor{DarkRed}{rgb}{0.50,0.10,0.10}
\definecolor{bleu}{RGB}{0,140,189}%
\DeclareMathOperator{\s}{\mathrm{sp}}
\DeclarePairedDelimiter\ceil{\lceil}{\rceil} 
\DeclarePairedDelimiter\floor{\lfloor}{\rfloor} 
\begin{document}

\begin{center}
{\LARGE Phase Transitions in Multidimensional Long-Range Random Field Ising Models}
\vskip.5cm
Lucas Affonso$^{1}$, Rodrigo Bissacot$^{1,2}$, Jo{\~a}o Maia$^{1,3}$
\vskip.3cm
\begin{footnotesize}
$^{1}$ Institute of Mathematics and Statistics (IME-USP), University of S\~{a}o Paulo, Brazil\\
$^{2}$ Faculty of Mathematics and Computer Science, Nicolaus Copernicus University, Poland\\
$^{3}$ Beijing International Center for Mathematical Research (BICMR), Peking University, China\\
\end{footnotesize}
\vskip.1cm
\begin{scriptsize}
emails: lucas.affonso.pereira@gmail.com, rodrigo.bissacot@gmail.com, maia.joaovt@gmail.com
\end{scriptsize}

\end{center}

\begin{abstract}
We extend a recent argument by Ding and Zhuang from nearest-neighbor to long-range interactions and prove the phase transition in a class of ferromagnetic random field Ising mo-dels. Our proof combines a generalization of Fr\"ohlich-Spencer contours to the multidimensional setting, proposed by two of us, with the coarse-graining procedure introduced by Fisher, Fr\"ohlich, and Spencer. Our result shows that the Ding-Zhuang strategy is also useful for interactions $J_{xy}=|x-y|^{- \alpha}$ when $\alpha > d$ in dimension $d\geq 3$ if we have a suitable system of contours, yielding an alternative approach that does not use the Renormalization Group Method (RGM), since Bricmont and Kupiainen suggested that the RGM should also work on this generality. We can consider i.i.d. random fields with Gaussian or Bernoulli distributions.
\end{abstract}

\section{Introduction}
The problem of the presence or absence of phase transition is central in statistical mechanics. To prove the existence of a phase transition, the standard idea is to define a notion of contour and use \textit{Peierls' argument} \cite{Peierls.1936}. In the Ising model \cite{Ising_25}, particles of the system interact only with their nearest neighbors. On ferromagnetic long-range Ising models \cite{Dyson.69, Kac_Thompson_69}, there is interaction between each pair of spins in the lattice. The Hamiltonian of the model is given formally by
\begin{equation*}
    H(\sigma) = - \sum_{x,y\in \Z^d}J_{xy}\sigma_x\sigma_y,
\end{equation*}
where $J_{xy}=J|x-y|^{-\alpha}$, $J>0$ and $\alpha > d$. It is well-known that phase transition in dimension 2 for Ising models with nearest-neighbors implies phase transition for long-range interactions when $d\geq 2$, as a consequence of correlation inequalities. For the one-dimensional lattice, it is known that short-range models do not present phase transition \cite{georgii.gibbs.measures}. In the long-range case, a different behavior was conjectured depending on the exponent $\alpha$ (see \cite{Kac_Thompson_69}), but the problem was challenging.

In dimension $d=1$, phase transition was proved first in 1969 by Dyson \cite{Dyson.69}, for $\alpha \in (1,2)$, by proving phase transition in an auxiliary model, known nowadays as the \emph{Dyson model} or hierarchical model. Dyson's approach fails exactly on the critical exponent $\alpha=2$. It was already known that for $\alpha>2$ uniqueness holds \cite{georgii.gibbs.measures}. In 1982, Fr{\"o}hlich and Spencer \cite{Frohlich.Spencer.82} introduced a notion of one-dimensional contours and then applied Peierls' argument to show phase transition for the critical value $\alpha = 2$. These contours were inspired by the multiscale techniques previously introduced to study the Berezinskii-Kosterlitz-Thouless transition in two-dimensional continuous spin systems \cite{FS81}. Later, Cassandro, Ferrari, Merola and Presutti  \cite{Cassandro.05} extended the contour argument previously available for $\alpha=2$ to exponents $\alpha\in (3-\frac{\ln 3}{\ln 2}, 2]$, with the additional restriction that the nearest-neighbor interaction is strong, i.e.,  ${J(1)\gg 1}$; this restriction was removed for a subclass of interactions in \cite{Bissacot_Endo_Enter_Kimura_18}. Recently, in \cite{Affonso_Bissacot_Corsini_Welsch_2024}, the Fr{\"o}hlich-Spencer argument was extended to the entire region $1<\alpha<2$ for any $J(1)>0$. Further results were obtained using contour arguments, such as the decay of correlations, cluster expansions, and phase transition with random interactions; some references with these results are \cite{Cassandro.Merola.Picco.17, Cassandro.Merola.Picco.Rozikov.14, Imbrie.82, Imbrie.Newman.88, Johansson.91}. 

In the multidimensional setting ($d\geq 2$), Ginibre, Grossmann, and Ruelle \cite{Ginibre.Grossmann.Ruelle.66} proved the phase transition for $\alpha > d+1$ using an enhanced version of Peierls' argument and the usual contours. Park used a different notion of contour for long-range systems in \cite{Park.88.I, Park.88.II}, extending the Pirogov-Sinai theory available for short-range interactions assuming $\alpha > 3d+1$, and he can also consider Potts models and models without symmetry with his methods. Some results in the literature suggest that truly long-range effects appear only when $d < \alpha \leq d+1$, see \cite{Biskup_Chayes_Kivelson_07}. Recently, inspired by the ideas from Fr{\"o}hlich and Spencer in \cite{FS81, Frohlich.Spencer.82}, Affonso, Bissacot, Endo, and Handa \cite{Affonso.2021} introduced a multiscale multidimensional contour and proved phase transition by a contour argument in the whole region $\alpha > d$. They can consider long-range Ising models with deterministic decaying fields, first introduced in the context of nearest-neighbor interactions in \cite{Bissacot_Cioletti_10}. For such models, the lack of analyticity of the free energy does not imply a phase transition since these models have the same free energy as the models with zero field, and it is expected that slowly decaying fields imply uniqueness, see \cite{Bissacot_Cass_Cio_Pres_15}. In this setting, a contour argument is useful for proofs of phase transitions as well as for uniqueness; some papers with models with deterministic decaying fields are \cite{Aoun_Ott_Velenik_23, Bissacot_Cass_Cio_Pres_15, Bissacot_Cioletti_10, Bissacot_Endo_Enter_Kimura_18, Cioletti_Vila_2016}. For a more detailed version of \cite{Affonso.2021} we recommend the Ph.D. thesis of one of the authors \cite{Affonso_23}.

The Random Field Ising model (RFIM) \cite{Imry.Ma.75} is the nearest-neighbor Ising model with an additional external field given by a family of i.i.d. Gaussian random variables $(h_x)_{x\in\Z^d}$ with mean 0 and variance 1. Formally, the Hamiltonian of the model is given by
\begin{equation*}
    H(\sigma) = - \sum_{\substack{x,y\in \Z^d \\|x-y|=1}}J\sigma_x\sigma_y  - \varepsilon\sum_{x\in\Z^d}h_x\sigma_x,
\end{equation*}
where $J>0$, $\varepsilon>0$, and $d \geq 1$. A detailed account of the history of the phase transition problem for this model and detailed proofs are presented by Bovier in \cite{Bovier.06}. Here, we give a brief overview.

During the 1980s, the question of the specific dimension where the phase transition for the RFIM should happen attracted much attention and was a topic of heated debate. Two convincing arguments divided the physics community. One of them, due to Imry and Ma \cite{Imry.Ma.75}, was a non-rigorous application of the Peierls' argument together with the use of the isoperimetric inequality. The key idea of Peierls' argument is to define a notion of contour and calculate the energy cost of \textit{erasing} each contour, i.e., the energy cost of flipping all spins inside the contour. When there is no external field, the energy necessary to flip the spins in a region $A\subset \Z^d$ is of the order of the boundary $|\partial A|$. When we add an external field, we get an extra cost depending on this field. Imry and Ma argued that this cost should be approximately $\sqrt{|A|}$. By the isoperimetric inequality, $\sqrt{|A|}\leq |\partial A|^{\frac{d}{2(d-1)}}$, which is strictly smaller than $|\partial A|$ for all regions only when $d\geq 3$, so this should be the region where phase transition occurs. The other argument, due to Parisi and Sourlas \cite{Parisi.Sourlas.79}, based on dimensional reduction \cite{Aharony_Imry_Ma_76} and supersymmetry arguments, predicted that the $d$-dimensional RFIM would behave like the $d-2$-dimensional nearest-neighbor Ising model, therefore presenting a phase transition only when $d\geq 4$. 

The question was settled by two celebrated papers showing that Imry and Ma's prediction was correct. First, in 1988, Bricmont and Kupiainen \cite{Bricmont.Kupiainen.88} showed that there is phase transition almost surely in $d\geq3$, for low temperatures and $\varepsilon$ small enough. Their proof uses a rigorous renormalization group analysis, and it is considered involved. Still, they suggested that the result works for any model with a suitable contour representation and a centered sub-Gaussian external field. Later on, in 1990, Aizenman and Wehr \cite{Aizenman.Wehr.90} proved uniqueness for $d\leq 2$. For detailed proofs of these results, we refer the reader to \cite{Bovier.06} (see also \cite{Berretti.85, Camia.18, Frohlich.Imbre.84,  Klein.Masooman.97} for more uniqueness results). 

Recently, Ding and Zhuang \cite{Ding2021}, provided a simpler proof of the phase transition, not using RGM. In addition, Ding, Liu, and Xia \cite{Ding_Liu_Xia_24} proved that if $\beta_c(d)$ is the critical inverse of the temperature of the Ising model with no field, for all $\beta>\beta_c(d)$ there exists a critical value $\varepsilon_0(d, \beta)$ such that the RFIM with $\varepsilon \leq \varepsilon_0$ presents phase transition. 

In the present paper, we are considering a long-range Ising model with a random field, whose Hamiltonian is given formally by
\begin{equation*}
    H(\sigma) = - \sum_{x,y\in \Z^d}J_{xy}\sigma_x\sigma_y - \varepsilon\sum_{x\in\Z^d}h_x\sigma_x,
\end{equation*}
where $J_{xy}=J|x-y|^{-\alpha}$, $J, \varepsilon>0$, $\alpha > d$, $d\geq 3$, and $(h_x)_{x\in\Z^d}$ being a family of i.i.d. Gaussian random variable with mean 0 and variance 1. The only rigorous result on phase transition in the long-range setting is for the one-dimensional long-range Ising model with a random field, by Cassandro, Orlandi, and Picco \cite{Cassandro.Picco.09}. They used the contours of \cite{Cassandro.05} to show the phase transition for the model when $\alpha\in (3-\frac{\ln 3}{\ln 2}, \frac{3}{2})$, under the assumption $J(1) \gg 1$. We stress that, as remarked by Aizenman, Greenblatt, and Lebowitz \cite{Aizenman_Greenblatt_Lebowitz_2012}, although their argument does not work for the whole region of the exponent $\alpha$, the phase transition holds for values close to the critical value $\alpha=3/2$, since by the Aizenman-Wehr theorem we know that there is uniqueness for $\alpha\geq 3/2$.
\begin{remark}
 After the publication of this paper on Arxiv, in \cite{Ding_Huang_Maia_25}, one of the authors, together with Ding and Huang complemented our results, showing the existence of phase transition for low-dimensions $d=1,2$ throughout the entire region $d<\alpha<3d/2$, and also for the two-dimensional critical value $\alpha = 3$. Their argument in $d=2$ uses our contours and follows our arguments somewhat closely for $2<\alpha<3$. Now we have the complete phase diagram for the RFIM. Moreover, using the contours introduced in the present paper, the phase transition for the Potts model with decaying and random fields was proved in \cite{Affonso_Bissacot_Faria_Welsch_25}, as well as the convergence of the cluster expansion and the polynomial decay of the truncated correlation functions in low temperatures in \cite{Affonso_Bissacot_Maia_Welsch_25}.

\end{remark}

The argument from Ding and Zhuang in \cite{Ding2021}, for $d\geq3$, involves controlling the probability of a bad event, which is related to controlling the quantity $$\sup_{\substack{0\in A\subset\Z^d \\ A \text{ connected }}}\frac{\sum_{x\in A}h_x}{|\partial A|},$$ known as the greedy animal lattice normalized by the boundary. The greedy animal lattice normalized by the size, instead of the boundary, was extensively studied for general distributions of $(h_x)_{x\in\Z^d}$, see \cite{Cox_Gandolfi_Griffin_Kesten_93, Gandolfi_Kesten_94, Hammond_06, Martin_02}. When we normalize by the boundary, an argument by Fisher, Fr\"{o}hlich and Spencer \cite{FFS84} shows that the expected value of the greedy animal lattice is finite. In dimension $d=2$, the expected value is not finite, see \cite{Ding_Wirth_23}. The supremum is taken over connected regions containing the origin since the interiors of the usual Peierls contours are of this form.

For the long-range model, the interior of the contours is not necessarily connected. In fact, long-range contours may have considerably large diameters with respect to their size, so their interiors can be very sparse. Our definition of the contours is strongly inspired by the $(M,a,r)$-partition in \cite{Affonso.2021}. They are constructed using a multiscale procedure that ensures that the contours have no cluster with small density. With them, we generalize the arguments by Fisher-Fr\"{o}hlich-Spencer \cite{FFS84} and prove that the expected value of the greedy animal lattice is finite, even considering regions not necessarily connected. Then, we prove the phase transition for $d\geq 3$. Our main result can be stated as
\begin{theorem*}
Given $d\geq 3$, $\alpha>d$, there exists $\beta_c\coloneqq\beta(d, \alpha)$ and $\varepsilon_c\coloneqq\varepsilon(d, \alpha)$ such that, for $\beta> \beta_c$ and $\varepsilon\leq \varepsilon_c$, the extremal Gibbs measures $\mu_{\beta, \varepsilon}^+$ and $\mu_{\beta, \varepsilon}^-$ are distinct, that is, $\mu_{\beta, \varepsilon}^+ \neq \mu_{\beta, \varepsilon}^-$ $\mathbb{P}$-almost surely. Therefore, the long-range random field Ising model presents a phase transition.
\end{theorem*}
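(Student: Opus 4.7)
The plan is to transport the Ding--Zhuang strategy to the long-range setting by replacing standard Peierls contours with the multiscale $(M,a,r)$-contours of \cite{Affonso.2021}. First I would set up the contour decomposition: for any finite volume $\Lambda$ with $-$ boundary condition and any configuration $\sigma$, associate a contour family $\gamma(\sigma)$ via the $(M,a,r)$-partition, so that each contour $\Gamma$ carries a deterministic Peierls-type energy bound $H(\sigma) - H(\sigma^{\Gamma}) \geq c|\Gamma|$ upon erasure, where $c = c(d,\alpha,J) > 0$ and $\sigma^{\Gamma}$ is the flip of $\sigma$ on the interior $V(\Gamma)$. Together with the entropy bound $e^{Kn}$ for the number of multiscale contours of size $n$ surrounding the origin, the Peierls argument in the presence of the random field reduces the problem to controlling
\begin{equation*}
\mu^-_{\beta,\varepsilon,\Lambda}(\sigma_0 = +1) \leq \sum_{\Gamma \ni 0} \exp\Bigl(-\beta c|\Gamma| + 2\beta\varepsilon\, S(\Gamma)\Bigr), \qquad S(\Gamma) := \sum_{x \in V(\Gamma)} h_x.
\end{equation*}

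The heart of the argument is to bound the greedy lattice animal $M := \sup_{\Gamma \ni 0} S(\Gamma)/|\Gamma|$, establishing a sub-Gaussian tail $\mathbb{P}(M > t) \leq e^{-c_0 t^2}$, which in particular gives $\mathbb{E}[M] < \infty$. For classical Peierls contours this is the Fisher--Fr\"ohlich--Spencer estimate \cite{FFS84}, obtained by dyadic coarse-graining of a \emph{connected} interior together with the isoperimetric inequality. The new difficulty is that $V(\Gamma)$ may be genuinely disconnected and sparse, since a long-range contour can have large diameter relative to its size. The crucial compensating feature of the $(M,a,r)$-construction is a quantitative density property: every multiscale cluster at scale $n$ occupies a non-negligible fraction of its enclosing cube. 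I would therefore (i) coarse-grain $V(\Gamma)$ against a dyadic tiling of $\Z^d$ scale by scale, (ii) apply Gaussian concentration on each coarse-grained layer to bound $S(\Gamma)$ by an entropy sum over coarse-grained representatives, and (iii) replace the isoperimetric counting step of FFS by the contour entropy bound $e^{Kn}$. Telescoping these scale-by-scale estimates should yield the desired sub-Gaussian tail for $M$.

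Finally, choose $\varepsilon$ small enough that the event $\{2\varepsilon M \leq c/2\}$ has uniformly positive probability (a direct consequence of the tail bound on $M$), and take $\beta$ large enough to absorb the entropy factor $e^{Kn}$ into the Peierls constant. On this event the Peierls sum gives $\mu^-_{\beta,\varepsilon,\Lambda}(\sigma_0 = +1) < 1/2$ uniformly in $\Lambda$, and by translation invariance together with ergodicity of the disorder $\mathbb{P}$ under lattice shifts, this upgrades to $\mu^-_{\beta,\varepsilon} \neq \mu^+_{\beta,\varepsilon}$ $\mathbb{P}$-almost surely. The main obstacle is clearly the middle step: extending FFS to non-connected interiors forces one to exploit the scale-by-scale density property of $(M,a,r)$-clusters as a substitute for connectedness, and to match entropy against Gaussian concentration at every scale simultaneously so that the summation still produces a genuinely sub-Gaussian tail rather than a slower polynomial or stretched-exponential decay.
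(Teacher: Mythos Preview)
Your middle step---extending the Fisher--Fr\"ohlich--Spencer coarse-graining to the disconnected interiors of multiscale contours via the density property and the entropy bound $e^{Kn}$---is indeed the technical heart of the paper, and your description of it is correct in spirit. The gap is earlier, in the displayed Peierls inequality.

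The bound
\[
\mu^-_{\beta,\varepsilon,\Lambda}(\sigma_0=+1)\;\leq\;\sum_{\Gamma\ni 0}\exp\Bigl(-\beta c|\Gamma|+2\beta\varepsilon\,S(\Gamma)\Bigr),\qquad S(\Gamma)=\sum_{x\in V(\Gamma)}h_x,
\]
does not follow from the erasure map. When you erase $\Gamma$ by flipping on its interior, the field contribution to $H(\sigma)-H(\sigma^\Gamma)$ is $2\varepsilon\sum_{x\in\text{flipped}}h_x\sigma_x$, not $2\varepsilon\sum h_x$. Because nested contours live inside $\I_-(\Gamma)$, the sign $\sigma_x$ is not constant there; taking the supremum over $\sigma\in\Omega(\Gamma)$ forces $\sum_{x\in\I_-(\Gamma)}|h_x|$, whose mean is of order $|\I_-(\Gamma)|\sim|\Gamma|^{d/(d-1)}\gg|\Gamma|$, and the Peierls sum diverges. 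This obstacle is exactly what blocked a direct Peierls proof before Ding--Zhuang (and is why Bricmont--Kupiainen needed renormalization).

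The actual Ding--Zhuang idea, which you invoke by name but do not use, is to work on the \emph{joint} measure of $(\sigma,h)$ and to flip the disorder $h\mapsto\tau_{\I_-(\Gamma)}(h)$ simultaneously with the spins. Then $h_x\sigma_x$ is invariant on $\I_-(\Gamma)$ and the interior field term cancels; what remains is the ratio of partition functions, encoded in $\Delta_{\I_-(\Gamma)}(h)\coloneqq-\beta^{-1}\log\bigl(Z^+(h)/Z^+(\tau_{\I_-(\Gamma)}h)\bigr)$, plus a harmless residual field sum over $\Sp(\Gamma)$ (controlled by Gaussian integration since $|\Sp(\Gamma)|=|\Gamma|$). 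The quantity $\Delta_A(h)$ is centred and satisfies the same sub-Gaussian increment bound $\mathbb{P}(|\Delta_A-\Delta_{A'}|>\lambda)\leq 2e^{-\lambda^2/(8\varepsilon^2|A\Delta A'|)}$ as the linear field sum (Lemma~\ref{Lemma: Concentration.for.Delta.General}), so your FFS/Dudley chaining applies to it verbatim. The bad event becomes $\sup_\Gamma\Delta_{\I_-(\Gamma)}(h)/|\Gamma|>c_2/4$, and the rest of your outline goes through.
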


\textit{Ideas of the proof:}  We first introduce a suitable notion of contour, for which we can control both the energy cost of erasing a contour (Proposition \ref{Prop: Cost_erasing_contour}) and the number of contours of a fixed size that surround the origin (Corollary \ref{Cor: Bound_on_C_0_n}). Our contours, as in the Pirogov-Sinai theory, are composed of a support that represents the incorrect points of a configuration and the plus and minus interior, which are the regions inside the contours. We denote $\mathcal{C}_0$ the set of all contours surrounding the origin and $\mathcal{C}_0(n)$ the set of contours in $\mathcal{C}_0$ with $n$ points in the support. We also denote $\I_-(n)$, the set of all subsets of $\Z^d$ that are the minus interior of a contour in $\mathcal{C}_0(n)$. 

By the Ding-Zhuang method, we can show that the phase transition follows from controlling the probability of the bad event
$$\mathcal{E}^c\coloneqq \left\{\sup_{\substack{\gamma\in\mathcal{C}_0}} \frac{\Delta_{\I_-(\gamma)}(h)}{c_2|\gamma|} > \frac{1}{4}\right\},$$
where $|\gamma|$ is the size of the support of a contour $\gamma$ and $(\Delta_A)_{A\Subset \Z^d}$ is a family of functions that, by Lemma \ref{Lemma: Concentration.for.Delta.General}, have the same tail of $\sum_{x\in A}h_x$, and the distribution of $\Delta_A(h) - \Delta_{A^\prime}(h)$ is the same as $\Delta_{A\Delta A^\prime}(h)$, for all $A, A^\prime\in\Z^d$ finite, see \cite{Ding2021}. 

In the nearest-neighbor case, the contours are $d-1$ connected objects, so all the interiors $\I_-(\gamma)$ are connected, and the two properties in Lemma \ref{Lemma: Concentration.for.Delta.General} together with the coarse-graining procedure introduced by Fisher, Fr\"ohlich, and Spencer in \cite{FFS84} is enough to control $\mathbb{P}(\mathcal{E}^c)$. In the long-range setting, the arguments are more involved. Given a family of scales $(2^{r\ell})_{\ell\geq 0}$, with $r$ being a suitable constant, we can partition $\Z^d$ into disjoint fitting cubes with sides $2^{r\ell}$, see Figure \ref{Fig: Cubes}. Each such cube is called an $r\ell$-cube, and all cubes throughout our analysis will be of this form unless stated otherwise. The strategy of the coarse-graining argument is to, at each scale, approximate each interior $\I_-(\gamma)$ by a simpler region $B_\ell(\gamma)$, formed by the union of disjoint cubes with side length $2^{r\ell}$, see Figure \ref{Fig: Figura7}. The argument follows once you have two estimations: on the error of this approximation and on the size of the set $B_\ell (\mathcal{C}_0(n)) \coloneqq \{B\subset\Z^d: B=B_\ell(\gamma), \text{ for some }\gamma \in\mathcal{C}_0(n)\}$ containing all regions that are approximations of contours in $\mathcal{C}_0(n)$.
\vspace{-2cm}
\begin{figure}[ht]
     \centering
     \input{Figures/Figura.6}
        \caption{The figure on the left represents the minus interior of a contour $\gamma$, while the central figure represents its approximation. The picture on the right depicts the error of the approximation, consisting of both the cross-hatched regions not covered by the gray area and the gray areas not intersected by the cross-hatched region.}
    \label{Fig: Figura7}
\end{figure}

In Corollary \ref{Cor: Bound_diam_B_ell}, we show that $|B_\ell(\gamma)\Delta \I_-(\gamma)|\leq c2^{r\ell}|\gamma|$, so the error in the approximation is not too large. This bound follows \cite{FFS84} closely since we approximate the interiors in the same way. Then, we need to estimate $|B_\ell (\mathcal{C}_0(n))|$, which is done by a fairly distinct argument from the short-range case. The difficulty of the proof comes from the fact that our contours may be disconnected. As the regions $B_\ell(\gamma)$ are the union of disjoint cubes, they are, up to a constant, determined by $\fint \mathfrak{C}_{\ell}(\gamma)$, the collection of cubes needed to cover $B_{\ell}(\gamma)$ that share a face with a cube that does not intersect $B_{\ell}(\gamma)$, see Figure \ref{Fig: Figura8}. 

\begin{figure}[h]
    \centering
    \input{Figures/Figura.7}
    \caption{The region $B_\ell(\gamma)$ is the approximation of the interior in Figure \ref{Fig: Figura7}, and $\fint \mathfrak{C}_{r\ell}(\gamma)$ is the collection containing all the doted cubes.}
    \label{Fig: Figura8}
\end{figure}

Following the argument of Fisher-Fr\"ohlich-Spencer we show that $|\fint \mathfrak{C}_{r\ell}(\gamma)|\leq M_{|\gamma|,\ell}\coloneqq c {|\gamma|}{2^{-r\ell(d-1)}}$ (see Proposition \ref{Proposition1}), so we can bound $|B_{\ell}(\mathcal{C}_0(n))|$ by counting all possible choices of at most $M_{n,\ell}$ non-intersecting cubes with side length $2^{r\ell}$ in $\Z^d$, with a suitable restriction. When the contours are connected, this restriction is that all cubes must be close to a surface with size $n$, and the proof follows once you can use that the minimal path connecting all the cubes has length at most $cn$. This is not true for our contours, so we need a different strategy. 

Let $\C_{rL}(\gamma)$ be the smallest collection of cubes, in the $rL$ scale, needed to cover $\gamma$. The property we will use is that, for all $L\geq \ell$, every cube in $\fint \mathfrak{C}_\ell(\gamma)$ is covered or is next to a cube in  $\C_{rL}(\gamma)$, see Figure \ref{Fig: Figura9}. As every cube with side length $2^{rL}$ contains $2^{rd(L-\ell)}$ cubes with side length $2^{r\ell}$, any fixed collection $\C_{rL}(\gamma)$ covers $2^{rd(L-\ell)}|\C_{rL}(\gamma)|$ cubes in the $r\ell$ scale.

\begin{figure}[ht]
    \centering
    \input{Figures/Figura.8}
    \caption{The contour $\gamma$ is the one the originates the interior $\I_-(\gamma)$ in Figure \ref{Fig: Figura7}. The dotted cubes are the cubes in $\fint \mathfrak{C}_\ell(\gamma)$ and the larger cubes are the ones needed to cover the contour $\gamma$.}
    \label{Fig: Figura9}
\end{figure}

Roughly, we can bound $|B_{\ell}(\mathcal{C}_0(n))|$ by counting all the possible choices of at most $M_{n,\ell}$ cubes in the $2^{r\ell}$ scale that are covered by a collection in $\C_{rL}(\mathcal{C}_0(n))\coloneqq \{\C_{rL} : \C_{rL} = \C_{rL}(\gamma) \text{ for some }\gamma\in\mathcal{C}_0(n)\}$, that is, we can bound as follows:

\begin{equation*}
    |B_{\ell}(\mathcal{C}_0(n))|\leq \sum_{\C_{rL}\in \C_{rL}(\mathcal{C}_0(n))} \sum_{M=1}^{M_{n,\ell}}\binom{2^{rd(L-\ell)}|\C_{rL}|}{M}.
\end{equation*}

The construction of our contours allow us to upper bound $|\C_{rL}(\gamma)|$  and $|\C_{rL}(\mathcal{C}_0(n))|$, see Proposition \ref{Prop. Bound.on.C_rl(gamma)} and Proposition \ref{Prop: Bound_on_rl_coverings}, and we are able to get the same bound for $|B_{\ell}(\mathcal{C}_0(n))|$ as in \cite{FFS84} by making an appropriate choice of the large scale $L = L(\ell)$ depending on the smaller one. From this, the control on the probability of the bad event follows as an application of Dudley's entropy bound.

This paper is divided as follows. In Section 2, we define the model and the contours, and suitable generalizations to the constructions in \cite{Affonso.2021} are introduced. In Section 3, we define the bad event of the external field and prove that it occurs with a small probability. In this section, the generalizations of the coarse-graining procedure are presented. Finally, in Section 4, we present the proof of the phase transition.

This paper is contained in the Ph.D. thesis of J. Maia \cite{Maia_24}.

\section{Preliminaries}  

    \subsection{The model}
    The set of configurations of the long-range Ising model is, as usual, $\Omega \coloneqq \{-1,1\}^{\Z^d}$. However, each spin interacts with all others, not only its nearest neighbors, so the interaction $\{J_{xy}\}_{x,y\in\Z^d}$ is defined as

\begin{equation}\label{Long-Range Interaction}
    J_{xy} = \begin{cases}
                   \frac{J}{|x-y|^\alpha} &\text{ if }x\neq y,\\
                   0                &\text{otherwise,} 
              \end{cases}
\end{equation}
where $J >0$, $\alpha>d$ and the distance $|x-y|$ is given by the $\ell_1$-norm. We write $\Lambda\Subset \Z^d$ to denote a finite subset of $\Z^d$
. Fixed such $\Lambda$, the \textit{local configurations} is given by $\Omega_\Lambda\coloneqq \{-1,1\}^\Lambda$. Moreover, given ${\eta\in\Omega}$, the set of local configurations with $\eta$ boundary condition is ${\Omega_\Lambda^\eta\coloneqq \{\sigma\in\Omega : \sigma_x=\eta_x, \text{ }\forall x\in\Lambda^c\}}$. The \textit{local Hamiltonian of the random field long-range Ising model} in $\Lambda\Subset\Z^d$ with $\eta$-boundary condition is $H_{\Lambda; \varepsilon h}^{\eta}:  \Omega_\Lambda^\eta \to \mathbb{R}$, given by 

\begin{equation}
    H_{\Lambda; \varepsilon h}^{\eta}(\sigma) \coloneqq -\sum_{x,y\in\Lambda} J_{xy}\sigma_x\sigma_y - \sum_{x\in \Lambda, y\in\Lambda^c} J_{xy}\sigma_x\eta_y - \sum_{x\in\Lambda} \varepsilon h_x\sigma_x,
\end{equation}
where the external field is a family $\{h_x\}_{x\in\Z^d}$ of i.i.d. random variables in $(\widetilde{\Omega}, \mathcal{A}, \mathbb{P})$, and every $h_x$ has a standard normal distribution\footnote{ Our results also hold for more general distributions of $h_x$, see Remarks \ref{Rmk: Bernoulli_external_field} and \ref{Rmk: More.general.h_x}. }. The parameter $\varepsilon >0$ controls the variance of the external field. Given $\Lambda\Subset\Z^d$, consider $\mathscr{F}_\Lambda$ the $\sigma$-algebra generated by the cylinders sets supported in $\Lambda$ and $\mathscr{F}$ the $\sigma$-algebra generated by finite union of cylinders. One of the main objects of study in classical statistical mechanics are the \textit{finite volume Gibbs measures}, which are probability measures in $(\Omega, \mathscr{F})$, given by 
    \begin{equation}
        \mu_{\Lambda;\beta, \varepsilon h}^\eta(\sigma) \coloneqq \mathbbm{1}_{\Omega_\Lambda^\eta}(\sigma)\frac{e^{-\beta H_{\Lambda, \varepsilon h}^{\eta}(\sigma)}}{Z_{\Lambda; \beta, \varepsilon}^{\eta}(h)},
    \end{equation}
where $\beta>0$ is the inverse temperature and $Z_{\Lambda; \beta, \varepsilon}^{\eta}$ is the \textit{partition function}, defined as 

\begin{equation}
    Z_{\Lambda; \beta, \varepsilon}^{\eta}(h)\coloneqq \sum_{\sigma\in\Omega_\Lambda^\eta} e^{-\beta H_{\Lambda, \varepsilon h}^{\eta}(\sigma)}.
\end{equation}
Since the external field is random, the Gibbs measures are random variables. To make the dependence of $\mu_{\Lambda;\beta, \varepsilon h}^\eta$ on $\widetilde{\Omega}$ explicit, we write $\mu_{\Lambda;\beta, \varepsilon h}^\eta[\omega]$, with $\omega$ being a general element of $\widetilde{\Omega}$. Two particularly important boundary conditions are given by the configurations $\eta_{+} \equiv +1$($\eta_{x}= +1, \forall x \in \mathbb{Z}^{d})$ and analogously $\eta_{-} \equiv -1$, and configurations are called $+$ and $-$ boundary conditions, respectively. For these boundary conditions, we can $\mathbb{P}$-almost surely define the infinite volume measures by taking the weak*-limit
\begin{equation}
    \mu_{\beta,\varepsilon h}^{\pm}[\omega] \coloneqq \lim_{n\to\infty} \mu_{\Lambda_n;\beta, \varepsilon h}^{\pm}[\omega],
\end{equation}
where $(\Lambda_n)_{n\in\mathbb{N}}$ is any sequence invading $\Z^d$, that is, for any subset $\Lambda\Subset\mathbb{Z}^d$, there exists $N=N(\Lambda)>0$ such that $\Lambda\subset\Lambda_n$ for every $n>N$. To have more than one Gibbs measure, it is enough to show that $\mu_{\beta,\varepsilon h}^{+}\neq  \mu_{\beta,\varepsilon h}^{-}$, $\mathbb{P}$-almost surely, see \cite[Theorem 7.2.2]{Bovier.06}.

    \subsection{The contours} 
    
    Contours were first defined in the seminal paper of R. Peierls \cite{Peierls.1936}, where he introduced these geometrical objects to prove phase transition in the Ising model for $d\geq 2$. This technique is known nowadays as the \textit{Peierls' argument}. One of the most successful extensions of this argument was made by S. Pirogov and Y. Sinai \cite{Pirogov.Sinai.75}, and extended by Zahradn{\'i}k \cite{Zahradnik.84}. This is known as the \textit{Pirogov-Sinai} theory, which can be used in models with short-range interactions and finite state spaces, even without symmetries. The Pirogov-Sinai Theory was one of the achievements cited when Yakov Sinai received the Abel Prize \cite{Sinai_Abel_Prize}.

For long-range models, using the usual Peierls' contours with plaquettes of dimension $d-1$, Ginibre, Grossman, and Ruelle, in \cite{Ginibre.Grossmann.Ruelle.66}, proved phase transition for $\alpha > d+1$.  Park, in \cite{Park.88.I,Park.88.II}, considered systems with two-body interactions satisfying $|J_{xy}|\leq |x-y|^{-\alpha}$ for $\alpha > 3d+1$, and extended the Pirogov-Sinai theory for this class of models.  Fr{\"o}hlich and Spencer, in \cite{Frohlich.Spencer.82}, proposed a different contour definition for the one-dimensional long-range Ising models. Roughly speaking, collections of intervals are the new contours but arranged in a particular way. When they are sufficiently far apart, the collections of intervals are deemed as different contours, while collections of intervals close enough are considered a single contour. Note that this definition drastically contrasts with the notion of contour in the multidimensional setting, since now they are not necessarily connected objects of the lattice. This fact implies that the control of the number of contours for a fixed size could be much more challenging. 

Inspired by such contours, Affonso, Bissacot, Endo, and Handa proposed a definition of contour extending the contours of Fr{\"o}hlich and Spencer to any dimension $d\geq 2$, see \cite{Affonso.2021}. With these contours, they were able to use Peierls' argument to show phase transition in the whole region $\alpha>d$, with $d\geq 2$. We modify the contour definition of \cite{Affonso.2021} using a similar partition through multiscale methods. 

We choose to use the new definition of contours for two main reasons: the definition is simpler, and it produces less sparse contours. This latter property can be expressed in several different ways. For example, one can show that a contour with size $n$ has diameter at most $n^C$, for a suitable constant $C>0$ (see Proposition 4.2.10 in \cite{Maia_24}). With the previous definition, one could only guarantee an exponential upper bound in $n$, which is not enough to replicate the results of \textcolor{red}{\cite{Affonso_Bissacot_Maia_Welsch_25}}, see e.g. \cite[Proposition 3.6]{Affonso_Bissacot_Maia_Welsch_25}. Relevant to our application, we can improve the upper bound on the number of cubes needed to cover a contour, from a polynomial bound to an exponential bound, see Propositions \ref{Prop. Bound.on.C_rl(gamma)_Lucas} and \ref{Prop. Bound.on.C_rl(gamma)}. With some modification of our arguments, we can replicate our results even adopting the contours of \cite{Affonso.2021} (in Remark \ref{Rmk: Adaptation_for_Mar_partition} we describe the key adaptations that must be made), but the same is not true for the two-dimensional arguments of \cite{Ding_Huang_Maia_25}. In this section, we describe our contours.


\begin{definition}\label{def1}
	Given $\sigma \in \Omega$, a point $x \in \Z^d$ is called \emph{+ (or - resp.)} \emph{correct} if $\sigma_y = +1$, (or $-1$, resp.) for all points $y$ such that $|x-y|\leq 1$. The \emph{boundary} of $\sigma$, denoted by $\partial \sigma$, is the set of all points in $\Z^d$ that are neither $+$ nor $-$ correct.
\end{definition}

The boundary of a configuration is not finite in general, it can even be the whole lattice $\Z^d$. To avoid this problem, we will restrict our attention to configurations with finite boundaries. Such configurations, by definition of incorrectness, satisfy $\sigma \in \Omega^+_\Lambda$ or $\sigma \in \Omega^-_\Lambda$ for some $\Lambda\Subset \Z^d$. We also defined, for each $\Lambda\Subset \Z^d$, $\Lambda^{(0)}$ as the unique unbounded connected component of $\Lambda^c$. The \textit{volume} of $\Lambda$ is defined as $V(\Lambda)\coloneqq \Z^d\setminus \Lambda^{(0)}$. The \textit{interior} of $\Lambda$ is $\I(\Lambda)\coloneqq \Lambda^c\setminus \Lambda^{(0)}$.

The usual definition of contours in Pirogov-Sinai theory considers only the connected subsets of the boundary $\partial \sigma$. We have to proceed differently for long-range models since every point in the lattice interacts with all the others. The definition below is strongly inspired in \cite{Affonso.2021} and allows contours to be disconnected (as in one-dimensional long-range models); in return, we can control the interaction between two contours and also the probabilities of bad events that we will introduce in Section 3.
    
    \begin{definition}\label{Def: delta-partiton}
    Let $M>0$ and $a,\delta >d$. For each $A\Subset\Z^d$, a set $\Gamma(A) \coloneqq \{\overline{\gamma} : \overline{\gamma} \subset A\}$ is called a $(M,a,\delta)$-\emph{partition} when the following two conditions are satisfied.
	\begin{enumerate}[label=\textbf{(\Alph*)}, series=l_after] 
		\item They form a partition of $A$, i.e.,  $\bigcup_{\overline{\gamma} \in \Gamma(A)}\overline{\gamma}=A$ and $\overline{\gamma} \cap \overline{\gamma}' = \emptyset$ for distinct elements of $\Gamma(A)$.  
		
		\item For all $\overline{\gamma}, \overline{\gamma}^\prime \in \Gamma(A)$ 
			\be\label{B_distance_2}
			\d(\overline{\gamma},\overline{\gamma}') > M\min\left \{|V(\overline{\gamma})|,|V(\overline{\gamma}')|\right\}^\frac{a}{\delta}.
			\ee
	\end{enumerate}
\end{definition}

\begin{remark}\label{Rmk: choice_of_a_and_delta}
    Our construction and the control of the energy works for any $d<\delta<\frac{a(\alpha - d)}{2}$ and  $a>\frac{2(d+1)}{(\alpha-d) \wedge 1}$. To simplify the calculations, we will take $\delta = d+1$ and  $a \coloneqq a(\alpha,d) = \frac{3(d+1)}{(\alpha-d) \wedge 1}$ from now on, so $\frac{a}{\delta} = \frac{3}{(\alpha-d) \wedge 1}$ and a $(M,a,\delta)$-partition will be called $(M,a)$-partition.
\end{remark}

In Figure \ref{Fig: Figura10} we give an example of a region $A\Subset \Z^d$ that is only one contour using the $(M,a,r)$-partition of \cite{Affonso.2021}, but can be partitioned into multiple components to form a $(M,a)$-partition. 

\begin{figure}[H]
    \centering
    \tikzset{every picture/.style={line width=0.75pt}} 

\begin{tikzpicture}[x=0.75pt,y=0.75pt,yscale=-1,xscale=1]

\draw  [fill={rgb, 255:red, 155; green, 155; blue, 155 }  ,fill opacity=0.63 ] (386.36,19.67) -- (459.53,19.67) -- (459.53,92.84) -- (386.36,92.84) -- cycle ;
\draw  [fill={rgb, 255:red, 255; green, 255; blue, 255 }  ,fill opacity=1 ] (392.34,25.65) -- (453.55,25.65) -- (453.55,86.86) -- (392.34,86.86) -- cycle ;
\draw  [dash pattern={on 4.5pt off 4.5pt}] (319.59,59.27) -- (386.01,59.27) -- (386.01,59.22) -- (319.59,59.22) -- cycle ;
\draw  [fill={rgb, 255:red, 155; green, 155; blue, 155 }  ,fill opacity=0.63 ] (526.82,19.67) -- (600,19.67) -- (600,92.84) -- (526.82,92.84) -- cycle ;
\draw  [fill={rgb, 255:red, 255; green, 255; blue, 255 }  ,fill opacity=1 ] (532.8,25.65) -- (594.02,25.65) -- (594.02,86.86) -- (532.8,86.86) -- cycle ;
\draw  [dash pattern={on 4.5pt off 4.5pt}] (460.05,59.27) -- (526.48,59.27) -- (526.48,59.22) -- (460.05,59.22) -- cycle ;
\draw  [fill={rgb, 255:red, 155; green, 155; blue, 155 }  ,fill opacity=0.63 ] (105,19.67) -- (178.18,19.67) -- (178.18,92.84) -- (105,92.84) -- cycle ;
\draw  [fill={rgb, 255:red, 255; green, 255; blue, 255 }  ,fill opacity=1 ] (110.98,25.65) -- (172.2,25.65) -- (172.2,86.86) -- (110.98,86.86) -- cycle ;
\draw  [fill={rgb, 255:red, 155; green, 155; blue, 155 }  ,fill opacity=0.63 ] (245.47,19.67) -- (318.64,19.67) -- (318.64,92.84) -- (245.47,92.84) -- cycle ;
\draw  [fill={rgb, 255:red, 255; green, 255; blue, 255 }  ,fill opacity=1 ] (251.45,25.65) -- (312.66,25.65) -- (312.66,86.86) -- (251.45,86.86) -- cycle ;
\draw  [dash pattern={on 4.5pt off 4.5pt}] (178.7,59.27) -- (245.13,59.27) -- (245.13,59.22) -- (178.7,59.22) -- cycle ;
\draw  [fill={rgb, 255:red, 155; green, 155; blue, 155 }  ,fill opacity=0.63 ] (190.49,176.67) -- (236.39,176.67) -- (236.39,222.57) -- (190.49,222.57) -- cycle ;
\draw  [fill={rgb, 255:red, 255; green, 255; blue, 255 }  ,fill opacity=1 ] (194.24,180.42) -- (232.64,180.42) -- (232.64,218.82) -- (194.24,218.82) -- cycle ;
\draw  [fill={rgb, 255:red, 155; green, 155; blue, 155 }  ,fill opacity=0.63 ] (278.6,176.67) -- (324.5,176.67) -- (324.5,222.57) -- (278.6,222.57) -- cycle ;
\draw  [fill={rgb, 255:red, 255; green, 255; blue, 255 }  ,fill opacity=1 ] (282.35,180.42) -- (320.75,180.42) -- (320.75,218.82) -- (282.35,218.82) -- cycle ;
\draw  [fill={rgb, 255:red, 155; green, 155; blue, 155 }  ,fill opacity=0.63 ] (14,176.67) -- (59.9,176.67) -- (59.9,222.57) -- (14,222.57) -- cycle ;
\draw  [fill={rgb, 255:red, 255; green, 255; blue, 255 }  ,fill opacity=1 ] (17.75,180.42) -- (56.15,180.42) -- (56.15,218.82) -- (17.75,218.82) -- cycle ;
\draw  [fill={rgb, 255:red, 155; green, 155; blue, 155 }  ,fill opacity=0.63 ] (102.11,176.67) -- (148.01,176.67) -- (148.01,222.57) -- (102.11,222.57) -- cycle ;
\draw  [fill={rgb, 255:red, 255; green, 255; blue, 255 }  ,fill opacity=1 ] (105.86,180.42) -- (144.26,180.42) -- (144.26,218.82) -- (105.86,218.82) -- cycle ;
\draw  [fill={rgb, 255:red, 155; green, 155; blue, 155 }  ,fill opacity=0.63 ] (549.49,176.77) -- (595.39,176.77) -- (595.39,222.67) -- (549.49,222.67) -- cycle ;
\draw  [fill={rgb, 255:red, 255; green, 255; blue, 255 }  ,fill opacity=1 ] (553.24,180.52) -- (591.64,180.52) -- (591.64,218.92) -- (553.24,218.92) -- cycle ;
\draw  [fill={rgb, 255:red, 155; green, 155; blue, 155 }  ,fill opacity=0.63 ] (637.6,176.77) -- (683.5,176.77) -- (683.5,222.67) -- (637.6,222.67) -- cycle ;
\draw  [fill={rgb, 255:red, 255; green, 255; blue, 255 }  ,fill opacity=1 ] (641.35,180.52) -- (679.75,180.52) -- (679.75,218.92) -- (641.35,218.92) -- cycle ;
\draw  [fill={rgb, 255:red, 155; green, 155; blue, 155 }  ,fill opacity=0.63 ] (373,176.77) -- (418.9,176.77) -- (418.9,222.67) -- (373,222.67) -- cycle ;
\draw  [fill={rgb, 255:red, 255; green, 255; blue, 255 }  ,fill opacity=1 ] (376.75,180.52) -- (415.15,180.52) -- (415.15,218.92) -- (376.75,218.92) -- cycle ;
\draw  [fill={rgb, 255:red, 155; green, 155; blue, 155 }  ,fill opacity=0.63 ] (461.11,176.77) -- (507.01,176.77) -- (507.01,222.67) -- (461.11,222.67) -- cycle ;
\draw  [fill={rgb, 255:red, 255; green, 255; blue, 255 }  ,fill opacity=1 ] (464.86,180.52) -- (503.26,180.52) -- (503.26,218.92) -- (464.86,218.92) -- cycle ;
\draw   (349,162.84) -- (350,162.84) -- (350,241) -- (349,241) -- cycle ;
\draw    (200,105) -- (170.19,145.24) ;
\draw [shift={(169,146.84)}, rotate = 306.53] [color={rgb, 255:red, 0; green, 0; blue, 0 }  ][line width=0.75]    (10.93,-3.29) .. controls (6.95,-1.4) and (3.31,-0.3) .. (0,0) .. controls (3.31,0.3) and (6.95,1.4) .. (10.93,3.29)   ;
\draw    (501,101) -- (530.79,140.25) ;
\draw [shift={(532,141.84)}, rotate = 232.8] [color={rgb, 255:red, 0; green, 0; blue, 0 }  ][line width=0.75]    (10.93,-3.29) .. controls (6.95,-1.4) and (3.31,-0.3) .. (0,0) .. controls (3.31,0.3) and (6.95,1.4) .. (10.93,3.29)   ;

\draw (333.04,40.2) node [anchor=north west][inner sep=0.75pt]    {$M2^{ra\ell }$};
\draw (412.57,94.1) node [anchor=north west][inner sep=0.75pt]    {$2^{r\ell }$};
\draw (473.5,40.2) node [anchor=north west][inner sep=0.75pt]    {$M2^{ra\ell }$};
\draw (553.03,94.1) node [anchor=north west][inner sep=0.75pt]    {$2^{r\ell }$};
\draw (131.21,94.1) node [anchor=north west][inner sep=0.75pt]    {$2^{r\ell }$};
\draw (192.15,40.2) node [anchor=north west][inner sep=0.75pt]    {$M2^{ra\ell }$};
\draw (271.68,94.1) node [anchor=north west][inner sep=0.75pt]    {$2^{r\ell }$};
\draw (94,231.4) node [anchor=north west][inner sep=0.75pt]    {$( M,a) \ -\ \text{partition}$};
\draw (74,258.4) node [anchor=north west][inner sep=0.75pt]    {$\Gamma ( A) \ =\ \{A_{1} ,\ A_{2} ,\ A_{3} ,\ A_{4}\}$};
\draw (439,233.06) node [anchor=north west][inner sep=0.75pt]    {$( M,a,\ r) \ -\ \text{partition}$};
\draw (501,259.06) node [anchor=north west][inner sep=0.75pt]    {$\{A\}$};
\draw (27,152.4) node [anchor=north west][inner sep=0.75pt]    {$A_{1}$};
\draw (115,153.4) node [anchor=north west][inner sep=0.75pt]    {$A_{2}$};
\draw (206,154.4) node [anchor=north west][inner sep=0.75pt]    {$A_{3}$};
\draw (294,154.4) node [anchor=north west][inner sep=0.75pt]    {$A_{4}$};
\draw (345,80.4) node [anchor=north west][inner sep=0.75pt]    {$A$};
\draw (520,159.4) node [anchor=north west][inner sep=0.75pt]    {$A$};

\end{tikzpicture}
    \caption{We wish to partition the gray area $A$. Each cube has side $2^{r\ell}$ and the distance between each other is $M2^{ra\ell}$. With $r=2$, $2^{r}-1 = 3$ and therefore no partition into smaller parts is a $(M,a,r)$-partition. However, the partition into connected components $\{A_1, A_2, A_3, A_4\}$ is an $(M,a)$-partition, since $V(A_i)^{\frac{a}{\delta}} = 2^{ra\frac{d}{\delta}\ell}<2^{ra\ell}$ whenever $\delta>d$.}
    \label{Fig: Figura10}
\end{figure}

The existence of a $(M,a)$-partition for any $A\Subset\Z^d$ does not depend on the choice of $M,a>0$. However, to guarantee the existence of phase transition, we have to choose particular values for these parameters, see Remark \ref{Rmk: choice_of_a_and_delta}. Later on, in Proposition \ref{Prop: Cost_erasing_contour}, $M$ will be taken large enough. 

We write $\Gamma(\sigma) \coloneqq \Gamma(\partial\sigma)$ for a $(M,a)$-partition of $\partial\sigma$. In general, there is more than one $(M,a)$-partition for each region $A\in\Z^d$. Given two partitions $\Gamma$ and $\Gamma^\prime$ of a set $A$, we say that $\Gamma$ \emph{is finer than} $\Gamma'$, and denote $\Gamma\preceq\Gamma^{\prime}$, if for every $\overline{\gamma} \in \Gamma$ there is $\overline{\gamma}' \in \Gamma'$ with $\overline{\gamma} \subseteq \overline{\gamma}'$. Given any two $(M,a)$-partitions $\Gamma(A)$ and $\Gamma^\prime(A)$, 
    $\Gamma\cap\Gamma^\prime\coloneqq \{\overline{\gamma}\cap\overline{\gamma}^\prime : \overline{\gamma} \in \Gamma(A), \ \overline{\gamma}\in\Gamma^\prime(A), \ \overline{\gamma}\cap\overline{\gamma}^\prime\neq \emptyset\}$
 is a $(M,a)$-partition finer than both $\Gamma(A)$ and $\Gamma^\prime(A)$. As the number of partitions is finite, we can intersect all partitions to get a finest $(M,a)$-partition. 

 From now on, when taking a $(M,a)$-partition $\Gamma(A)$, we will always assume it is the finest. It is easy to see that the finest $(M,a)$-partition $\Gamma(A)$ satisfies the following property:

\begin{itemize}
    \item[\textbf{(A1)}] For any $\overline{\gamma},\overline{\gamma}^\prime\in \Gamma(A)$, $\overline{\gamma}'$ is contained in only one connected component of $(\overline{\gamma})^c$.
\end{itemize}

Property \textbf{(A1)} is essential to define labels as in \cite{Affonso.2021}. See Figure \ref{Fig. Exemple_A1} for an example of partition not satisfying \textbf{(A1)}.

 
\tikzset{
pattern size/.store in=\mcSize, 
pattern size = 5pt,
pattern thickness/.store in=\mcThickness, 
pattern thickness = 0.3pt,
pattern radius/.store in=\mcRadius, 
pattern radius = 1pt}
\makeatletter
\pgfutil@ifundefined{pgf@pattern@name@_2ysuecq82}{
\makeatletter
\pgfdeclarepatternformonly[\mcRadius,\mcThickness,\mcSize]{_2ysuecq82}
{\pgfpoint{-0.5*\mcSize}{-0.5*\mcSize}}
{\pgfpoint{0.5*\mcSize}{0.5*\mcSize}}
{\pgfpoint{\mcSize}{\mcSize}}
{
\pgfsetcolor{\tikz@pattern@color}
\pgfsetlinewidth{\mcThickness}
\pgfpathcircle\pgfpointorigin{\mcRadius}
\pgfusepath{stroke}
}}
\makeatother

 
\tikzset{
pattern size/.store in=\mcSize, 
pattern size = 5pt,
pattern thickness/.store in=\mcThickness, 
pattern thickness = 0.3pt,
pattern radius/.store in=\mcRadius, 
pattern radius = 1pt}
\makeatletter
\pgfutil@ifundefined{pgf@pattern@name@_p46h6u1xc}{
\makeatletter
\pgfdeclarepatternformonly[\mcRadius,\mcThickness,\mcSize]{_p46h6u1xc}
{\pgfpoint{-0.5*\mcSize}{-0.5*\mcSize}}
{\pgfpoint{0.5*\mcSize}{0.5*\mcSize}}
{\pgfpoint{\mcSize}{\mcSize}}
{
\pgfsetcolor{\tikz@pattern@color}
\pgfsetlinewidth{\mcThickness}
\pgfpathcircle\pgfpointorigin{\mcRadius}
\pgfusepath{stroke}
}}
\makeatother

\begin{figure}[H]
	\centering
	
    \tikzset{every picture/.style={line width=0.75pt}} 

\begin{tikzpicture}[x=0.75pt,y=0.75pt,yscale=-0.75,xscale=0.75]

\draw  [fill={rgb, 255:red, 234; green, 234; blue, 234 }  ,fill opacity=0.94 ] (186,64) .. controls (206,54) and (372,68.5) .. (455,27.5) .. controls (538,-13.5) and (567,184) .. (559,204.5) .. controls (551,225) and (566.42,303.92) .. (484,314.5) .. controls (401.58,325.08) and (189.03,307.2) .. (169,302.5) .. controls (148.97,297.8) and (57,206.5) .. (55,176.5) .. controls (53,146.5) and (166,74) .. (186,64) -- cycle ;
\draw  [fill={rgb, 255:red, 255; green, 255; blue, 255 }  ,fill opacity=1 ] (356,212.5) .. controls (361.71,209.65) and (360.41,200.35) .. (356.46,188.03) .. controls (346.6,157.21) and (320.24,107.48) .. (346,92.5) .. controls (382.06,71.54) and (480,53.5) .. (494,83.5) .. controls (508,113.5) and (481,182.5) .. (501,212.5) .. controls (521,242.5) and (235,300.5) .. (215,270.5) .. controls (195,240.5) and (336,222.5) .. (356,212.5) -- cycle ;
\draw  [fill={rgb, 255:red, 255; green, 255; blue, 255 }  ,fill opacity=1 ] (154,105) .. controls (174,95) and (220,69) .. (254,82.5) .. controls (288,96) and (301.65,130.51) .. (301,138.5) .. controls (300.35,146.49) and (301,140) .. (271,185.5) .. controls (241,231) and (200.19,240.89) .. (169,251.5) .. controls (137.81,262.11) and (118.4,216.1) .. (108,200.5) .. controls (97.6,184.9) and (134,115) .. (154,105) -- cycle ;
\draw  [pattern=_2ysuecq82,pattern size=5.25pt,pattern thickness=0.75pt,pattern radius=0.75pt, pattern color={rgb, 255:red, 133; green, 128; blue, 128}] (173,130.5) .. controls (193,120.5) and (207,101) .. (230,130.5) .. controls (253,160) and (239,163.5) .. (193,193.5) .. controls (147,223.5) and (167,198.5) .. (147,168.5) .. controls (127,138.5) and (153,140.5) .. (173,130.5) -- cycle ;
\draw  [pattern=_p46h6u1xc,pattern size=5.25pt,pattern thickness=0.75pt,pattern radius=0.75pt, pattern color={rgb, 255:red, 133; green, 128; blue, 128}] (436,120.5) .. controls (456,80.5) and (465,81.5) .. (462,109.5) .. controls (459,137.5) and (470.24,185.85) .. (442,203.5) .. controls (413.76,221.15) and (412.21,218.61) .. (407,204.5) .. controls (401.79,190.39) and (400,224.5) .. (381,221.5) .. controls (362,218.5) and (386,181.5) .. (392,143.5) .. controls (398,105.5) and (416,160.5) .. (436,120.5) -- cycle ;

\draw (442,261.4) node [anchor=north west][inner sep=0.75pt]  [font=\LARGE]  {$\overline{\gamma} $};
\draw (189,138.4) node [anchor=north west][inner sep=0.75pt]  [font=\LARGE]  {$\overline{\gamma} '$};
\draw (417,152.4) node [anchor=north west][inner sep=0.75pt]  [font=\LARGE]  {$\overline{\gamma} '$};

\end{tikzpicture}
\caption{An example of how Condition \textbf{(A1)} works: considering $\overline{\gamma}'$ the dotted region and $\overline{\gamma}$ the grey region, one can readily see that $\overline{\gamma}'$ intersects two different connected components of $(\overline{\gamma})^c$. To turn this into a partition satisfying condition \textbf{(A1)}, one should separate $\overline{\gamma}'$ in two different sets of $\Gamma(A)$.}
\label{Fig. Exemple_A1}
\end{figure}

Counting the number of contours surrounding zero using the finest $(M,a)$-partition may be troublesome since the definition provides very little information on these objects. To extract good properties of these contours, we establish a multiscale procedure, depending on a parameter $r$, that creates a $(M,a)$-partition of any given set. To define this procedure, we introduce some notation. 

 For any $x\in\Z^d$ and $m\geq 0$,
\begin{equation}
    C_{m}(x) \coloneqq \left(\prod_{i=1}^d{\left[2^{m}x_i , \ 2^{m}(x_i+1) \right)}\right)\cap \Z^d,
\end{equation}
is the cube of $\mathbb{Z}^d$ centered at $2^{m}x + 2^{m-1} - \frac{1}{2}$ with side length $2^{m} -1$. Any such cube is called an $m$-cube. As all cubes in this paper are of this form, with centers $2^{m}x + 2^{m-1} - \frac{1}{2}$ and $x \in \mathbb{Z}^d$, we will often omit the point $x$ in what follows, writing $C_m$ for an $m$-cube instead of $C_m(x)$. An arbitrary collection of $m$-cubes will be denoted $\mathscr{C}_m$ and $B_{\mathscr{C}_m}\coloneqq \cup_{C\in\mathscr{C}_m}C$ is the region covered by $\mathscr{C}_m$.
We denote by $\mathscr{C}_m(\Lambda)$ the covering of $\Lambda\Subset\Z^d$ with the smallest possible number of $m$-cubes.

\input{Figures/Figura.0}

For each $n\geq 0$, define the graph $G_n(\Lambda) = (V_n(\Lambda), E_n(\Lambda))$ with vertex set $V_n(\Lambda) = \mathscr{C}_n(\Lambda)$ and $E_n(\Lambda) = \{ (C_n, C_n^\prime) : d(C_n,C_n^\prime) \leq M2^{an}\}$. Let $\mathscr{G}_n(\Lambda)$ be the connected components of $G_n(\Lambda)$. Given ${G = (V,E) \in \mathscr{G}_n(\Lambda)}$, we denote $\Lambda^G \coloneqq \Lambda \cap B_V$ the area of $\Lambda$ covered by $G$. In the next proposition, we introduce a procedure to construct possibly non-trivial $(M,a)$-partitions.
    
\begin{proposition}\label{Prop:Construction_(M,a,delta)_partition}
    For any $r> 0$ and $A\Subset\Z^d$, there is a possibly non-trivial $(M,a)$-partition $\Gamma^r(A)$.
\end{proposition}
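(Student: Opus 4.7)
The plan is to construct $\Gamma^r(A)$ by sweeping through the scales $rn$, $n \geq 0$, using the graphs $G_{rn}(A)$ and their connected components $\G_{rn}(A)$ introduced above. For each $n \geq 0$ I define the candidate partition
$$\Gamma_n(A) \coloneqq \{\, A \cap B_{V(G)} : G \in \G_{rn}(A)\,\}.$$
Because $\G_{rn}(A)$ partitions $\C_{rn}(A)$ into disjoint connected components and $A \subseteq B_{\C_{rn}(A)}$, the collection $\Gamma_n(A)$ is a partition of $A$; condition \textbf{(A)} holds by construction. Moreover, for $n \leq m$ every $rn$-cube lies inside a unique $rm$-cube and the edge threshold $M2^{arn}$ grows with $n$, so $\Gamma_m(A)$ is coarser than $\Gamma_n(A)$. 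Once $M2^{arn}$ exceeds $\diam(A)$, the graph $G_{rn}(A)$ is connected and $\Gamma_n(A) = \{A\}$, so the sequence stabilises at the trivial partition for all sufficiently large $n$.

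The core observation is that graph separation in $G_{rn}(A)$ transfers to geometric separation of the corresponding pieces at the same scale. If $G, G' \in \G_{rn}(A)$ are distinct components, then no cube of $V(G)$ is joined by an edge to any cube of $V(G')$, so $\d(C, C') > M 2^{arn}$ for every $C \in V(G)$, $C' \in V(G')$. Using $A \cap B_{V(G)} \subseteq B_{V(G)}$, this gives
$$\d\bigl(A \cap B_{V(G)},\, A \cap B_{V(G')}\bigr) > M 2^{arn}.$$
Therefore condition \textbf{(B)} is satisfied at scale $rn$ for that pair as soon as
$$2^{arn} \geq \min\bigl\{|V(A \cap B_{V(G)})|,\, |V(A \cap B_{V(G')})|\bigr\}^{a/\delta}.$$

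I then take $n^\star$ to be the smallest $n \geq 0$ at which this inequality holds simultaneously for every pair of distinct elements of $\Gamma_n(A)$, and set $\Gamma^r(A) \coloneqq \Gamma_{n^\star}(A)$. The termination observation above guarantees that $n^\star$ is finite: once $\Gamma_n(A) = \{A\}$, the separation condition \textbf{(B)} becomes vacuous and is trivially satisfied. Hence $\Gamma^r(A)$ is a $(M,a)$-partition, possibly non-trivial depending on the geometry of $A$.

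The main (mild) obstacle is the transfer step from graph-theoretic non-adjacency to the geometric separation bound, which rests on the straightforward facts that $A \cap B_{V(G)} \subseteq B_{V(G)}$ and that the minimum distance between two disjoint unions of $rn$-cubes is realised between two of the constituent cubes. After that, termination and condition \textbf{(A)} follow from monotonicity and the finiteness of $A$; in particular, no quantitative restriction on $M$, $a$, $\delta$ or $r$ is needed for existence, so the construction works for every $r > 0$.
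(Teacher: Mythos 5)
Your argument is correct as an existence proof, but it builds a genuinely different object from the one the paper constructs, and the difference matters downstream. The paper's $\Gamma^r(A)$ is obtained by an \emph{iterative multi-scale removal}: at step $n$ it deletes only the low-density components of $G_{rn}(A_n)$, namely those with $|V(A_n^G)| \leq 2^{rn(d+1)}$, and passes the remainder to scale $r(n+1)$, so the final partition is graded, $\Gamma^r(A) = \cup_{n}\Gamma_n^r(A)$. Condition \textbf{(B)} is then checked pairwise using the two crucial facts that an element removed at step $n$ has $|V| \leq 2^{rn(d+1)}$, while an element surviving past step $n$ has $|V| > 2^{rn(d+1)}$. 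You instead fix a single scale $n^\star$ and take all components of $G_{rn^\star}(A)$ at once, choosing $n^\star$ minimally so that the graph separation $M2^{arn^\star}$ dominates the volume term for every pair simultaneously; termination is guaranteed because the partition becomes trivial once $M2^{arn}$ exceeds $\diam(A)$, making \textbf{(B)} vacuous. This is simpler and does prove the literal statement; your transfer from graph non-adjacency to geometric separation and the strictness of the resulting inequality are both right. However, the proposition is really serving as the \emph{definition} of $\Gamma^r(A)$, and the grading $\gamma \in \Gamma^r_j(A)$ together with the density bound $|V(\gamma)| \leq 2^{rj(d+1)}$ (and its converse for surviving pieces) is invoked directly in Lemma \ref{Lemma: Big.clusters_2}, Lemma \ref{Lemma: Big.clusters_1}, Proposition \ref{Prop. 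Bound.on.V_r^l(gamma)} and Proposition \ref{Prop. Bound.on.C_rl(gamma)}. Your single-scale partition carries no such quantitative per-piece guarantees, so while it is a valid $(M,a)$-partition it would not support the multiscale covering estimates that follow. (Your monotonicity remark, that $\Gamma_m(A)$ is coarser than $\Gamma_n(A)$ for $m \geq n$, is correct but unnecessary; the termination observation alone makes $n^\star$ well-defined.)
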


\begin{proof}
  Given $r> 0$ and $A\Subset\Z^d$, $\Gamma^r(A)$ is the partition of $A$ created by the following procedure. In the first step we consider $A_1 \coloneqq A$ and we take the connected components of $G\in \mathscr{G}_r(A_1)$ such that $A_1^G$ have small density, that is, consider
\begin{equation*}
   \mathscr{P}_1 \coloneqq \{G \in  \mathscr{G}_r(A_1) : |V(A_1^G)|\leq {2^{r(d+1)}}\}.
\end{equation*}
    Then, the subsets to be removed in the first step are $\Gamma_1^r(A) \coloneqq \{A_1^G : G\in \mathscr{P}_1\}$ and the set left to partition is $A_2 \coloneqq A_1\setminus \bigcup\limits_{\gamma \in \Gamma_1^r(A)}\gamma$.
 We can repeat this procedure inductively by taking 
\begin{equation*}
   \mathscr{P}_n \coloneqq \{G \in  \mathscr{G}_{rn}(A_{n}) : |V(A_n^G)|\leq {2^{rn{(d+1)}}}\},
\end{equation*}
then define $\Gamma_n^r(A) \coloneqq \{A_n^G : G \in \mathscr{P}_n\}$ and $A_{n+1} \coloneqq A_n \setminus \bigcup\limits_{ \gamma\in \Gamma_n^r(A)}\gamma$. As the cubes invade the lattice, this procedure stops, in the sense that for some $N$ large enough, $\mathscr{P}_n=\emptyset$ for all $n\geq N$. We then define $\Gamma^r(A) \coloneqq \cup_{n\geq 0} \Gamma_n^r(A)$.
 By this construction, $\Gamma^r(A)$ is clearly a partition of $A$, so condition \textbf{(A)} follows. To show condition \textbf{(B)}, take $\overline{\gamma},\overline{\gamma}^\prime\in \Gamma^r(A)$. Let $m\geq n\geq 1$ be such that $\overline{\gamma}\in\Gamma_n^r(A)$ and $\overline{\gamma}^\prime\in\Gamma_m^r(A)$. Then, 
\begin{equation*}
\d(\overline{\gamma},\overline{\gamma}^\prime)\geq M2^{rna}\geq M\left(2^{rn(d+1)}\right)^{\frac{a}{d+1}} \geq M|V(\overline{\gamma})|^{\frac{a}{d+1}}.
\end{equation*}
If $m=n$, the same inequality holds for $|V(\overline{\gamma}^\prime)|$ and condition \textbf{(B)} holds. When $m>n$, $\overline{\gamma}^\prime$ was not removed at step $n$, so $|V(\overline{\gamma}^\prime)|> 2^{rn(d+1)} \geq |V(\overline{\gamma})|$, so $|V(\overline{\gamma})| = \min\{|V(\overline{\gamma})|,|V(\overline{\gamma}^\prime)|\}$ and again we get condition \textbf{(B)}. 
\end{proof}

The construction in Proposition \ref{Prop:Construction_(M,a,delta)_partition} works for any $r>0$, but we need to take $r$ large enough for the computations in Section 3 to work. So we fix $r\coloneqq 4\lceil\log_2(a+1) \rceil + d +1$, where $\lceil x \rceil$ is the smallest integer greater than or equal to $x$. This $r$ is taken larger than the one in \cite{Affonso.2021} to simplify some calculations. All our computations should work with the previous choice of $r$, with some adaptation. Next, we define the label of a contour. 
 
\begin{definition}
    For $\Lambda\subset\Z^d$, the \textit{edge boundary} of $\Lambda$ is $\partial\Lambda \coloneqq \{\{x,y\} \subset \Z^d: |x-y|=1, x \in \Lambda, y \in \Lambda^c\}$. The \textit{inner boundary} of $\Lambda$ is $\fint\Lambda\coloneqq\{x\in\Lambda : \exists y\in \Lambda^c \text{ such that }|x-y|=1\}$ and the \textit{external boundary} is $\fext\Lambda\coloneqq\{x\in\Lambda^c : \exists y\in \Lambda \text{ such that }|x-y|=1\}$.
\end{definition}

\begin{remark}
    The usual isoperimetric inequality states that $2d|\Lambda|^{\frac{d-1}{d}}\leq |\partial \Lambda|$. The inner boundary and the edge are related by $|\fint \Lambda|\leq |\partial \Lambda|\leq 2d|\fint \Lambda|$, so we can write the inequality as $|\Lambda|^{\frac{d-1}{d}}\leq |\fint \Lambda|$.
\end{remark}

To define the label of a contour, the naive definition would be to take the sign of the inner boundary of the set $\overline{\gamma}$. However, this cannot be done since this inner boundary may have different signs, see Figure \ref{fig: Figura2}.

\input{Figures/Figura.2}

For any $\Lambda\Subset\Z^d$, its connected components are denoted $\Lambda^{(1)}, \dots, \Lambda^{(n)}$. Given $\overline{\gamma} \in\Gamma(\sigma)$, a connected component $\overline{\gamma}^{(k)}$ is \textit{external} if $V(\overline{\gamma}^{(j)})\subset V(\overline{\gamma}^{(k)})$, for all other connected components $\overline{\gamma}^{(j)}$ satisfying $V(\overline{\gamma}^{(j)})\cap V(\overline{\gamma}^{(k)}) \neq \emptyset$. Denoting 
\begin{equation*}
    \overline{\gamma}_\mathrm{ext} = \hspace{-0.5cm}\bigcup_{\substack{k\geq 1 \\ \overline{\gamma}^{(k)} \text{ is external}}}\hspace{-0.5cm}\overline{\gamma}^{(k)},
\end{equation*} 
it is shown in \cite[Lemma 3.8]{Affonso.2021} that the sign of $\sigma$ is constant in $\fint V(\overline{\gamma}_{\mathrm{ext}})$. The \textit{label} of $\overline{\gamma}$ is the function $\lab_{\overline{\gamma}} :\{(\overline{\gamma})^{(0)}, \I(\overline{\gamma})^{(1)}\dots, \I(\overline{\gamma})^{(n)}\} \rightarrow \{-1,+1\}$ defined as: $\lab_{\overline{\gamma}}(\I(\overline{\gamma})^{(k)})$ is the sign of the configuration $\sigma$ in $\fint V(\I(\overline{\gamma})^{(k)})$, for $k\geq 1$, and $\lab_{\overline{\gamma}}((\overline{\gamma})^{(0)})$ is the sign of $\sigma$ in $\fint V(\overline{\gamma}_\mathrm{ext})$.  We then define the contours.

\begin{definition}
Given a configuration $\sigma$ with finite boundary, its \emph{contours} $\gamma$ are pairs $(\overline{\gamma},\lab_{\overline{\gamma}})$,  where $\overline{\gamma} \in \Gamma(\sigma)$ and $\lab_{\overline{\gamma}}$ is the label of $\overline{\gamma}$ as defined above. The \emph{support of the contour}  $\gamma$ is defined as $\Sp(\gamma)\coloneqq \overline{\gamma}$ and its \emph{size} is given by $|\gamma| \coloneqq |\Sp(\gamma)|$.
\end{definition}

Each contour $\gamma$ has an \textit{interior}, given by $\I(\gamma) \coloneqq \I(\Sp(\gamma))$, and a \textit{volume}, given by $V(\gamma)\coloneqq V(\Sp(\gamma))$. 
We also split the interior according to its labels as

\begin{equation*}
    \I_\pm(\gamma) = \hspace{-0.7cm}\bigcup_{\substack{k \geq 1, \\ \lab_{\overline{\gamma}}(\I(\gamma)^{(k)})=\pm 1}}\hspace{-0.7cm}\I(\gamma)^{(k)}.
\end{equation*}
Different from Pirogov-Sinai theory, where the interiors of contours are a union of simply connected sets, the interior $\I(\gamma)$ is at most the union of connected sets, that is, they may have holes. 

An arbitrary collection of contours $\Gamma = \{\gamma_1,\dots,\gamma_n\}$ may not form a $(M,a)$-partition. Even so, their labels may not be compatible. When there exists a configuration $\sigma$ with contours precisely $\Gamma$, we say that $\Gamma$ is \textit{compatible}. Notice that there is no bijection between compatible collections of contours and configurations since more than one configuration can have the same boundary and label. 

\input{Figures/Figura.3}

A contour $\gamma\in\Gamma$ is \textit{external} if its external connected components are not contained in any other $V(\gamma')$, for $\gamma' \in \Gamma\setminus\{\gamma\}$. Taking $\I_\pm(\Gamma) \coloneqq \cup_{\gamma\in\Gamma}\I_\pm(\gamma)$ and $V(\Gamma)\coloneqq\cup_{\gamma\in\Gamma}V(\gamma)$, for each $\Lambda\Subset\Z^d$ we consider the sets\\
\begin{equation*}
\mathcal{E}^\pm_\Lambda \coloneqq\{\Gamma= \{\gamma_1, \ldots, \gamma_n\}: \Gamma \text{ is compatible,} \gamma_i \text{ is external}, \lab_{\gamma_i}((\gamma_i)^{(0)})=\pm1, V(\Gamma) \subset \Lambda\}, \vspace{0.2cm}
\end{equation*}
of all external compatible families of contours with external label $\pm$ contained in $\Lambda$.  When we write $\gamma \in \mathcal{E}^\pm_\Lambda$ we mean $\{\gamma\} \in \mathcal{E}^\pm_\Lambda$. Most of the time the set $\Lambda$ will play no role, so we will often omit the subscript. 

The first step for a Peierls-type argument to hold is to control the number of contours with a fixed size. Consider $\mathcal{C}_0(n) \coloneqq \{\gamma \in \mathcal{E}^+_\Lambda: 0 \in V(\gamma), |\gamma|=n\}$, the set of contours with fixed size with the origin in its volume, and  $\mathcal{C}_0 \coloneqq \cup_{n\geq 1}\mathcal{C}_0(n)$. 
We will later show in Corollary \ref{Cor: Bound_on_C_0_n} that the size of the set $\mathcal{C}_0(n)$ is exponentially bounded depending on $n$. 

The second key step of a Peierls-type argument is to control the energy cost of erasing a contour. Given $\Gamma\in \mathcal{E}^+$, the configurations compatible with $\Gamma$ are $\Omega(\Gamma)\coloneqq \{\sigma\in\Omega^+_\Lambda : \Gamma\subset \Gamma(\sigma)\}$. The map $\tau_\Gamma:\Omega(\Gamma) \rightarrow \Omega_{\Lambda}^+$ defined as 
\be
\tau_\Gamma(\sigma)_x = 
\begin{cases}
	\;\;\;\sigma_x &\text{ if } x \in \I_+(\Gamma)\cup V(\Gamma)^c, \\
	-\sigma_x &\text{ if } x \in \I_-(\Gamma),\\
	+1 &\text{ if } x \in \Sp(\Gamma),
\end{cases}
\ee
erases a family of compatible contours, since the spin-flip preserves incorrect points but transforms $-$-correct points into $+$-correct points. Define, for $B\Subset\Z^d$, the interaction
\begin{equation*}
    F_B\coloneqq \sum_{\substack{x\in B \\ y\in  B^c}}J_{xy}.
\end{equation*}

Given $B\subset\Z^d$ and $\sigma\in\Omega$ with $\partial\sigma$ finite, let $\Gamma_{\Int}(\sigma, B)$ be the contours $\gamma^\prime$ with $\Sp(\gamma^\prime) \in \Gamma(\sigma)$ and enclosed by $B$, that is, $\Sp(\gamma^\prime)\subset B$. Define also $\Gamma_{\Ext}(\sigma, B)$ as the contours $\gamma^\prime$ with $\Sp(\gamma^\prime) \in \Gamma(\sigma)$ outside $B$, that is, $\Sp(\gamma^\prime)\subset B^c$.

\begin{lemma}\label{Lemma: Aux_1}
Given $\sigma\in\Omega$ with $\partial \sigma$ finite and a contour $\gamma$ with $\Sp(\gamma)\in \Gamma(\sigma)$ , there is a constant $\kappa^{(1)}_\alpha\coloneqq \kappa^{(1)}_\alpha(\alpha, d)$, such that, for  $B = \Sp(\gamma)$ or $B=\I_-(\gamma)$ we have 

\begin{equation}\label{Eq: Lemma_aux_2}
    \sum_{\substack{x\in B \\ y\in V(\Gamma_{\Ext}(\sigma, B)\setminus\{\gamma\})}} J_{xy} \leq \kappa^{(1)}_\alpha \left[ \frac{|B|}{M^{\alpha - d}}|V(\gamma)|^{\frac{a}{d+1}(d-\alpha)} + \frac{F_B}{M} \right].
\end{equation}

\end{lemma}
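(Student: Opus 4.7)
The plan is to bound the sum by splitting it over each contour $\gamma' \in \Gamma_{\Ext}(\sigma, B) \setminus \{\gamma\}$ and invoking property \textbf{(B)} of the $(M,a)$-partition, which gives $d(\Sp(\gamma'), \Sp(\gamma)) > M \min\{|V(\gamma)|, |V(\gamma')|\}^{a/(d+1)}$. A dichotomy based on whether $|V(\gamma')| \geq |V(\gamma)|$ (large family) or $|V(\gamma')| < |V(\gamma)|$ (small family) produces the two terms on the right-hand side.

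For the large family, the separation becomes the uniform quantity $R := M|V(\gamma)|^{a/(d+1)}$. A geometric argument---every $\ell^1$-geodesic from $B \subset V(\gamma)$ to $V(\gamma')$ must cross $\fint V(\gamma') \subset \Sp(\gamma')$---upgrades this to $|x-y| \geq R$ for all $x \in B$ and $y \in V(\gamma')$, whenever $V(\gamma) \cap V(\gamma') = \emptyset$. Summing the polynomial decay then gives
\begin{equation*}
\sum_{\text{large }\gamma'} \sum_{\substack{x \in B \\ y \in V(\gamma')}} |x-y|^{-\alpha} \;\leq\; |B| \sum_{|x-y| \geq R} |x-y|^{-\alpha} \;\leq\; c_\alpha\, |B|\, R^{d-\alpha},
\end{equation*}
which is exactly the first term of the bound since $\alpha > d$.

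For the small family, the inequality $d(\Sp(\gamma'), B) \geq M|V(\gamma')|^{a/(d+1)} \geq M$ is the source of the $M^{-1}$ prefactor: one peels off a factor of $M^{-1}$ from each summand along the excess separation and estimates the residual double sum over $y \in B^c$ by $F_B$. The delicate point, and the main obstacle I anticipate, is controlling the interior $\I(\gamma')$ when $|\I(\gamma')| \gg |\Sp(\gamma')|$, since the volume sum is not a priori dominated by a support sum. The resolution relies on the fact that every $y \in \I(\gamma')$ lies in a bounded component of $\Sp(\gamma')^c$, and \textbf{(B)} prevents more than $O(1)$ small contours from having interiors piling up near any fixed point; hence interior contributions can be charged to nearby support contributions with bounded multiplicative overhead, producing the $F_B/M$ term.

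A secondary technicality is the possible nested configuration $V(\gamma) \subset V(\gamma')$, which would bring $\I(\gamma')$ arbitrarily close to $B$ and obstruct the large-contour argument. In the intended application (Proposition \ref{Prop: Cost_erasing_contour}) $\gamma$ is taken external, so no $\gamma'$ surrounds it; the only nested $\gamma'$ are smaller ones sitting inside $\I(\gamma)$, which already belong to the small family and are handled there. The resulting constant $\kappa^{(1)}_\alpha$ depends only on $\alpha$ and $d$ through the standard polynomial-decay integrals and the uniform bound on interior-to-support charging.
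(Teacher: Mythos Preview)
Your large-family treatment is the paper's own split into $\Upsilon_1=\{|V(\gamma')|\ge|V(\gamma)|\}$, and your geodesic remark (any path from $B\subset V(\gamma)$ to $V(\gamma')$ crosses $\fint V(\gamma)\subset\Sp(\gamma)$ and $\fint V(\gamma')\subset\Sp(\gamma')$ when the volumes are disjoint) is exactly what justifies the passage to $|x-y|>R$; the nested case you flag is indeed handled only by the externality of $\gamma$ in the downstream application.

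The small-family step has a real gap. Your proposed ``charge interior to nearby support with $O(1)$ multiplicity, then peel off $M^{-1}$'' does not produce $F_B/M$. The $O(1)$ claim already fails: a fixed $y$ may lie in a nested chain of small contours, and property \textbf{(B)} only forces the volumes along such a chain to grow like iterated powers, giving depth $\sim\log\log|V(\gamma)|$, not a uniform constant. More seriously, even a perfect interior-to-support charge would give at best $\sum_{y\in\bigcup\Sp(\gamma')}J_{xy}\le\sum_{y\in B^c}J_{xy}$, i.e.\ $F_B$ after summing in $x$; there is no mechanism in what you wrote to extract the additional factor $M^{-1}$ from the bare inequality $d(x,y)>M$ without a packing argument.

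The paper supplies the missing idea. It stratifies the small family by the exact value $m=|V(\gamma')|$ and uses the crude bound $\sum_{y\in V(\gamma')}J_{xy}\le m\,J_{x,y_{\gamma',x}}$, where $y_{\gamma',x}\in\Sp(\gamma')$ realises $d(x,\Sp(\gamma'))$. For fixed $x$ and fixed $m$ the points $\{y_{\gamma',x}\}$ are mutually separated by more than $Mm^{a/(d+1)}$ (property \textbf{(B)} between two contours of volume $m$) and each lies at distance $>Mm^{a/(d+1)}$ from $B$; disjoint balls of radius $\tfrac{M}{3}m^{a/(d+1)}$ in $B^c$ then give $\sum_{\gamma'}J_{x,y_{\gamma',x}}\le\tfrac{3}{Mm^{a/(d+1)}}\sum_{y\in B^c}J_{xy}$. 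Summing over $x$ yields $\tfrac{3}{M}m^{1-a/(d+1)}F_B$ per layer, and $a/(d+1)>2$ makes this summable in $m$ to $\tfrac{3\zeta(a/(d+1)-1)}{M}F_B$. The interior overcount is absorbed as the prefactor $m$, beaten by the $m^{-a/(d+1)}$ from the separation at that scale; no interior-to-support charging is used.
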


\begin{proof}

Fixed $\sigma$ and $B$, we drop them from the notation, so $\Gamma_{\Ext} \coloneqq \Gamma_{\Ext}(\sigma, B)$.  Splitting $\Gamma_{\Ext}\setminus\{\gamma\}$ into $\Upsilon_1 \coloneqq \{\gamma^\prime \in \Gamma_{\Ext}\setminus\{\gamma\} : |V(\gamma^\prime)| \geq |V(\gamma)|\}$ and $\Upsilon_2 = \Gamma_{\Ext} \setminus (\Upsilon_1\cup \gamma)$ we get 
\begin{equation*}
     \sum_{\substack{x\in B \\ y\in V(\Gamma_{\Ext}\setminus\{\gamma\})}} J_{xy} \leq  \sum_{\substack{x\in B \\ y\in V(\Upsilon_1)}} J_{xy} +  \sum_{\substack{x\in B \\ y\in V(\Upsilon_2)}} J_{xy}.
\end{equation*}
For any  $\gamma^\prime \in \Upsilon_1$, $\d(\gamma,\gamma^\prime)> M |V(\gamma)|^{\frac{a}{d+1}}$, hence 
\begin{equation}\label{Eq: Lemma_aux_2_1}
    \sum_{\substack{x\in B \\ y\in V(\Upsilon_1)}} J_{xy} \leq \sum_{\substack{x\in B \\ y : |y-x| > R}} J_{xy} = |B|\sum_{y : |y|>R}J_{0y},
\end{equation}
with $R\coloneqq M |V(\gamma)|^{\frac{a}{d+1}}$. Defining $s_d(n) \coloneqq |\{x\in \Z^d : |x|=n \}|$, it is known that $s_d(n)\leq 2^{2d - 1}e^{d-1}n^{d-1}$, see for example \cite[Lemma 4.2]{Affonso.2021}. Using the integral bound of the sum, we can show that
\begin{equation}\label{Eq: Interaction_outside_ball}
 \begin{split}
 \sum_{y : |y|>R}J_{0y} &= J \sum_{n>R} \frac{s_d(n)}{n^\alpha} \leq \frac{J2^{d - 1 + \alpha}e^{d-1}}{(\alpha - d)} {R}^{d-\alpha}.
 \end{split}
\end{equation}
Together with \eqref{Eq: Lemma_aux_2_1}, this yields
    \begin{align}\label{Eq: Lemma_aux_2.Part1}
    \sum_{\substack{x\in B \\ y\in V(\Upsilon_1)}} J_{xy} \leq \frac{J2^{d-1 + \alpha }e^{d-1}}{(\alpha - d)}\frac{|B|}{M^{\alpha - d}}|V(\gamma)|^{\frac{a}{d+1}(d-\alpha)}.
\end{align}
To bound the other term, split $\Upsilon_2$ into layers $\Upsilon_{2,m} \coloneqq \{ \gamma^\prime \in \Upsilon_2 : |V(\gamma^\prime)|=m\}$, for $1\leq m\leq |V(\gamma)|-1$. Denoting  $y_{\gamma^\prime,x} \in \Sp(\gamma^\prime)$ the point satisfying $\d(x, \Sp(\gamma^\prime)) = \d(x, y_{\gamma^\prime, x})$, we can bound 
\begin{align*}
   \sum_{\substack{x\in B \\ y\in V(\Upsilon_{2,m})}} J_{xy} \leq m \sum_{\substack{x\in B \\ \gamma^\prime \in \Upsilon_{2,m}}} J_{x,y_{\gamma^\prime, x}}.
\end{align*}
Since $\left|x - y_{\gamma^\prime, x}\right| >Mm^{\frac{a}{d+1}}$, and $\left|y_{\gamma^\prime, x} -  y_{\gamma^{\prime\prime}, x}\right| \geq \d(\gamma^\prime, \gamma^{\prime\prime})>Mm^{\frac{a}{d+1}}$ for any $\gamma^\prime, \gamma^{\prime\prime}\in\Upsilon_{2,m}$, the balls with radius $\frac{M}{3}m^{\frac{a}{d+1}}$ centered in $y_{\gamma^\prime, x}$, for all $\gamma^\prime\in \Upsilon_{2,m}$ are disjoint and are contained in $B^c$. Hence, we can bound

\begin{equation*}
    \sum_{\substack{x\in B \\ \gamma^\prime \in\Upsilon_{2,m}}} J_{x,y_{\gamma^\prime, x}} \leq \frac{3}{Mm^{\frac{a}{d+1}}}F_B.
\end{equation*}
That gives us
\begin{align}\label{Eq: Lemma_aux_2.Part2}
     \sum_{\substack{x\in B \\ y\in V(\Upsilon_{2})}} J_{xy} \leq \sum_{m=1}^{|V(\gamma)|-1} \frac{3}{Mm^{\frac{a}{d+1}-1}}F_B\leq \frac{3\zeta({\frac{a}{d+1}-1})}{M}F_B,
\end{align}
what concludes the proof for $\kappa_\alpha^{(1)}\coloneqq \frac{J2^{d-1 + \alpha}e^{d-1}}{(\alpha - d)} + {3\zeta({\frac{a}{d+1}-1})}$.

\end{proof}

\begin{corollary}\label{Cor: Corrolary_of_Lemma_aux}
For any configuration $\sigma\in\Omega$ and $\gamma\in\Gamma(\sigma)$, 

\begin{equation*}
  \sum_{\substack{x\in \Sp(\gamma) \\ y\in V(\Gamma(\sigma)\setminus\{\gamma\})}} J_{xy} \leq\frac{\kappa^{(2)}_\alpha}{M^{(\alpha - d)\wedge 1}}F_{\Sp(\gamma)},  \\
\end{equation*}
 
\begin{equation*}
    \sum_{\substack{x\in \I_-(\gamma) \\ y\in V(\Gamma_{\Ext}(\sigma, \I_-(\gamma))\setminus\{\gamma\})}} J_{xy}  \leq \frac{\kappa^{(2)}_\alpha}{M^{(\alpha - d)\wedge 1}}F_{\I_-(\gamma)},
\end{equation*}
and 
\begin{equation*}
    \sum_{\substack{x\in \I_-(\gamma)^c \\ y\in V(\Gamma_{\Int}(\sigma, \I_-(\gamma)))}} J_{xy} \leq  \kappa^{(2)}_\alpha\frac{F_{\I_-(\gamma)}}{M},
\end{equation*}
with $\kappa^{(2)}_\alpha \coloneqq \kappa_\alpha^{(1)}[J^{-1} + 1]$.
\end{corollary}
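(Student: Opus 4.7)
I will derive all three bounds from Lemma~\ref{Lemma: Aux_1} by combining it with two elementary observations. \emph{Observation (i):} the choice $a = 3(d+1)/[(\alpha-d)\wedge 1]$ (Remark~\ref{Rmk: choice_of_a_and_delta}) yields $a(\alpha-d)/(d+1) \geq 3$, so that for any $B\subset V(\gamma)$ (both $\Sp(\gamma)$ and $\I_-(\gamma)$ qualify)
\[
\frac{|B|}{M^{\alpha-d}}\,|V(\gamma)|^{\frac{a(d-\alpha)}{d+1}} \;\leq\; \frac{|V(\gamma)|^{1 - a(\alpha-d)/(d+1)}}{M^{\alpha-d}} \;\leq\; \frac{1}{M^{\alpha-d}}.
\]
\emph{Observation (ii):} every nonempty $B\Subset \Z^d$ satisfies $F_B \geq J$, since the edge boundary $\partial B$ is nonempty and each of its edges contributes $J$ to $F_B$.

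For the first inequality, I take $B=\Sp(\gamma)$. Property~\textbf{(A)} of Definition~\ref{Def: delta-partiton} forces $\Sp(\gamma')\cap \Sp(\gamma)=\emptyset$ for every $\gamma'\in\Gamma(\sigma)\setminus\{\gamma\}$, so $\Gamma(\sigma)\setminus\{\gamma\}=\Gamma_{\Ext}(\sigma,B)\setminus\{\gamma\}$ and Lemma~\ref{Lemma: Aux_1} applies. By (i), the first term is at most $\kappa^{(1)}_\alpha M^{-(\alpha-d)}$; by (ii) this is in turn at most $\kappa^{(1)}_\alpha F_B/(J\,M^{(\alpha-d)\wedge 1})$, since $M\geq 1$ implies $M^{-(\alpha-d)}\leq M^{-(\alpha-d)\wedge 1}$. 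The second term of the lemma is trivially bounded by $\kappa^{(1)}_\alpha F_B/M^{(\alpha-d)\wedge 1}$. Summing gives exactly $\kappa^{(2)}_\alpha F_B/M^{(\alpha-d)\wedge 1}$ for $\kappa^{(2)}_\alpha=\kappa^{(1)}_\alpha(J^{-1}+1)$. For the second inequality, I take $B=\I_-(\gamma)$; since $\Sp(\gamma)\subset \I_-(\gamma)^c$ we have $\gamma\in\Gamma_{\Ext}(\sigma,\I_-(\gamma))$, Lemma~\ref{Lemma: Aux_1} applies verbatim, and the same estimate works.

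The third inequality requires adapting the $\Upsilon_2$ portion of the proof of Lemma~\ref{Lemma: Aux_1}. For each $\gamma'\in\Gamma_{\Int}(\sigma,\I_-(\gamma))$ with $|V(\gamma')|=m$ and each $x\in \I_-(\gamma)^c$, let $y_{\gamma',x}\in \Sp(\gamma')$ be the closest point of $\Sp(\gamma')$ to $x$. Because $V(\gamma')\subset \I_-(\gamma)$ (a connectedness argument using $\Sp(\gamma')\subset \I_-(\gamma)$ and the fact that $\Sp(\gamma')$ separates $\I(\gamma')$ from $V(\gamma)^c$), we have $x\notin V(\gamma')$, and any $y\in V(\gamma')=\Sp(\gamma')\cup \I(\gamma')$ satisfies $|x-y|\geq |x-y_{\gamma',x}|$ (directly for $y\in \Sp(\gamma')$, and for $y\in \I(\gamma')$ because any shortest $\ell_1$-path from $x$ crosses $\Sp(\gamma')$). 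Hence $\sum_{y\in V(\gamma')}J_{xy}\leq 2m\,J_{x,y_{\gamma',x}}$. Condition~\textbf{(B)} at scale $m$ yields $d(\Sp(\gamma'),\Sp(\gamma''))>Mm^{a/(d+1)}$ for distinct $\gamma',\gamma''$ in the layer, and $d(\Sp(\gamma'),\Sp(\gamma))>Mm^{a/(d+1)}$, so the balls of radius $Mm^{a/(d+1)}/3$ around the $y_{\gamma',x}$ are pairwise disjoint and stay inside $\I_-(\gamma)$. Applying exactly the step from the proof of Lemma~\ref{Lemma: Aux_1} gives $\sum_{\gamma': |V(\gamma')|=m}\sum_x J_{x,y_{\gamma',x}}\leq (3/Mm^{a/(d+1)})F_{\I_-(\gamma)}$; summing over $m\geq 1$ produces the convergent $\zeta\!\left(\tfrac{a}{d+1}-1\right)$-type series already bounded in~\eqref{Eq: Lemma_aux_2.Part2}, yielding the claimed bound with constant absorbed into $\kappa^{(2)}_\alpha$.

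The main obstacle is the topological claim $V(\gamma')\subset \I_-(\gamma)$ used in (3): one must verify that when $\Sp(\gamma')\subset \I_-(\gamma)$, the interior $\I(\gamma')$ cannot leak into $\Sp(\gamma)\cup \I_+(\gamma)$. This follows because $\I(\gamma')$ is a bounded region whose boundary $\Sp(\gamma')$ lies inside the connected set $\I_-(\gamma)$, and property~\textbf{(A1)} prevents the support of one contour from crossing between different connected components of the complement of another; once this is in place, the distance argument above proceeds exactly as in the proof of Lemma~\ref{Lemma: Aux_1}.
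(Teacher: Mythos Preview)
Your proof is correct and follows essentially the same route as the paper: apply Lemma~\ref{Lemma: Aux_1} with $B=\Sp(\gamma)$ and $B=\I_-(\gamma)$ for the first two inequalities (using $|B|\le|V(\gamma)|$, the choice of $a$, and $F_B\ge J$), and for the third re-run only the $\Upsilon_2$ portion of that lemma with the roles of $B$ and $B^c$ swapped, noting that every $\gamma'\in\Gamma_{\Int}(\sigma,\I_-(\gamma))$ has $|V(\gamma')|<|V(\gamma)|$ so that $\Upsilon_1=\emptyset$. The paper states the third step more tersely (``take $B=\I_-(\gamma)^c$'' and invoke \eqref{Eq: Lemma_aux_2.Part2} together with $F_{\I_-(\gamma)^c}=F_{\I_-(\gamma)}$), whereas you spell out the topological inclusion $V(\gamma')\subset\I_-(\gamma)$ and the containment of the separation balls; this extra detail is welcome but does not constitute a different argument.
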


\begin{proof}
    The first inequality is a direct application of the Lemma \ref{Lemma: Aux_1} for $B=\Sp(\gamma)$, once we note that $\Gamma_{\Ext}(\sigma, B) = \Gamma(\sigma)\setminus\{\gamma\}$ and, our choice of $a$,  ${|\gamma|}{|V(\gamma)|^{\frac{a}{d+1}(d-\alpha)} } \leq 1$. Finally, we put $1 + F_{\Sp(\gamma)} \leq (J^{-1} + 1) F_{\Sp(\gamma)}$. The second inequality is likewise a direct application of Lemma \ref{Lemma: Aux_1} for $B=\I_-(\gamma)$, since  ${|\I_-(\gamma)||V(\gamma)|^{- \frac{a}{d+1}(\alpha-d)} }\leq 1$ and, similarly, $1 + F_{\I_-(\gamma)}\leq(J^{-1} + 1)F_{\I_-(\gamma)}$.
For the last inequality, we cannot apply Lemma \ref{Lemma: Aux_1} directly. However, the proof works in the similar steps when we take $B = \I_-(\gamma)^c$. Moreover, notice that $V(\Gamma_{\Int}(\sigma, \I_-(\gamma))) = V(\Gamma_{\Ext}(\sigma, \I_-(\gamma)^c))$ and, for all $\gamma^\prime\in \Gamma_{\Int}(\sigma, \I_-(\gamma))$, $|V(\gamma^\prime)| < |V(\gamma)|$.  In the notation of the proof of Lemma \ref{Lemma: Aux_1}, this means that $\Upsilon_{2} = \Gamma_{\Ext}(\sigma, \I_-(\gamma)^c)$, so equation \eqref{Eq: Lemma_aux_2.Part2} yields
\begin{align*}
      \sum_{\substack{x\in \I_-(\gamma)^c \\ y\in V(\Gamma_{\Ext}(\sigma, \I_-(\gamma)^c))}} J_{xy} \leq \zeta\Big({\frac{a}{d+1}-1}\Big)\frac{F_{\I_-(\gamma)^c}}{M}.
\end{align*}
Since $F_{\I_-(\gamma)^c} = F_{\I_-(\gamma)}$ and $\zeta({\frac{a}{d+1}-1})\leq \kappa^{(1)}_\alpha \leq  \kappa^{(2)}_\alpha$, we get the desired bound.
\end{proof}

We are ready to prove the main proposition of this section:

\begin{proposition}\label{Prop: Cost_erasing_contour}
	For $M$ large enough, there exists a constant $c_2\coloneqq c_2(\alpha,d)>0$, such that for  any $\Lambda \Subset \Z^d$, $\gamma\in \mathcal{E}^+_\Lambda$, and $\sigma \in \Omega(\gamma)$ it holds that
	\be
	H_{\Lambda; 0}^+(\sigma)- H_{\Lambda;0}^+(\tau_{\gamma}(\sigma))\geq c_2\left(|\gamma| + F_{\I_-(\gamma)} + F_{\Sp(\gamma)} \right).
	\ee
\end{proposition}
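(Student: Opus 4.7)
The plan is to write $\Delta H := H_{\Lambda;0}^+(\sigma) - H_{\Lambda;0}^+(\tau_\gamma(\sigma)) = \sum_{x,y} J_{xy}\bigl(\tau_\gamma(\sigma)_x\tau_\gamma(\sigma)_y - \sigma_x\sigma_y\bigr)$ and to isolate an intrinsic positive contribution coming from ``local'' bonds associated with $\gamma$ from a small error term due to the presence of other contours of $\sigma$. Since $\tau_\gamma$ leaves the configuration unchanged on $\I_+(\gamma) \cup V(\gamma)^c$, only ordered pairs with $\{x,y\} \cap (\Sp(\gamma) \cup \I_-(\gamma)) \neq \emptyset$ contribute. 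I split these according to whether $y$ lies in $V(\Gamma(\sigma) \setminus \{\gamma\})$ or not. Pairs with $y$ outside of every other contour's volume constitute the intrinsic piece (since $y$ sits in a region whose sign is dictated by the label of $\gamma$), while the remaining pairs will form the error.

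For the intrinsic piece, I distinguish three positive contributions. First, each $x \in \Sp(\gamma)$ is incorrect, so within $\ell_1$-distance $1$ there are sites of both signs; a case split on $\sigma_x$ and on the location of the opposite-sign neighbour shows that at least one nearest-neighbour pair incident to $x$ contributes $+2J$ to $\Delta H$, yielding a lower bound $c|\gamma|$ after accounting for the bounded multiplicity with which a given bond can be charged. Second, every pair $(x,y)$ with $x \in \I_-(\gamma)$ and $y$ in the $+$-labeled region $V(\gamma)^c \cup \I_+(\gamma)$ contributes $+2J_{xy}$, which upon summation recovers a positive fraction of $F_{\I_-(\gamma)}$. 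Third, pairs $(x,y)$ with $x \in \Sp(\gamma)$ and $y$ in the $+$-labeled region contribute $J_{xy}(1 - \sigma_x\tau_\gamma(\sigma)_y) \geq 0$; combining with the analogous pairs on the $-$-side and exploiting incorrectness yields a positive fraction of $F_{\Sp(\gamma)}$. Altogether, the intrinsic piece is bounded below by $C_1\bigl(|\gamma| + F_{\Sp(\gamma)} + F_{\I_-(\gamma)}\bigr)$ for some absolute $C_1 > 0$.

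For the error piece, all remaining interactions are of three types: between $\Sp(\gamma)$ and $V(\Gamma(\sigma) \setminus \{\gamma\})$, between $\I_-(\gamma)$ and the volumes of contours external to $\I_-(\gamma)$ other than $\gamma$ itself, and between $\I_-(\gamma)^c$ and the volumes of contours internal to $\I_-(\gamma)$. These are exactly the three sums controlled by the three inequalities of Corollary~\ref{Cor: Corrolary_of_Lemma_aux}, so the total error is bounded in absolute value by $C_2 M^{-(\alpha - d)\wedge 1}\bigl(F_{\Sp(\gamma)} + F_{\I_-(\gamma)}\bigr)$ for an absolute $C_2$ depending only on $\alpha$ and $d$.

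Combining yields $\Delta H \geq C_1(|\gamma| + F_{\Sp(\gamma)} + F_{\I_-(\gamma)}) - C_2 M^{-(\alpha-d)\wedge 1}(F_{\Sp(\gamma)} + F_{\I_-(\gamma)})$, and one chooses $M$ large enough that $C_2 M^{-(\alpha-d)\wedge 1} \leq C_1/2$, giving the proposition with $c_2 = C_1/2$. The main obstacle is the careful sign bookkeeping in the intrinsic-bound step: because $\Sp(\gamma)$ may be highly disconnected and $\I_-(\gamma)$ can harbour many nested contours, one must identify exactly which pairs are repaired by $\tau_\gamma$ and ensure they jointly contribute at least a constant fraction of $|\gamma| + F_{\Sp(\gamma)} + F_{\I_-(\gamma)}$, independently of the shape of $\gamma$ and the positions of the other contours. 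Once this intrinsic positivity is secured, the rest is a clean application of Corollary~\ref{Cor: Corrolary_of_Lemma_aux}, whose strength stems from the $(M,a)$-separation \eqref{B_distance_2} forcing other contours to lie at distance at least $M|V(\gamma)|^{a/(d+1)}$.
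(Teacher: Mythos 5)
Your plan follows the paper's proof in outline: decompose $H_{\Lambda;0}^+(\sigma) - H_{\Lambda;0}^+(\tau_\gamma(\sigma))$ into an intrinsic part bounded below by $c\bigl(|\gamma| + F_{\Sp(\gamma)} + F_{\I_-(\gamma)}\bigr)$ and an error part coming from interactions with the other contours of $\sigma$, control the error via Corollary~\ref{Cor: Corrolary_of_Lemma_aux}, and take $M$ large. Your list of three error interactions coincides with the three inequalities of that corollary, exactly as in the paper. But the obstruction you flag at the end, the ``careful sign bookkeeping in the intrinsic-bound step,'' is real and the proposal as written does not resolve it; in fact one of the intrinsic contributions you assert is incorrect.

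Concretely, writing $B(\gamma) := V(\gamma)^c \cup \I_+(\gamma)$, you claim every pair $(x,y)$ with $x\in\I_-(\gamma)$ and $y\in B(\gamma)$ outside other contour volumes contributes $+2J_{xy}$. The contribution is $J_{xy}\bigl(\tau_\gamma(\sigma)_x\tau_\gamma(\sigma)_y - \sigma_x\sigma_y\bigr) = -2J_{xy}\sigma_x$ since $\tau_\gamma(\sigma)_x=-\sigma_x$ and $\sigma_y=\tau_\gamma(\sigma)_y=+1$; this equals $+2J_{xy}$ only when $\sigma_x=-1$, and $\sigma_x=+1$ occurs whenever $x$ lies inside the volume of a nested contour $\gamma'\in\Gamma_{\Int}(\sigma,\I_-(\gamma))$. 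Because your intrinsic/error split is conditioned on $y$ alone, these sign-problematic pairs land on the intrinsic side, so the intrinsic piece is not manifestly bounded below. The paper fixes this by further splitting $\I_-(\gamma)$ into $V(\Gamma')$ and $\I_-(\gamma)\setminus V(\Gamma')$ in \eqref{Eq: Interaction_between_interior_and_B} and controlling the resulting term $\sum_{x\in V(\Gamma'),\,y\in B(\gamma)\setminus V(\Gamma'')} 2J_{xy}\mathbbm{1}_{\{\sigma_x=+1\}}$ via the third inequality of Corollary~\ref{Cor: Corrolary_of_Lemma_aux}. Moreover, even restricting to the genuinely pure pairs, ``recovers a positive fraction of $F_{\I_-(\gamma)}$'' is not automatic: \eqref{Eq: Aux_4_Interaction_between_interior_and_B} shows the pure sum equals $F_{\I_-(\gamma)}-F_{\Sp(\gamma)}$ minus a corollary-small term, and the $-F_{\Sp(\gamma)}$ must be absorbed by the positive $\Sp(\gamma)$ contribution. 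That absorption is where the paper needs the smoothing inequality $J_{xy}\geq \tfrac{1}{(2d+1)2^\alpha}\sum_{|x-x'|\leq 1}J_{x'y}$: incorrectness of $x$ only supplies an opposite-sign neighbour at $\ell_1$-distance $1$, so without the smoothing trick your ``exploiting incorrectness'' step only yields the $\sigma_x\sigma_y=-1$ fraction of $F_{\Sp(\gamma)}$, which can be arbitrarily small relative to $F_{\Sp(\gamma)}$ itself.
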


\begin{proof}
   To simplify the notation, we denote $\tau\coloneqq \tau_\gamma(\sigma)$. Taking $B(\gamma) \coloneqq \I_+(\gamma)\cup V(\gamma)^c$ and redoing the steps of the proof of \cite[Proposition 4.5]{Affonso.2021}, we can write 
\begin{equation}\label{Eq: Difference_of_Hamiltonians_1}
   \begin{split} 
    H_\Lambda^+(\sigma) - H_\Lambda^+(\tau) &= \sum_{\substack{x \in \Sp(\gamma) \\ y \in \Z^d}} J_{xy}\mathbbm{1}_{ \{ \sigma_x \neq \sigma_y \}} +       \sum_{\substack{x \in \Sp(\gamma) \\ y \in \Sp(\gamma)^c}} J_{xy}\mathbbm{1}_{ \{ \sigma_x \neq \sigma_y \}} - 2\sum_{\substack{x \in \I_-(\gamma) \\ y \in B(\gamma)}} J_{xy}\sigma_x\sigma_y \\
    & - 2\sum_{\substack{x \in \Sp(\gamma) \\ y \in B(\gamma)}} J_{xy}\mathbbm{1}_{ \{ \sigma_y = -1 \}} - 2\sum_{\substack{x \in \Sp(\gamma) \\ y \in \I_-(\gamma)}} J_{xy}\mathbbm{1}_{ \{ \sigma_y = +1\}}.
\end{split}
\end{equation}

We start by analyzing the last two negative terms. It holds that 
\begin{equation*}
    \sum_{\substack{x \in \Sp(\gamma) \\ y \in B(\gamma)}} J_{xy}\mathbbm{1}_{ \{ \sigma_y = -1 \}} + \sum_{\substack{x \in \Sp(\gamma) \\ y \in \I_-(\gamma)}} J_{xy}\mathbbm{1}_{ \{ \sigma_y = +1\}} \leq \sum_{\substack{x \in \Sp(\gamma) \\ y \in V(\Gamma(\sigma)\setminus\{\gamma\})}} J_{xy},
\end{equation*}
since the characteristic function can only be non-zero in the volume of other contours. By Corollary \ref{Cor: Corrolary_of_Lemma_aux},
\begin{equation}\label{Eq: Aux_1_Diferenca_de_Hamiltonianos}
    \sum_{\substack{x \in \Sp(\gamma) \\ y \in V(\Gamma(\sigma)\setminus\{\gamma\})}} J_{xy}  \leq {\kappa^{(2)}_\alpha}\frac{F_{\Sp(\gamma)}}{M^{(\alpha - d)\wedge 1}}.
\end{equation}

For the remaining negative term, taking $\Gamma$ the contours associated to $\sigma$,  $\Gamma^\prime \coloneqq \Gamma_{\Int}(\sigma, \I_-(\gamma))$ the contours inside $\I_-(\gamma)$, and $\Gamma^{\prime\prime} \coloneqq \Gamma_{\Ext}(\sigma, \I_-(\gamma))\setminus \{\gamma\}$, as $\Gamma^\prime\cup\Gamma^{\prime\prime} =  \Gamma\setminus {\gamma}$ we can write
\begin{equation}\label{Eq: Interaction_between_interior_and_B}
\begin{split}
     \sum_{\substack{x \in \I_-(\gamma) \\ y \in B(\gamma)}} J_{xy}\sigma_x\sigma_y = \sum_{\substack{x \in V(\Gamma^{\prime}) \\ y \in  V(\Gamma^{\prime\prime}) }} J_{xy} + \sum_{\substack{x \in \I_-(\gamma)\setminus V(\Gamma^{\prime}) \\ y \in  V(\Gamma^{\prime\prime}) }} 2J_{xy}\mathbbm{1}_{\{\sigma_y=-1\}} + \sum_{\substack{x \in V(\Gamma^{\prime}) \\ y \in  B(\gamma)\setminus V(\Gamma^{\prime\prime}) }} 2J_{xy}\mathbbm{1}_{\{\sigma_x=+1\}} \\
       - \sum_{\substack{x \in \I_-(\gamma)\setminus V(\Gamma^{\prime}) \\ y \in  V(\Gamma^{\prime\prime}) }}  J_{xy}  - \sum_{\substack{x \in V(\Gamma^{\prime}) \\ y \in  V(\Gamma^{\prime\prime}) }} 2J_{xy}\mathbbm{1}_{\{\sigma_x\neq \sigma_y\}} -  \sum_{\substack{x \in \I_-(\gamma) \\ y \in  B(\gamma)\setminus V(\Gamma^{\prime\prime}) }}J_{xy}.
\end{split}
\end{equation}

We can bound the first two terms by
\begin{equation}\label{Eq: Aux_1_Interaction_between_interior_and_B}
 \sum_{\substack{x \in V(\Gamma^{\prime}) \\ y \in  V(\Gamma^{\prime\prime}) }} J_{xy} +  \sum_{\substack{x \in \I_-(\gamma)\setminus V(\Gamma^{\prime}) \\ y \in  V(\Gamma^{\prime\prime}) }} 2J_{xy}\mathbbm{1}_{\{\sigma_y=-1\}}  \leq  2\sum_{\substack{x \in \I_-(\gamma) \\ y \in V(\Gamma^{\prime\prime}) }} J_{xy} \leq 2\kappa^{(2)}_\alpha\frac{F_{\I_-(\gamma)}}{M^{(\alpha - d)\wedge 1}}.
 \end{equation}
In the second inequality, we are applying Corollary \ref{Cor: Corrolary_of_Lemma_aux}. For the next term, since $B(\gamma)\setminus V(\Gamma^{\prime\prime})\subset \I_-(\gamma)^c$, we can bound
\begin{equation}\label{Eq: Aux_2_Interaction_between_interior_and_B}
    \sum_{\substack{x \in V(\Gamma^{\prime}) \\ y \in  B(\gamma)\setminus V(\Gamma^{\prime\prime}) }} 2J_{xy}\mathbbm{1}_{\{\sigma_y=+1\}}\leq \sum_{\substack{x \in V(\Gamma^{\prime}) \\ y \in  \I_-(\gamma)^c}} 2J_{xy}\leq 2\kappa_\alpha^{(2)}\frac{F_{\I_-(\gamma)}}{M}.
\end{equation}
 In the last inequality we are again applying Corollary \ref{Cor: Corrolary_of_Lemma_aux}.

For the negative terms in \eqref{Eq: Interaction_between_interior_and_B}, we bound the term containing $\mathbbm{1}_{\{\sigma_x\neq\sigma_x\}}$ by $0$ and multiply the remaining terms by $\frac{1}{(2d+1)2^{\alpha+2}}$, getting 
\begin{equation*}
      \sum_{\substack{x \in \I_-(\gamma)\setminus V(\Gamma^{\prime}) \\ y \in  V(\Gamma^{\prime\prime}) }}  J_{xy} +  \sum_{\substack{x \in V(\Gamma^{\prime}) \\ y \in  V(\Gamma^{\prime\prime}) }} 2J_{xy}\mathbbm{1}_{\{\sigma_x\neq \sigma_y\}} +  \sum_{\substack{x \in \I_-(\gamma) \\ y \in  B(\gamma)\setminus V(\Gamma^{\prime\prime}) }}J_{xy} \geq \frac{1}{(2d+1)2^{\alpha + 2}}\left[ \sum_{\substack{x \in \I_-(\gamma)\setminus V(\Gamma^{\prime}) \\ y \in  V(\Gamma^{\prime\prime}) }}  J_{xy}  +  \sum_{\substack{x \in \I_-(\gamma) \\ y \in  B(\gamma)\setminus V(\Gamma^{\prime\prime}) }}J_{xy} \right].
\end{equation*}
Using the second inequality of \eqref{Eq: Aux_1_Interaction_between_interior_and_B}, we have 
\begin{equation}\label{Eq: Aux_4_Interaction_between_interior_and_B}
\begin{split}
  \sum_{\substack{x \in \I_-(\gamma)\setminus V(\Gamma^{\prime}) \\ y \in  V(\Gamma^{\prime\prime}) }} J_{xy} + \sum_{\substack{x \in \I_-(\gamma) \\ y \in  B(\gamma)\setminus V(\Gamma^{\prime\prime}) }} J_{xy} 
    &= F_{\I_-(\gamma)} - \sum_{\substack{x \in V(\Gamma^{\prime}) \\ y \in  V(\Gamma^{\prime\prime})}} J_{xy} -  \sum_{\substack{x \in \I_-(\gamma)\\ y \in  \Sp(\gamma)}} J_{xy}  \\
    &\geq \Big(1 -  \frac{2\kappa^{(2)}_\alpha}{M^{(\alpha - d)\wedge 1}}\Big)F_{\I_-(\gamma)} - F_{\Sp(\gamma)}.
    \end{split}
\end{equation}
 Plugging inequalities \eqref{Eq: Aux_1_Interaction_between_interior_and_B}, \eqref{Eq: Aux_2_Interaction_between_interior_and_B} and \eqref{Eq: Aux_4_Interaction_between_interior_and_B} back in \eqref{Eq: Interaction_between_interior_and_B} we get
\begin{align}\label{Eq: Aux_2_Diferenca_de_Hamiltonianos}
    \sum_{\substack{x \in \I_-(\gamma) \\ y \in B(\gamma)}} J_{xy}\sigma_x\sigma_y \leq\left( \frac{6\kappa^{(2)}_\alpha}{M^{(\alpha - d)\wedge 1}}-\frac{1}{(2d+1)2^{\alpha+2}}\right)F_{\I_-(\gamma)} + \frac{F_{\Sp(\gamma)}}{(2d+1)2^{\alpha+2}}
\end{align}

For the positive terms in \eqref{Eq: Difference_of_Hamiltonians_1}, we use the triangular inequality to get 
\begin{equation*}
    J_{xy} \geq \frac{1}{(2d+1)2^\alpha}\sum_{|x-x^\prime|\leq 1}J_{x^\prime y},
\end{equation*}
and therefore 
\begin{equation}\label{Eq: Aux_3_Diferenca_de_Hamiltonianos}
    \sum_{\substack{x \in \Sp(\gamma) \\ y \in \Z^d}} J_{xy}\mathbbm{1}_{ \{ \sigma_x \neq \sigma_y \}} +       \sum_{\substack{x \in \Sp(\gamma) \\ y \in \Sp(\gamma)^c}} J_{xy}\mathbbm{1}_{ \{ \sigma_x \neq \sigma_y \}} \geq \frac{1}{(2d+1)2^\alpha}\left(Jc_\alpha |\gamma| + F_{\Sp(\gamma)}\right),
\end{equation}
with $c_\alpha = \sum_{\substack{y\in\Z^d\setminus\{0\}}}|y|^{-\alpha}$. Plugging \eqref{Eq: Aux_1_Diferenca_de_Hamiltonianos}, \eqref{Eq: Aux_2_Diferenca_de_Hamiltonianos} and \eqref{Eq: Aux_3_Diferenca_de_Hamiltonianos} back in \eqref{Eq: Difference_of_Hamiltonians_1} we get
\begin{align*}
     H_\Lambda^+(\sigma) - H_\Lambda^+(\tau) \geq \frac{Jc_\alpha }{(2d+1)2^\alpha} |\gamma| + \left(\frac{1}{(2d+1)2^{\alpha + 1}} -  \frac{12\kappa^{(2)}_\alpha}{M^{(\alpha - d)\wedge 1}}\right)F_{\I_-(\gamma)} + \left( \frac{1 }{(2d+1)2^{\alpha+1}} -  \frac{2\kappa^{(2)}_\alpha}{M^{(\alpha - d)\wedge 1}}\right)F_{\Sp(\gamma)},
\end{align*}
what proves the proposition for $M^{(\alpha - d)\wedge 1}>24\kappa^{(2)}_\alpha2^{\alpha + 1}(2d+1)$. 

\end{proof}

\section{Ding and Zhuang approach}
The main idea used in Ding and Zhuang's proof of phase transition in \cite{Ding2021} is to make the Peierls' argument on the joint space of the configurations and the external fields and, when erasing a contour, perform in the external field the same flips you do in the configuration. Doing this, the part on the Hamiltonian that depends on the external field does not change, but the partition function does. The complication of this method is to control such differences.

In the short-range case, the spins that need to be flipped to erase a contour are precisely the ones in the interior of it. This is not the case for the long-range model, so we make a slight modification in the argument, and instead of performing the same flips in both spaces, we flip the external field only on $\I_-(\gamma)$. Doing this, not only does the partition function change, but we also get an extra cost when comparing the original energy with the energy after performing such a transformation. This extra term depends only on the external field in $\Sp{(\gamma)}$.  

In this section, we define the measure in the joint space and show that, with high probability, the change of partition function resulting from such flipping is upper-bounded by the size of the support $|\gamma|$, with high probability.

    \subsection{Joint measure and bad events}  
    Given $\Lambda\subset\Z^d$, a contour associated with a configuration in $\Omega_\Lambda^+$ is not always inside $\Lambda$. To avoid this, we consider the event $\Theta_{\Lambda} \coloneqq \{ \sigma : \sigma_x \text{ is } +\text{-correct for all }x\in\fint \Lambda \}$ and the conditional measure 
\begin{equation}
    \nu_{\Lambda; \beta, \varepsilon h}^{+}(\mathcal{A}) \coloneqq \mu_{\Lambda; \beta,\varepsilon h}^+(\mathcal{A} | \Theta_{\Lambda})
\end{equation}
for any $\mathcal{A}\subset \Omega$ measurable. The Markov property guarantees that $\nu_{\Lambda; \beta, \varepsilon h}^{+}$ is also a local Gibbs measure, with the advantage that all contours associated with it are inside $\Lambda$. Define the joint measure for $(\sigma, h)$ as

\begin{equation*}
    \mathbb{Q}_{\Lambda; \beta, \varepsilon}^+(\sigma \in \mathcal{A}, h\in \mathcal{B}) \coloneqq \int_{\mathcal{B}} \nu_{\Lambda;\beta, \varepsilon h}^+(\mathcal{A}) d\mathbb{P}(h),
\end{equation*}
for $\mathcal{A}\subset\Omega$ measurable and $\mathcal{B}\subset \mathbb{R}^{\Lambda}$ a Borel set. This measure $\mathbb{Q}_{\Lambda;\beta,\varepsilon}$ has density
\begin{equation*}
    g_{\Lambda; \beta, \varepsilon}^+(\sigma, h) \coloneqq \prod_{x\in\Lambda}\frac{1}{\sqrt{2\pi}}e^{-\frac{1}{2}h_x^2} \times \nu_{\Lambda;\beta, \varepsilon h}^+(\sigma).
\end{equation*}

The operation $\tau_{\gamma}$ used to remove a contour $\gamma\in\Gamma(\sigma)$ can be written as a particular case of the following one: given $A\subset\Z^d$, take $\tau_A:\mathbb{R}^{\Z^d} \xrightarrow{} \mathbb{R}^{\Z^d}$ as 
\begin{equation}
    (\tau_A(\sigma))_x \coloneqq \begin{cases}
                        -\sigma_x &\text{if }x\in A,\\
                        \sigma_x   &\text{otherwise},
                      \end{cases}
\end{equation}
for every $x\in\Z^d$. Defining $\s(\gamma, \sigma)^\pm\coloneqq \{ x\in \s(\gamma): \sigma_x = \pm 1\}$, the transformation that erases a contour $\gamma$ is $\tau_\gamma(\sigma) = \tau_{\I_-(\gamma)\cup \s^-(\gamma,\sigma)}(\sigma)$.

The main idea used in the proof of phase transition in \cite{Ding2021} is to make the Peierls' argument on the measure $\mathbb{Q}_{\Lambda;\beta,\varepsilon}$, and perform in the external field the same flips one does in the configuration when erasing a contour. Formally, in \cite{Ding2021} they compare the density $g_{\Lambda; \beta, \varepsilon}^+(\sigma, h)$ with the density after erasing a contour $\gamma\in\Gamma(\sigma)$, and performing the same flips on the external field. For the short-range model, the spins that need to be flipped to erase a contour are precisely the ones in the interior of it. This is not the case for the long-range setting, so we compare $g_{\Lambda; \beta, \varepsilon}^+(\sigma, h)$ with the density after erasing $\gamma$ and flipping the external field only in $\I_-(\gamma)$, getting

\begin{align}\label{Eq: quotient.of.gs}
    \frac{g_{\Lambda; \beta, \varepsilon}^+(\sigma, h)}{g_{\Lambda; \beta, \varepsilon}^+(\tau_{\gamma}(\sigma),\tau_{\I_-(\gamma)}(h))} 
    &= \exp{\{\beta H_{\Lambda, \varepsilon \tau_{\I_-(\gamma)}(h)}^{+}(\tau_{\gamma}(\sigma)) - \beta H_{\Lambda, \varepsilon h}^{+}(\sigma)8\}}\frac{Z_{\Lambda; \beta, \varepsilon}^{+}(\tau_{\I_-(\gamma)}(h))}{Z_{\Lambda; \beta, \varepsilon}^{+}(h)}  \nonumber \\ 
    &\leq \exp{\{- \beta c_2 |\gamma| -2\beta\sum_{x\in \Sp^-(\gamma,\sigma)}\varepsilon h_x\}}\frac{Z_{\Lambda; \beta, \varepsilon}^{+}(\tau_{\I_-(\gamma)}(h))}{Z_{\Lambda; \beta, \varepsilon}^{+}(h)}.
\end{align}
where the constant $c_2$ is the one given by Proposition \ref{Prop: Cost_erasing_contour}.

The sum of the external field in $\Sp^-(\gamma,\sigma)$ can be shown to be of order $|\Sp^-(\gamma,\sigma)|$, and do not influence the Peierls' argument. However, the quotient of the partition functions can be bigger than the exponential term. Denoting
\begin{equation}
\Delta_A(h) \coloneqq -\frac{1}{\beta}\log{\frac{Z_{\Lambda; \beta, \varepsilon}^{+}(h)}{Z_{\Lambda; \beta, \varepsilon}^{+}(\tau_{A}(h))}},
\end{equation}
 for every $A\subset \Z^d$, the bad event are

$$ \mathcal{E}_0^c \coloneqq \left\{ \sup_{ \gamma\in\mathcal{C}_0, \ \sigma\in\Omega(\gamma)} \frac{2\left|\sum_{x\in \Sp^-(\gamma,\sigma)}\varepsilon h_x\right|}{c_2|\gamma|} > \frac{1}{4}\right\}$$
and
$$\mathcal{E}_1^c\coloneqq \left\{\sup_{\substack{\gamma\in\mathcal{C}_0}} \frac{|\Delta_{\I_-(\gamma)}(h)|}{c_2|\gamma|} > \frac{1}{4}\right\}.$$

To control the probability of these bad events,  we need a concentration result for Gaussian random variables. The following one is due to M. Ledoux and M. Talagrand, and a proof can be found in \cite{Ledoux.Talagrand.91}. 

\begin{theorem}\label{Theo: Gaussian.concentration}
    Let $f:\mathbb{R}^M \xrightarrow[]{} \mathbb{R}$ be a uniform Lipschitz continuous function with constant $C_{Lip}$, that is, for any $X,Y\in\mathbb{R}^M$, $$|f(X) - f(Y)| \leq C_{Lip} || X - Y ||_2 .$$ 
    
    Then, if $X_1,\dots, X_M$ are i.i.d. Gaussian random variables with variance 1,
    \begin{equation}\label{Eq: Tail.concentration}
        \mathbb{P}\left(|f(X_1,\dots, X_M) - \mathbb{E}(f(X_1, \dots, X_M))|\geq z\right) \leq 2\exp{\left\{\frac{-z^2}{2C_{Lip}^2}\right\}}.
    \end{equation}
\end{theorem}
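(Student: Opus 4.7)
My plan is to follow the standard route via Gross's Gaussian logarithmic Sobolev inequality combined with Herbst's argument, which yields precisely the sharp constant $1/(2C_{Lip}^2)$ appearing in the exponent of the stated bound. The first step is a routine reduction: mollify $f$ by a narrow Gaussian and truncate outside a large ball to obtain a sequence $f_n$ of smooth bounded functions with $\|\nabla f_n\|_\infty \leq C_{Lip}$, since the Lipschitz condition translates to an almost-everywhere pointwise gradient bound and convolution does not enlarge Lipschitz constants. The conclusion then passes to the limit by dominated convergence on $\mathbb{E} f_n$ and Fatou on the exponential moment, so it suffices to prove the inequality for smooth $f$ with $|\nabla f|\leq C_{Lip}$.

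The central analytic ingredient is the Gaussian logarithmic Sobolev inequality of Gross: for every smooth $g:\mathbb{R}^M\to\mathbb{R}$ with $g^2$ integrable against the standard Gaussian measure $\gamma_M$,
\[
\operatorname{Ent}_{\gamma_M}(g^2) \coloneqq \int g^2\log g^2\, d\gamma_M - \left(\int g^2\, d\gamma_M\right)\log\!\left(\int g^2\, d\gamma_M\right) \leq 2\int |\nabla g|^2\, d\gamma_M.
\]
I would establish this first in dimension one via the Ornstein--Uhlenbeck semigroup $(P_t)_{t\geq 0}$: express $\operatorname{Ent}_{\gamma_1}(g^2)$ as $\int_0^\infty \bigl(-\tfrac{d}{dt}\operatorname{Ent}_{\gamma_1}(P_t g^2)\bigr)\, dt$, differentiate, integrate by parts, and use the Bakry--\'Emery $\Gamma_2$ identity, which for the OU generator amounts to a curvature bound equal to $1$. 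The general $M$-dimensional inequality follows by tensorization of entropy across independent coordinates.

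With the LSI in hand, Herbst's argument quickly yields the desired exponential-moment bound. Applying the LSI to $g = e^{\lambda f/2}$ for $\lambda>0$ and setting $H(\lambda) \coloneqq \log \mathbb{E}[e^{\lambda f}]$ produces, after expanding $|\nabla g|^2 = \tfrac{\lambda^2}{4}|\nabla f|^2 e^{\lambda f}$ and using $|\nabla f|\leq C_{Lip}$, the differential inequality
\[
\lambda H'(\lambda) - H(\lambda) \leq \frac{\lambda^2 C_{Lip}^2}{2},
\]
which rearranges to $\bigl(H(\lambda)/\lambda\bigr)' \leq C_{Lip}^2/2$. Integrating from $0^+$, where $H(\lambda)/\lambda \to \mathbb{E} f$, gives $\mathbb{E}\bigl[e^{\lambda(f-\mathbb{E} f)}\bigr] \leq \exp(\lambda^2 C_{Lip}^2/2)$. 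Chernoff's inequality then produces $\mathbb{P}(f-\mathbb{E} f \geq z) \leq \exp(-\lambda z + \lambda^2 C_{Lip}^2/2)$, and optimizing at $\lambda = z/C_{Lip}^2$ yields $\exp(-z^2/(2C_{Lip}^2))$. Running the same argument for $-f$, which is also Lipschitz with constant $C_{Lip}$, and taking a union bound supplies the factor $2$.

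The main obstacle is the Gaussian LSI itself, particularly obtaining the sharp constant $2$ on the right-hand side: the Bakry--\'Emery semigroup proof requires careful justification of differentiation under the integral sign and of the commutation identity $\nabla P_t = e^{-t} P_t \nabla$ for the OU semigroup, while alternative routes via Hermite expansions or Gross's original two-point inequality plus central limit theorem involve their own delicate limiting arguments. Everything downstream --- mollification, the Herbst ODE, Chernoff optimization, and the one-sided to two-sided upgrade --- is mechanical once the LSI is available.
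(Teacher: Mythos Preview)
Your proof sketch via the Gaussian logarithmic Sobolev inequality and Herbst's argument is correct and is one of the standard routes to this result with the sharp constant. However, there is nothing to compare against: the paper does not prove this theorem. It is stated as a classical result and attributed to Ledoux and Talagrand, with the proof deferred to \cite{Ledoux.Talagrand.91}. The authors use it as a black box to derive Lemma~\ref{Lemma: Concentration.for.Delta.General}.
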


\begin{remark}\label{Rmk: MVT.Lipschitz}
    If $f$ is differentiable and $||\nabla f(\cdot)||_2$ is bounded, the mean value theorem guarantees that $\sup_{Z\in\mathbb{R}^M}||\nabla f(Z)||_2$ is a uniform Lipschitz constant for $f$. 
\end{remark}
\begin{remark}\label{Rmk: Bernoulli_external_field}
    If $f$ has a compact support and convex level sets, an equation similar to \eqref{Eq: Tail.concentration} holds, with some adjustments on the constants and replacing the mean by the median, see \cite[Theorem 7.1.3]{Bovier.06}. Therefore, our results hold when $h_i$ has a Bernoulli distribution $\mathbb{P}(h_i=+1) =\mathbb{P}(h_i=-1)= \frac{1}{2}$. 
\end{remark}

Given $A\subset\Z^d$, $h_A\coloneqq (h_x)_{x\in A}$ denotes the restriction of the external field to the subset $A$. The next Lemma was proved in \cite{Ding2021}, and is a direct consequence of the previous theorem, replacing $f$ by functions of the form $\Delta_A$, with $A\Subset\Z^d$. As the proofs for the short and long-range are the same, we omit it. 

\begin{lemma}\label{Lemma: Concentration.for.Delta.General}
    For any $A, A^\prime \Subset \mathbb{Z}^d$ and $\lambda>0$, we have 
\begin{equation}\label{Eq: Tail.of.Delta_A}
    \mathbb{P}\left(|\Delta_A(h)| \geq \lambda \vert h_{A^c}\right) \leq2e^{\frac{-\lambda^2}{8\varepsilon^2 |A|}},
\end{equation}
and 
\begin{equation}\label{Eq: Tail.of.the.diff.of.Deltas}
     \mathbb{P}(|\Delta_{A}(h) - \Delta_{A^\prime}(h)|>\lambda|h_{{A \cup A^\prime}^c}) \leq  2e^{-\frac{{\lambda^2}}{{8\varepsilon^2|A \Delta A^\prime|}}},
\end{equation}
where $A\Delta A^\prime$ is the symmetric difference. Similarly, 

\begin{equation}\label{Eq: Tail.of.sum.in.A}
    \mathbb{P}\left(\left|\sum_{i\in A}\varepsilon h_i\right| \geq \lambda \vert h_{A^c}\right) \leq 2e^{\frac{-\lambda^2}{2\varepsilon^2 |A|}}.
\end{equation}
\end{lemma}

\begin{remark}\label{Rmk: More.general.h_x}
    Lemma \ref{Lemma: Concentration.for.Delta.General} holds whenever $h=(h_x)_{x\in\Z^d}$ satisfy equation \eqref{Eq: Tail.concentration}. As a consequence, our results can be stated for more general external fields. 
\end{remark}

The first main application of Equation \eqref{Eq: Tail.of.sum.in.A} is to control the probability of $\mathcal{E}_0^c$.
\begin{proposition}\label{Prop: Bound.bad.event.0} 
    There exists $C_0\coloneqq C_0(\alpha, d)$ such that $\mathbb{P}(\mathcal{E}_0^c)\leq e^{-{C_0}/{\varepsilon^2}}$.
\end{proposition}
\begin{proof}
This is a direct consequence of Lemma \ref{Lemma: Concentration.for.Delta.General} and Corollary \ref{Cor: Bound_on_C_0_n}, which guarantees that ${|\mathcal{C}_0(n)| \leq e^{c_1 n}}$, thus
\begin{equation}\label{Eq: Bound.bad.event.0.eq.1}
    \begin{split}
    \mathbb{P}(\mathcal{E}_0^c) &\leq \sum_{n\geq 1}\mathbb{P}\left(\sup_{\gamma\in\mathcal{C}_0(n), \ \sigma\in\Omega(\gamma)} 2\varepsilon\left|\sum_{i\in \Sp^-(\gamma,\sigma)}h_i\right| > \frac{c_2}{4}n\right)  \\
    &\leq \sum_{n\geq 1}\sum_{\gamma\in \mathcal{C}_0(n)}\left| \Omega(\gamma) \right| \sup_{\substack{\gamma\in \mathcal{C}_0(n) \\ \sigma \in \Omega(\gamma)}}\mathbb{P}\left(\varepsilon\left|\sum_{i\in \Sp^-(\gamma,\sigma)}h_i\right| > \frac{c_2}{8}n\right) \\
    &\leq \sum_{n\geq 1}|\mathcal{C}_0(n)|2^n\sup_{\substack{\gamma\in \mathcal{C}_0(n) \\ \sigma \in \Omega(\gamma)}}\mathbb{P}\left(\varepsilon\left|\sum_{i\in \Sp^-(\gamma,\sigma)}h_i\right| > \frac{c_2}{8}n\right) \\
    &\leq \sum_{n\geq 1} e^{c_1 n}2^n\sup_{\substack{\gamma\in \mathcal{C}_0(n) \\ \sigma \in \Omega(\gamma)}}\mathbb{P}\left(\varepsilon\left|\sum_{i\in \Sp^-(\gamma,\sigma)}h_i\right| > \frac{c_2}{8}n\right),
\end{split}
\end{equation}
where in the third inequality we used that the number of configurations $\sigma$ compatible with a contour $\gamma$ is at most $2^{|\gamma|}$. Lemma \ref{Lemma: Concentration.for.Delta.General} yields that, for any $\gamma\in\mathcal{C}_0{(n)}$,
\begin{equation}\label{Eq: Bound.bad.event.0.eq.2}
    \mathbb{P}\left(\varepsilon\left|\sum_{i\in \Sp^-(\gamma,\sigma)}h_i\right| > \frac{c_2}{8}n\right) \leq 2 \exp{\frac{-c_2^2n^2}{64\varepsilon^2|\Sp^-(\gamma,\sigma)|}} \leq 
    2\exp{\frac{-c_2^2n}{64\varepsilon^2}}. 
\end{equation}
In the last equation, we just used that $|\Sp^-(\gamma,\sigma)|\leq |\gamma|$. Equation \eqref{Eq: Bound.bad.event.0.eq.1} together with \eqref{Eq: Bound.bad.event.0.eq.2} yields
\begin{align*}
     \mathbb{P}(\mathcal{E}_0^c) &\leq \sum_{n\geq 1} 2e^{(c_1+\log{2})n - \frac{c_2^2}{64\varepsilon^2}n} \leq \sum_{n\geq 1} 2e^{\frac{-2C_0}{\varepsilon^2}n},
\end{align*}
what concludes the proof for an appropriate choice of $C_0$ and $\varepsilon$ small enough.
\end{proof}

To control the probability of $\mathcal{E}_1^c$ we use a coarse-graining argument presented in \cite{FFS84} together with some ideas of entropy bounds used in \cite[Section 3.2]{Affonso.2021}, where it is proved that the number of contours containing $0$ in its volume grows at most exponentially with the size of the contour. As pointed out by \cite{Ding2021}, the proof presented in \cite{FFS84}, despite being self-contained, is a non-trivial application of Dudley's entropy bound. Next, we adapt the proof presented in \cite{FFS84} using this entropy bound. 
    \subsection{Probability Results} 

    To control the probability of $\mathcal{E}_1^c$, we use some results on majorizing measures. For an extensive overview, we refer to \cite{Talagrand_14}. Consider $(T,\d)$ a finite metric space and a process $(X_t)_{t\in T}$ such that, for every $\lambda>0$ and $t,s\in T$,
\begin{equation}\label{Eq: Sub_gaussian_def}
    \mathbb{P}\left( |X_t - X_s| \geq \lambda \right) \leq 2\exp{\frac{-\lambda^2}{2\d(s,t)^2}}.
\end{equation}
Assume also that $\mathbb{E}\left(X_t\right) = 0$ for every $t\in T$. One example of such process is $( \Delta_{\I_-(\gamma)})_{\gamma\in\mathcal{C}_0(n)}$, with the distance $\d_2(A,A^\prime) = 2\varepsilon |A\Delta A^\prime|^{\frac{1}{2}}$ over the set $\I_-(n)\coloneqq \{\I_-(\gamma) : \gamma\in\mathcal{C}_0(n)\}$. For $n\in \mathbb{N}$, consider the quantities $N_n = 2^{2^n}$ and $N_0=1$. 

\begin{definition}
    Given a set $T$, a sequence $(\mathcal{A}_n)_{n\geq 0}$ of partitions of $T$ is \textit{admissible} when $|\mathcal{A}_n|\leq N_n$ and $\mathcal{A}_{n+1}\preceq \mathcal{A}_n$ for all $n\geq 0$.
\end{definition}

Given $t\in T$ and an admissible sequence $(\mathcal{A}_n)_{n\geq 0}$, $A_n(t)$ denotes the element of $\mathcal{A}_n$ that contains $t$. 

\begin{definition}
    Given $\theta > 0$ and a metric space $(T,\d)$, we define
    \begin{equation*}
        \gamma_\theta(T,\d) \coloneqq \inf_{(\mathcal{A}_n)_{n\geq 0}}\sup_{t\in T}\sum_{n\geq 0}2^{\frac{n}{\theta}}\diam(A_n(t)),
    \end{equation*}
where the infimum is taken over all admissible sequences. 
\end{definition}

\begin{theorem}[Majorizing Measure Theorem \cite{Talagrand_87}]\label{MMT} There is a universal constant $L>0$ such that
\begin{equation*}
    \frac{1}{L}\gamma_2(T,\d) \leq \mathbb{E}\left( \sup_{t\in T} X_t \right) \leq L\gamma_2(T,\d).
\end{equation*}
\end{theorem}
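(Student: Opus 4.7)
The plan is to establish the two inequalities separately, since Talagrand's majorizing measure theorem decomposes naturally into an upper bound proved by generic chaining and a lower bound proved by a recursive construction driven by a ``growth functional''.

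For the upper bound $\mathbb{E}(\sup_{t} X_t) \le L\gamma_2(T,\d)$, I would run generic chaining. Fix an admissible sequence $(\mathcal{A}_n)_{n\ge 0}$ and, for each $n$, choose a representative $\pi_n(t)\in A_n(t)$; since $N_0 = 1$ there is a single base point $\pi_0$. For every $t\in T$ telescope
\begin{equation*}
    X_t - X_{\pi_0} \;=\; \sum_{n\ge 1}\bigl(X_{\pi_n(t)} - X_{\pi_{n-1}(t)}\bigr).
\end{equation*}
The key counting observation is that $\{(\pi_n(t),\pi_{n-1}(t)) : t\in T\}$ has cardinality at most $N_n N_{n-1}\le N_{n+1}$. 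Applying the subgaussian estimate \eqref{Eq: Sub_gaussian_def} at threshold $u\cdot 2^{n/2}\,\d(\pi_n(t),\pi_{n-1}(t))$ to each pair and taking a union bound, one gets a probability at most $2N_{n+1}\exp(-u^2 2^n/2)$, which for $u$ larger than a universal constant is summable in $n$. Integrating the resulting tail bound yields
\begin{equation*}
    \mathbb{E}\sup_{t\in T}(X_t - X_{\pi_0}) \;\le\; L\sup_{t\in T}\sum_{n\ge 1} 2^{n/2}\,\d(\pi_n(t),\pi_{n-1}(t)) \;\le\; L'\sup_{t\in T}\sum_{n\ge 0} 2^{n/2}\diam(A_n(t)),
\end{equation*}
and taking the infimum over admissible sequences closes this direction.

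For the lower bound $\gamma_2(T,\d)\le L\,\mathbb{E}(\sup_t X_t)$, which is the deep direction, the plan is to build the admissible sequence from the process itself. Set $F(H)\coloneqq \mathbb{E}\sup_{t\in H} X_t$ for $H\subset T$ and prove Talagrand's growth condition: if $t_1,\dots,t_m\in H$ satisfy $\d(t_i,t_j)\ge a$ for $i\ne j$ and $B(t_i,a/4)$ are the corresponding balls, then
\begin{equation*}
    F(H)\;\ge\;\frac{a}{L}\sqrt{\log m}\;+\;\min_{1\le i\le m} F\bigl(H\cap B(t_i,a/4)\bigr).
\end{equation*}
This combines Sudakov-type minoration on the separated net $\{t_1,\dots,t_m\}$ (which yields the $\sqrt{\log m}$ gain) with concentration on each ball to isolate the residual suprema additively rather than only in expectation. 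With the growth condition in hand, I would build the partitions recursively: at stage $n$, inside each current piece, greedily select a maximal family of well-separated sub-balls at a geometrically decreasing radius, keep at most $N_{n+1}$ of them, and iterate. Iterating the growth condition along every branch shows $\sum_n 2^{n/2}\diam(A_n(t))\le L\,F(T)$ uniformly in $t$, which is the desired bound on $\gamma_2(T,\d)$.

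The main obstacle is the lower bound, and specifically establishing the growth condition in a form that preserves the residual suprema on each ball: Sudakov minoration by itself only controls the sup over a finite separated net, and one must carefully couple it with the sups on the sub-balls without losing the additive gain. The recursive partition then requires a delicate bookkeeping to verify $|\mathcal{A}_n|\le N_n$ at every stage. The upper bound, by contrast, is the routine part once the chaining bookkeeping is set up.
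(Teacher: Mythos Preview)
Your sketch outlines the standard Talagrand argument (generic chaining for the upper bound, the growth-functional/Sudakov-minoration recursion for the lower bound), and at the level of a plan it is correct. However, there is nothing to compare against: the paper does not prove Theorem~\ref{MMT}. It is quoted as a known result from \cite{Talagrand_87} (with a pointer to \cite{Talagrand_14}) and used as a black box, specifically at the end of the proof of Proposition~\ref{Prop: Bound.gamma_2} to pass from the Dudley-type expectation bound to a bound on $\gamma_2(\I_-(n),\d_2)$.

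So your proposal is not wrong, but it is answering a different question than the paper does. If your intent is to supply a proof the paper omits, you should flag it as such and either give a full argument or cite the relevant sections of \cite{Talagrand_14} where the growth condition and the recursive partition are carried out rigorously; the sketch as written leaves the genuinely hard step (the growth condition with the additive residual term, and the cardinality bookkeeping $|\mathcal{A}_n|\le N_n$) at the level of a plan.
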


    Given $\epsilon>0$, let $N(T,\d, \epsilon)$ be the minimal number of balls with radius $\epsilon$ necessary to cover $T$, using the distance $\d$. 
\begin{proposition}[Dudley's Entropy Bound \cite{Dudley67}]
    Let $(X_t)_{t\in T}$ be a family of centered random variables satisfying \eqref{Eq: Sub_gaussian_def} for some distance $\d$. Then there exists a constant $\overline{L}>0$ such that 
    \begin{equation*}
        \mathbb{E}\left[\sup_{t\in T}X_t\right]\leq \overline{L}\int_{0}^\infty \sqrt{\log N(T,\d,\epsilon)}d\epsilon.
    \end{equation*}
\end{proposition}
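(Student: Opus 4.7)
The plan is to deduce Dudley's bound from the Majorizing Measure Theorem (Theorem \ref{MMT}) stated immediately above. Since that theorem already provides $\mathbb{E}[\sup_{t\in T} X_t] \leq L\,\gamma_2(T,\d)$ for any centered sub-Gaussian process, it suffices to show that
\[
\gamma_2(T,\d) \leq C \int_0^\infty \sqrt{\log N(T,\d,\epsilon)}\,d\epsilon
\]
for some universal constant $C>0$, after which $\overline{L}=LC$ finishes the proof.

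To exhibit an admissible sequence realizing such a bound, for each $n \geq 0$ I would set $\epsilon_n \coloneqq \inf\{\epsilon > 0 : N(T,\d,\epsilon) \leq N_n\}$ with $N_n = 2^{2^n}$, pick a set $T_n \subset T$ of cardinality at most $N_n$ that is an $\epsilon_n$-net of $T$ (with $T_0$ a single point), and take $\mathcal{A}_n$ to be the Voronoi partition of $T$ generated by $T_n$, refined so that $\mathcal{A}_{n+1} \preceq \mathcal{A}_n$ by intersecting each $(n+1)$-cell with its parent in $\mathcal{A}_n$. The refinement inflates $|\mathcal{A}_{n+1}|$ by a factor of at most $|\mathcal{A}_n|$, which is reabsorbed by a harmless shift of indices since $N_n N_{n+1} \leq N_{n+2}$, and amounts only to an adjustment of the universal constant. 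Each cell of $\mathcal{A}_n$ has diameter at most $2\epsilon_n$, so directly from the definition of $\gamma_2$,
\[
\gamma_2(T,\d) \leq \sup_{t\in T}\sum_{n \geq 0} 2^{n/2}\diam(A_n(t)) \leq 2\sum_{n \geq 0} 2^{n/2}\epsilon_n.
\]

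The remaining step is the quantitative comparison of this series with the entropy integral. For $\epsilon \in [\epsilon_{n+1}, \epsilon_n)$ the minimality of $\epsilon_n$ forces $N(T,\d,\epsilon) > N_n$, hence $\sqrt{\log N(T,\d,\epsilon)} \geq 2^{n/2}\sqrt{\log 2}$. Integrating layer by layer and applying Abel summation yields
\[
\int_0^\infty \sqrt{\log N(T,\d,\epsilon)}\,d\epsilon \geq \sqrt{\log 2}\sum_{n \geq 0} 2^{n/2}(\epsilon_n - \epsilon_{n+1}) \geq c\sum_{n\geq 0} 2^{n/2}\epsilon_n
\]
for a universal $c>0$, completing the comparison. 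I expect the main technical obstacle to be the careful bookkeeping that guarantees both nestedness and the bound $|\mathcal{A}_n| \leq N_n$ after the Voronoi refinement; this is routine but fiddly. A fully self-contained alternative skips the MMT entirely and runs direct chaining: for each $t$ take $\pi_n(t) \in T_n$ closest to $t$, telescope $X_t - X_{\pi_0(t)} = \sum_{n \geq 0}(X_{\pi_{n+1}(t)} - X_{\pi_n(t)})$, apply the standard sub-Gaussian maximal bound $\mathbb{E}[\max_{i\leq N}Y_i] \leq C\sigma\sqrt{\log N}$ to each layer (using $\d(\pi_{n+1}(t),\pi_n(t)) \leq \epsilon_n + \epsilon_{n+1}$ and $|T_n||T_{n+1}| \leq N_{n+1}^2$), and sum the resulting estimates against the same dyadic comparison with the entropy integral.
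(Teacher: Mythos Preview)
The paper does not prove this proposition: it is stated as a classical result and simply cited to \cite{Dudley67}, with no argument given. Your sketch is correct and follows the standard textbook routes---either bounding $\gamma_2(T,\d)$ by the dyadic entropy sum and invoking Theorem~\ref{MMT}, or running the direct chaining argument---so there is nothing in the paper to compare against.
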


We also need the following result. 
\begin{theorem}\label{Theo: Theo_2.2.27_Talagrand} Given a metric space $(T,\d)$ and a family $(X_t)_{t\in T}$ of centered random variables satisfying \eqref{Eq: Sub_gaussian_def}, there is a universal constant $L>0$ such that, for any $u>0$,
\begin{equation*}
\mathbb{P}\left( \sup_{t\in T}X_t > L(\gamma_2(T,\d) + u\diam(T)) \right)\leq e^{-{u^2}},
\end{equation*}
where the $\diam(T)$ is the diameter taken with respect to the distance $\d$
\end{theorem}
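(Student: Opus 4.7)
The plan is to prove the tail bound by \emph{generic chaining}. Fix an admissible sequence $(\mathcal{A}_n)_{n\geq 0}$ of partitions of $T$ that is nearly optimal for $\gamma_2(T,\d)$, and for each $t\in T$ and $n\geq 0$ pick a representative $\pi_n(t)\in A_n(t)$. Since $|\mathcal{A}_0|\leq N_0=1$, there is a common anchor $\pi_0(t)\equiv t_0$, and because $T$ is finite the chain $(\pi_n(t))$ eventually reaches $t$. Writing
\begin{equation*}
X_t - X_{t_0}=\sum_{n\geq 0}\bigl(X_{\pi_{n+1}(t)}-X_{\pi_n(t)}\bigr),
\end{equation*}
it suffices to control these increments uniformly in $t$, since $X_{t_0}$ is itself centered sub-Gaussian of scale at most $\diam(T)$ (obtained from \eqref{Eq: Sub_gaussian_def} applied to the difference $X_{t_0}-X_t$ and the hypothesis $\mathbb{E}(X_t)=0$).

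For each level $n\geq 0$, apply the sub-Gaussian bound \eqref{Eq: Sub_gaussian_def} to every link with a threshold $c_n(u)=L_1(2^{n/2}+u)$ scaled by $\diam(A_n(t))\geq \d(\pi_{n+1}(t),\pi_n(t))$. Since the pair $(\pi_{n+1}(t),\pi_n(t))$ takes at most $|\mathcal{A}_{n+1}||\mathcal{A}_n|\leq N_{n+1}^{2}=2^{2^{n+2}}$ distinct values as $t$ varies over $T$, a union bound yields
\begin{equation*}
\mathbb{P}\bigl(\exists\,t\in T : |X_{\pi_{n+1}(t)}-X_{\pi_n(t)}|>c_n(u)\diam(A_n(t))\bigr)\leq 2\cdot 2^{2^{n+2}}\exp\!\bigl(-c_n(u)^2/2\bigr).
\end{equation*}
Choosing $L_1$ large enough that $c_n(u)^2/2\geq 2^{n+2}\log 2+(n+2)\log 2+u^2$ makes the right-hand side summable over $n\geq 0$ with total mass at most $e^{-u^2}$.

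On the complement of this bad event, the telescoping identity yields, for every $t\in T$,
\begin{equation*}
|X_t-X_{t_0}|\leq L_1\sum_{n\geq 0}2^{n/2}\diam(A_n(t))+L_1 u\sum_{n\geq 0}\diam(A_n(t)).
\end{equation*}
The first sum is bounded by $L_2\gamma_2(T,\d)$ after taking the infimum over admissible sequences. For the second sum, one truncates the chain at a level $n^\ast(u)$ chosen so that $N_{n^\ast}\cdot|T|$ is absorbed in the $u$ part of the threshold, and controls the residual jump $X_t-X_{\pi_{n^\ast}(t)}$ by a single sub-Gaussian bound of scale $\diam(T)$. Combining the two contributions together with the direct sub-Gaussian bound on $X_{t_0}$ produces the advertised estimate $\sup_t X_t\leq L\bigl(\gamma_2(T,\d)+u\diam(T)\bigr)$ on the good event.

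The main obstacle is producing the additive $u\diam(T)$ deviation term while keeping the failure probability of order $e^{-u^2}$: this is what forces the specific threshold $c_n(u)=L_1(2^{n/2}+u)$, whose two pieces must simultaneously swallow the $2^{n+2}\log 2$ cost of the union bound at level $n$, produce the desired $e^{-u^2}$ tail after summation, and collapse to a multiple of $\diam(T)$ rather than something growing with the depth of the chain. Balancing these three constraints is the delicate part of the argument; once it is in place, the rest of the proof is the routine telescoping and union-bound computation outlined above.
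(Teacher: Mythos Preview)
The paper does not prove this theorem; it cites \cite[Theorem~2.2.27]{Talagrand_14}. Your generic-chaining outline is the standard route and you correctly identify the crux, but two steps fail as written. The claim that $X_{t_0}$ is itself sub-Gaussian of scale $\diam(T)$ does not follow from \eqref{Eq: Sub_gaussian_def} plus centering: take $X_t\equiv Y$ for any centered $Y$, so all increments vanish and the hypothesis holds with $\d\equiv 0$, yet $Y$ is arbitrary. The usual fix is to adjoin a point with $X_{t_0}=0$ (for the paper's process $\Delta_A$ this is $A=\emptyset$) or to state the bound for $\sup_{s,t}|X_s-X_t|$; you should invoke that rather than a false implication.

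More seriously, your truncation does not remove the $u\sum_{n}\diam(A_n(t))$ term. With $c_n(u)=L_1(2^{n/2}+u)$ at every level the good-event bound carries $u\sum_{n<n^\ast}\diam(A_n(t))$, which can be of order $n^\ast\cdot\diam(T)$; and dropping the $u$ from those thresholds would lose the $e^{-u^2}$ tail at small $n$. The correct maneuver is the reverse of what you describe: choose $n_0$ with $2^{n_0}\asymp u^2$, collapse the \emph{bottom} of the chain into the single jump $X_{\pi_{n_0}(t)}-X_{t_0}$ with threshold $L_1u\,\diam(T)$ (the union bound over the $N_{n_0}\approx 2^{u^2}$ possible values of $\pi_{n_0}(t)$ is absorbed by $e^{-L_1^2u^2/2}$), and then chain from level $n_0$ upward with thresholds $L_12^{n/2}$ carrying no $u$. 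For $n\geq n_0$ one has $2^{n}\geq u^2$, so each $N_{n+1}^2e^{-L_1^22^{n-1}}$ is already at most $e^{-cu^2}$; on the good event the telescoping yields exactly $L_1u\,\diam(T)+L_1\sum_{n\geq n_0}2^{n/2}\diam(A_n(t))\leq L\bigl(\gamma_2(T,\d)+u\,\diam(T)\bigr)$.
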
 
A proof can be found in  \cite[Theorem 2.2.27]{Talagrand_14}. Using these results, the bound on the bad event $\mathcal{E}_1^c$ follows from the next proposition.
\begin{proposition}\label{Prop: Bound.gamma_2}
    Given $n\geq 0$, $d\geq 3$ and $\alpha > d$, there is a constant $L_1 \coloneqq L_1(d,\alpha)>0$  such that $$\gamma_2(\I_-(n),\d_2) \leq \varepsilon L_1 n.$$
\end{proposition}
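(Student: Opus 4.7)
The plan is to bound $\gamma_2(\I_-(n),\d_2)$ by applying the Majorizing Measure Theorem together with Dudley's entropy bound to the process $(\Delta_A)_{A\in\I_-(n)}$ and then controlling the resulting metric entropy via the coarse-graining $B_\ell$. Since the Gaussian law is symmetric, $\tau_A h$ has the same distribution as $h$ and hence $\mathbb{E}[\Delta_A]=0$, while Lemma~\ref{Lemma: Concentration.for.Delta.General} shows that $(\Delta_A)$ is centred sub-Gaussian for the canonical distance $\d_2(A,A')=2\varepsilon|A\Delta A'|^{1/2}$. Theorem~\ref{MMT} and the Dudley bound together yield
\begin{equation*}
\gamma_2(\I_-(n),\d_2)\leq L\,\mathbb{E}\Big[\sup_{A\in\I_-(n)}\Delta_A\Big]\leq L\overline{L}\int_0^\infty\sqrt{\log N(\I_-(n),\d_2,\epsilon)}\,d\epsilon,
\end{equation*}
so it suffices to show that this Dudley integral is bounded by $C(d,\alpha)\,\varepsilon\, n$.

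At each scale $\ell\geq 0$ the map $\gamma\mapsto B_\ell(\gamma)$ partitions $\I_-(n)$ into preimage classes indexed by $B\in B_\ell(\mathcal{C}_0(n))$. If $B_\ell(\gamma)=B_\ell(\gamma')$, the triangle inequality for the symmetric difference and Corollary~\ref{Cor: Bound_diam_B_ell} give
\begin{equation*}
|\I_-(\gamma)\Delta\I_-(\gamma')|\leq |\I_-(\gamma)\Delta B_\ell(\gamma)|+|\I_-(\gamma')\Delta B_\ell(\gamma')|\leq C_0\,n\,2^{r\ell},
\end{equation*}
so each class has $\d_2$-diameter at most $\epsilon_\ell\coloneqq 2\varepsilon\sqrt{C_0 n}\,2^{r\ell/2}$ and therefore $N(\I_-(n),\d_2,\epsilon_\ell)\leq |B_\ell(\mathcal{C}_0(n))|$. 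The multi-scale counting of coverings worked out in Propositions~\ref{Proposition1}, \ref{Prop. Bound.on.C_rl(gamma)} and \ref{Prop: Bound_on_rl_coverings}, combined with the choice of auxiliary scale $L=L(\ell)$ sketched in the introduction, will deliver the Fisher--Fr\"ohlich--Spencer type estimate
\begin{equation*}
|B_\ell(\mathcal{C}_0(n))|\leq \exp\bigl(c_3\,n\,2^{-r\ell(d-1)}\bigr).
\end{equation*}

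To conclude I will split Dudley's integral into the annuli $(\epsilon_{\ell-1},\epsilon_\ell]$ with $\epsilon_{-1}\coloneqq 0$. Using monotonicity of $N$ in $\epsilon$, the trivial bound $\log N\leq \log|\mathcal{C}_0(n)|\leq cn$ from Corollary~\ref{Cor: Bound_on_C_0_n} on the first annulus, and the coarse-graining estimate on the remaining ones, one obtains
\begin{equation*}
\int_0^\infty\sqrt{\log N(\epsilon)}\,d\epsilon\leq \epsilon_0\sqrt{cn}+\sum_{\ell\geq 1}(\epsilon_\ell-\epsilon_{\ell-1})\,c_3^{1/2}\,n^{1/2}\,2^{-r(\ell-1)(d-1)/2}\leq C\,\varepsilon\,n\sum_{\ell\geq 0}2^{-r\ell(d-2)/2}.
\end{equation*}
The geometric series converges precisely because $d\geq 3$, which is the crux of the argument: this hypothesis is exactly what balances the factor $2^{r\ell/2}$ coming from $\epsilon_\ell$ against the decay $2^{-r\ell(d-1)/2}$ inherited from the combinatorial bound on $|B_\ell(\mathcal{C}_0(n))|$. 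The genuine technical difficulty — the multi-scale counting of cubes at two coupled scales $r\ell$ and $rL(\ell)$ — is entirely encapsulated in the propositions cited above; the main obstacle for the present statement is merely the scale-by-scale book-keeping of the Dudley integral and the verification that the summability threshold matches the dimension restriction $d\geq 3$ appearing in the hypothesis.
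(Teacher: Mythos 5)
Your plan follows the paper's argument step for step: Majorizing Measure Theorem to reduce $\gamma_2$ to $\mathbb{E}[\sup\Delta_{\I_-(\gamma)}]$, Dudley's entropy bound, the coarse-graining covers indexed by $B_\ell(\mathcal{C}_0(n))$ with radius controlled by Corollary~\ref{Cor: Bound_diam_B_ell}, the entropy estimate of Proposition~\ref{Prop: Bound_on_boundary_of_admissible_sets}, and finally summability of the resulting geometric-type series precisely when $d\geq 3$. The only small imprecision is that the bound you quote for $|B_\ell(\mathcal{C}_0(n))|$ omits the polynomial factor $\ell^{\kappa+1}$ actually present in Proposition~\ref{Prop: Bound_on_boundary_of_admissible_sets}; this does not affect the conclusion, since the series $\sum_\ell \ell^{(\kappa+1)/2}2^{-r\ell(d-2)/2}$ still converges for $d\geq 3$.
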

As a direct consequence of this Proposition, we can control the probability of the bad event.
\begin{proposition}\label{Prop: Bound.bad.event.1}     
    There exists $C_1\coloneqq C_1(\alpha, d)$ such that $\mathbb{P}(\mathcal{E}_1^c)\leq e^{-\frac{C_1}{\varepsilon^2}}$ for any $\varepsilon^2<C_1$. 
\end{proposition}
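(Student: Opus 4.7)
The plan is to combine a union bound over the contour size $n$ with Theorem~\ref{Theo: Theo_2.2.27_Talagrand}, applied to the centered sub-Gaussian process $\bigl(\Delta_{\I_-(\gamma)}\bigr)_{\gamma \in \mathcal{C}_0(n)}$ on the metric space $(\I_-(n), \d_2)$. First I would verify the hypotheses of Theorem~\ref{Theo: Theo_2.2.27_Talagrand}: the identity $\Delta_A(\tau_A h) = -\Delta_A(h)$ combined with the invariance of the law of $h$ under the coordinate flip $\tau_A$ gives $\mathbb{E}[\Delta_A] = -\mathbb{E}[\Delta_A] = 0$, and \eqref{Eq: Tail.of.the.diff.of.Deltas} in Lemma~\ref{Lemma: Concentration.for.Delta.General} is exactly the sub-Gaussian increment condition \eqref{Eq: Sub_gaussian_def} with respect to $\d_2$. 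A union bound then gives
$$
\mathbb{P}(\mathcal{E}^c) \leq \sum_{n\geq 1}\mathbb{P}\Big(\sup_{\gamma\in \mathcal{C}_0(n)}\Delta_{\I_-(\gamma)}(h)>\tfrac{c_2 n}{4}\Big),
$$
and it suffices to bound each summand.

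For the $n$-th summand I would combine Proposition~\ref{Prop: Bound.gamma_2}, which yields $\gamma_2(\I_-(n),\d_2)\leq \varepsilon L_1 n$, with a diameter estimate for $(\I_-(n),\d_2)$. The key geometric input is that $\I_+(\gamma)$ and $\I_-(\gamma)$ are different connected components of $\Sp(\gamma)^c\setminus\Sp(\gamma)^{(0)}$, so every external-boundary point of $\I_-(\gamma)$ lies in $\Sp(\gamma)$. Consequently $|\fext \I_-(\gamma)|\leq n$, hence $|\fint \I_-(\gamma)|\leq 2d n$, and the discrete isoperimetric inequality $|A|^{(d-1)/d}\leq|\fint A|$ recalled after Definition~3.3 yields $|\I_-(\gamma)|\leq C_d n^{d/(d-1)}$. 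Therefore
$$
\diam(\I_-(n),\d_2)^2 \leq 4\varepsilon^2 \sup_{A,A'\in\I_-(n)}|A\Delta A'| \leq 8 C_d\, \varepsilon^2 n^{d/(d-1)}.
$$

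Now I would apply Theorem~\ref{Theo: Theo_2.2.27_Talagrand} with $u_n>0$ chosen so that $L\gamma_2(\I_-(n),\d_2)+Lu_n\diam(\I_-(n),\d_2)=c_2 n/4$. Requiring $\varepsilon$ small enough that $\varepsilon L_1 L\leq c_2/8$ forces $L\gamma_2\leq c_2 n/8$, and hence $u_n\geq c_2 n/(8L\diam_n)$. Plugging in the diameter bound gives $u_n^2\geq c\, n^{(d-2)/(d-1)}/\varepsilon^2$ for some $c=c(\alpha,d)>0$, so Theorem~\ref{Theo: Theo_2.2.27_Talagrand} delivers
$$
\mathbb{P}\Big(\sup_{\gamma\in\mathcal{C}_0(n)}\Delta_{\I_-(\gamma)}>\tfrac{c_2n}{4}\Big)\leq \exp\!\Big(-\frac{c\, n^{(d-2)/(d-1)}}{\varepsilon^2}\Big).
$$
Since $d\geq 3$ makes the exponent $(d-2)/(d-1)$ strictly positive, for $\varepsilon^2$ sufficiently small the series $\sum_{n\geq 1}\exp(-c n^{(d-2)/(d-1)}/\varepsilon^2)$ is dominated by its first term up to a multiplicative constant, yielding $\mathbb{P}(\mathcal{E}^c)\leq e^{-C_1/\varepsilon^2}$ after renaming constants.

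The delicate step is the diameter estimate: in this multidimensional long-range setting $\I_-(\gamma)$ may be highly disconnected and sparse, and without the observation that $\I_\pm$ are separated by $\Sp(\gamma)$ (hence the isoperimetric control) the diameter of $(\I_-(n),\d_2)$ would be too large for Theorem~\ref{Theo: Theo_2.2.27_Talagrand} to produce a summable tail. The remainder of the argument is the standard chaining-plus-concentration pattern, built on top of the nontrivial input from Proposition~\ref{Prop: Bound.gamma_2}.
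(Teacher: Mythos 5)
Your proof is correct and follows essentially the same route as the paper's: union bound over $n$, isoperimetric diameter estimate for $(\I_-(n),\d_2)$ of order $\varepsilon n^{d/(2(d-1))}$, the bound $\gamma_2(\I_-(n),\d_2)\leq\varepsilon L_1 n$ from Proposition~\ref{Prop: Bound.gamma_2}, and Theorem~\ref{Theo: Theo_2.2.27_Talagrand} with $u$ of order $n^{(d-2)/(2(d-1))}/\varepsilon$, after which $d\geq 3$ makes the tail series summable and dominated by its first term. You additionally make explicit two points the paper leaves implicit, namely that the process is centered (via the antisymmetry $\Delta_A(\tau_A h)=-\Delta_A(h)$ and $h\overset{d}{=}\tau_A h$) and that \eqref{Eq: Tail.of.the.diff.of.Deltas} is exactly the sub-Gaussian increment hypothesis for $\d_2$, as well as the geometric fact $\fext\I_-(\gamma)\subset\Sp(\gamma)$ behind the diameter bound.
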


\begin{proof}   
 By the union bound,
\begin{align}\label{Eq: Union_bound_bad_event}
    \mathbb{P}\left({\sup_{\substack{\gamma\in\mathcal{C}_0}} \frac{\Delta_{\I_-(\gamma)}(h)}{c_2|\gamma|} > \frac{1}{4}}\right) \leq \sum_{n=2}^\infty \mathbb{P}\left({\sup_{\substack{\gamma\in\mathcal{C}_0(n)}} \Delta_{\I_-(\gamma)}(h) > \frac{c_2}{4}}|\gamma|\right). 
\end{align}
Let $\gamma,\gamma^\prime\in \mathcal{C}_0(n)$ be two contours satisfying $\diam(\I_-(n)) = \d_2(\I_-(\gamma),\I_-(\gamma^\prime))$, where the diameter is in the $\d_2$ distance. By the isoperimetric inequality,
\begin{equation*}
    \diam(\I_-(n))= 2\varepsilon{|\I_-(\gamma)\Delta \I_-(\gamma^\prime)|}^{\frac{1}{2}} \leq 2\sqrt{2}\varepsilon n^{(\frac{d}{d-1})\frac{1}{2}} = 2\sqrt{2}\varepsilon n^{(\frac{1}{2} + \frac{1}{2(d-1)})}.
\end{equation*}
Together with Proposition \ref{Prop: Bound.gamma_2}, this yields
    \begin{align*}
  \frac{c_2}{4}|\gamma| &= L\left[\varepsilon L_1 n + \varepsilon L_1 \left(\frac{c_2}{4\varepsilon L_1 L} - 1\right)n\right]\\
    &\geq   L\left[\gamma_2(\I_-(n),\d_2) +  \frac{C_1^\prime}{\varepsilon}n^{\frac{1}{2} - \frac{1}{2(d-1)}}\diam(\I_-(n))\right],
\end{align*}

with $C_1^\prime = \frac{c_2}{16\sqrt{2}L}$ and $\varepsilon<\frac{c_2}{8L_1L}$. Applying Theorem \ref{Theo: Theo_2.2.27_Talagrand} with $u = \frac{C_1^\prime}{\varepsilon}n^{\frac{1}{2} - \frac{1}{2(d-1)}}$, we have
\begin{align*}
    \mathbb{P}\left({\sup_{\substack{\gamma\in\mathcal{C}_0(n)}} \Delta_{\I_-(\gamma)}(h) > \frac{c_2}{4}}|\gamma|\right) &=  \mathbb{P}\left({\sup_{\substack{\I\in\I_-(n)}} \Delta_{\I}(h) > \frac{c_2}{4}}n\right) \\
    &\leq \mathbb{P}\left({\sup_{\substack{\I\in\I_-(n)}} \Delta_{\I}(h) >    L\left[\gamma_2(\I_-(n),\d_2) +  \frac{C_1^\prime}{\varepsilon}n^{\frac{1}{2} - \frac{1}{2(d-1)}}\diam(\I_-(n))\right]}\right) \\ 
    &\leq \exp{\left\{ - \frac{C_1^{\prime2}n^{1 - \frac{1}{(d-1)}}}{\varepsilon^2}\right\}}. 
\end{align*}
Using this back in equation \eqref{Eq: Union_bound_bad_event}, we conclude that 
\begin{equation*}
       \mathbb{P}\left({\sup_{\substack{\gamma\in\mathcal{C}_0}} \frac{\Delta_{\I_-(\gamma)}(h)}{c_2|\gamma|} > \frac{1}{4}}\right) \leq \sum_{n=2}^\infty \exp{\left\{ - \frac{C_1^{\prime2}n^{1 - \frac{1}{(d-1)}}}{\varepsilon^2}\right\}} \leq e^{-\frac{C_1}{\varepsilon^2}},
\end{equation*}
for a suitable constant $C_1\coloneqq C_1(\alpha, d)$ smaller than $\frac{{C_1^\prime}^2}{2}$ and $\varepsilon< C_1$. The dependency on $\alpha$ is due to the dependency on $c_2(\alpha, d)$.

\end{proof}

The next two subsections are dedicated to proving Proposition \ref{Prop: Bound.gamma_2}.

    \subsection{Coarse-graining}\label{Subsection: Coarse-graining} 

    We will apply the previous probability estimates for the family $(\Delta_{\I}(h))_{\I\in\I_-(n)}$. We use the coarse-graining idea introduced in \cite{FFS84} to construct the cover by balls in Dudley's entropy bound.  For each $\ell>0$ and each contour ${\gamma\in\mathcal{C}_0(n)}$, we will associate a region $B_\ell(\gamma)$ that approximates the interior $\I_-(\gamma)$ in a scaled lattice, with the scale growing with $\ell$. This is done in a way that two interiors approximated by the same region are in a ball in the distance $\d_2$ with a fixed radius, depending on $\ell$.

An $r\ell$-cube $C_{r\ell}$ is \textit{admissible} if more than a  half of its points are inside $\I_-(\gamma)$. Thus, the set of admissible cubes is
\begin{equation*}
    \mathfrak{C}_\ell(\gamma) \coloneqq \left\{C_{r\ell} : |C_{r\ell}\cap \I_-(\gamma)| \geq \frac{1}{2}|C_{r\ell}|\right\}.
\end{equation*}
With this notion of admissibility, two contours with the same admissible cubes should be close in distance $\d_2$. Consider functions $B_\ell:\mathcal{E}^+_\Lambda \xrightarrow[]{} \mathcal{P}(\Z^d)$, with $\mathcal{P}(\Z^d) \coloneqq \{A:A\Subset\Z^d\}$, that takes contours $\gamma$ to $B_\ell(\gamma) \coloneqq B_{\mathfrak{C}_{\ell}(\gamma)}$, the region covered by the admissible cubes. We will be interested in counting  the image of $B_\ell$ by $\mathcal{C}_0(n)$, that is, $|B_\ell(\mathcal{C}_{0}(n))| = |\{B:B=B_\ell(\gamma)\text{ for some }\gamma \in \mathcal{C}_0(n)\}|$. Notice that $B_\ell(\gamma)$ is uniquely determined by $\partial B_\ell(\gamma)$. Given any collection $\C_{m}$, we define the \textit{edge boundary of } $\C_m$ as 
$$
\partial \C_m \coloneqq \{ \{C_{m}, C^\prime_{m}\} : C_{m} \in \C_m, \ C_m^\prime \notin \C_m \textrm{ and} \  C_m^\prime \text{ shares a face with }C_m\}.
$$ 
We also define the \textit{inner boundary of }$\C_m$ as
$$
\fint \C_m\coloneqq \{ C_{m}\in \C_m : \exists C_m^\prime \notin \C_m \textrm{ such that }  \{C_m,C_m^\prime\}\in\partial \C_m\}.
$$ 
With this definition, it is clear that $\partial B_\ell(\gamma)$ is uniquely determined by $\partial \mathfrak{C}_\ell(\gamma)$. Hence, defining $\partial \mathfrak{C}_{r\ell}(\mathcal{C}_0(n)) \coloneqq \{\partial\C_{r\ell} : \C_{r\ell}=\mathfrak{C}_\ell(\gamma )\text{ for some }\gamma \in \mathcal{C}_0(n)\}$, we have $|B_\ell(\mathcal{C}_0(n))| = |\partial\mathfrak{C}_\ell(\mathcal{C}_0(n))|$. In a similar fashion we define $\fint \mathfrak{C}_{r\ell}(\mathcal{C}_0(n)) \coloneqq \{\fint\C_{r\ell} : \C_{r\ell}=\mathfrak{C}_\ell(\gamma )\text{ for some }\gamma \in \mathcal{C}_0(n)\}$.

The main result in this section is an upper bound on the number of cubes in $\fint\mathfrak{C}_\ell(\gamma)$. The precise statement is given in Proposition \ref{Proposition1}, which was stated originally in \cite{FFS84} for $d=3$ and $\I_-(\gamma)$ simply connected for all contours, but it can be extended to $d\geq 2$ with no restriction on the interiors, see \cite{Bovier.06}. As we could not find a detailed proof anywhere, we provide one here.

Given a rectangle $\mathcal{R} = [1,r_1]\times[1,r_2]\times\dots\times[1,r_d]$, consider $\R_i\coloneqq\{x\in \R : x_i=1\}$ the face of $\R$ that is perpendicular to the direction $e_i$, for $i=1,\dots,d$. The line that connects a point $x\in \R_i$ to a point in the opposite face of $\R_i$ is $\ell_x^i \coloneqq \{ x + ke_i : 1\leq k\leq r_i\}$. Given $A\subset \Z^d$, the projection of $A\cap \R$ into the face $\R_i$ is
\begin{equation*}
    \calP_i(A\cap\R) \coloneqq \{x\in\R_i : \ell_x^i \cap A \neq \emptyset\}.
\end{equation*}
\input{Figures/Figura.4}

In many situations, we will split the projections into \textit{good} and \textit{bad} points. The set of good points is $\calP_i^{G}(A\cap\R)\coloneqq \{x\in \calP_i(A\cap \R) : \ell_x^i \cap (\R\setminus A) \neq \emptyset\}$, that is, there exist a point in $\ell_x^i\cap \R$ that is not in $A$.  The bad points are defined as $\calP^{B}_i(A\cap\R) \coloneqq \calP_i(A\cap\R)\setminus \calP_i^G(A\cap\R)$. See Figures \ref{fig: Figura.4} and \ref{fig: Figura5} for examples. 

\begin{figure}[H] 
	\centering

\tikzset{every picture/.style={line width=0.75pt}} 

\begin{tikzpicture}[x=0.75pt,y=0.75pt,yscale=-1,xscale=1]

\draw   (392.25,208) -- (269.17,208) -- (269.17,12.4) -- (392.25,12.4) -- cycle ;
\draw  [fill={rgb, 255:red, 155; green, 155; blue, 155 }  ,fill opacity=1 ] (391.9,153.85) .. controls (391.19,193.32) and (364.9,155.43) .. (349.17,148.66) .. controls (333.45,141.89) and (270.23,182.59) .. (269.53,161.85) .. controls (268.82,141.1) and (269.17,122.11) .. (269.17,103.12) .. controls (269.17,84.13) and (298.36,97.85) .. (305.22,101.19) .. controls (312.07,104.53) and (365,118.77) .. (347.41,92.4) .. controls (329.83,66.03) and (306.51,41.19) .. (329.83,43.17) .. controls (353.15,45.14) and (340.5,51.68) .. (353.22,58.11) .. controls (365.93,64.54) and (372.44,68.26) .. (373.96,73.59) .. controls (375.49,78.91) and (391.68,71.37) .. (391.9,76.4) .. controls (392.12,81.43) and (392.6,114.38) .. (391.9,153.85) -- cycle ;
\draw  [dash pattern={on 0.84pt off 2.51pt}]  (268.82,62.33) -- (391.9,62.33) ;
\draw [shift={(268.82,62.33)}, rotate = 0] [color={rgb, 255:red, 0; green, 0; blue, 0 }  ][fill={rgb, 255:red, 0; green, 0; blue, 0 }  ][line width=0.75]      (0, 0) circle [x radius= 3.35, y radius= 3.35]   ;
\draw  [dash pattern={on 0.84pt off 2.51pt}]  (269.53,122.11) -- (392.6,122.11) ;
\draw [shift={(269.53,122.11)}, rotate = 0] [color={rgb, 255:red, 0; green, 0; blue, 0 }  ][fill={rgb, 255:red, 0; green, 0; blue, 0 }  ][line width=0.75]      (0, 0) circle [x radius= 3.35, y radius= 3.35]   ;

\draw (357.41,120.14) node [anchor=north west][inner sep=0.75pt]    {$A$};
\draw (253.5,53.62) node [anchor=north west][inner sep=0.75pt]  [font=\small]  {$p$};
\draw (252.68,111.87) node [anchor=north west][inner sep=0.75pt]  [font=\small]  {$p^{\prime }$};

\end{tikzpicture}

	\caption{Considering $A\cap\R$ the gray region, both points $p,p^\prime \in \mathcal{P}_1(A\cap \R)$ are in the projection, but $p$ is a good point and $p^\prime$ is a bad point. The doted lines represent $\ell_p^1$ and $\ell_{p^\prime}^1$.} \label{fig: Figura5}
\end{figure}

Given $x\in \calP_i^{G}(A\cap\R)$, by definition of the projection, there exists a point in $\ell_x^i\cap A$. Therefore, there exists a point $p\in \ell_x^i$ such that $p\in\fext A \cap \R$. As all lines are disjoint, we conclude that 
\begin{equation}\label{Eq: upper.bound.good.points}
     |\calP_i^{G}(A\cap\R)|\leq |\fext A \cap \R|.
\end{equation}
 We now prove two auxiliary lemmas.
 
\begin{lemma}\label{Lemma: Geo.discreta.1}
    Given $d\geq 2$, for any family of positive integers $(r_i)_{i=1}^d$ with $R\leq r_i \leq 2R$ for some $R\geq 2$, $0<\lambda < 1$ and $A\subset\Z^d$, there exists a constant $c\coloneqq c(d, \lambda)$ such that, if 
    \begin{equation}\label{Eq: hypothesis.lemma.1}
         |\calP_i(A\cap \R)| \leq \lambda|\R_i|
    \end{equation}
    for all $i= 1,\dots, d$, then 
    \begin{equation*}
        \sum_{i=1}^d |\calP_{i}(A\cap \R)|\leq c|\fext A\cap \R|,
    \end{equation*}
    where $\R=[1,r_1]\times\dots\times [1,r_d]$.
\end{lemma}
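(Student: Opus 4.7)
The plan is to decompose each projection as $\calP_i(A\cap\R) = \calP_i^G(A\cap\R) \sqcup \calP_i^B(A\cap\R)$ into its good and bad parts. For the good part, inequality \eqref{Eq: upper.bound.good.points} gives $|\calP_i^G(A\cap\R)| \leq |\fext A \cap \R|$ directly, since each good line contributes at least one $A$-to-$A^c$ transition inside $\R$. Summing over $i$ yields $\sum_{i=1}^d |\calP_i^G(A\cap\R)| \leq d|\fext A \cap \R|$, which already establishes the good-point contribution to the lemma.

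For the bad part, the plan is to exploit the tube structure. Each $x \in \calP_i^B(A\cap\R)$ yields a full tube $\ell_x^i \subset A$ of length $r_i$. Projecting the union of all such tubes onto a face $\R_j$ with $j \neq i$ produces a cylinder of cardinality $r_i \cdot |\pi_{i,j}(\calP_i^B(A\cap\R))|$ contained in $\calP_j(A\cap\R)$, where $\pi_{i,j}$ denotes the canonical projection onto the $(d-2)$-dimensional face $\R_i \cap \R_j$. Combined with the hypothesis $|\calP_j(A\cap\R)| \leq \lambda|\R_j|$ and the identity $|\R_j|/r_i = |\pi_{i,j}(\R_i)|$, this forces
\[
|\pi_{i,j}(\calP_i^B(A\cap\R))| \leq \lambda\,|\pi_{i,j}(\R_i)| \quad \text{for every } j \neq i,
\]
so the bad set has small shadows on every $(d-2)$-face of $\R_i$.

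Next I would relate the internal boundary of $\calP_i^B(A\cap\R)$ inside the $(d-1)$-box $\R_i$ to $\fext A \cap \R$. For each edge $\{x, x+e_j\}$ with $x \in \calP_i^B$ and $x + e_j \in \R_i \setminus \calP_i^B$, the tube $\ell_x^i$ is in $A$ while $\ell_{x+e_j}^i$ has some point $p' \notin A$; the pair $(p'-e_j, p')$ produces a boundary point $p' \in \fext A \cap \R$, and the assignment is injective in $p'$. Moreover, when the neighboring line $\ell_{x+e_j}^i$ is entirely empty rather than merely good, each such edge in $\R_i$ lifts to all $r_i$ points of $\ell_{x+e_j}^i$ as points of $\fext A \cap \R$. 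Using the hypothesis $|\calP_j(A\cap\R)| \leq \lambda|\R_j|$ once more to bound how often the neighbor can be good (rather than empty), together with a discrete Cheeger / Poincaré-type inequality in the box $\R_i$ applied to $\calP_i^B$ (whose complement has proportion at least $1-\lambda$), one deduces $\sum_i|\calP_i^B(A\cap\R)| \leq c'(d,\lambda)|\fext A \cap \R|$.

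The main obstacle will be the bad-point estimate. A naive relative isoperimetric inequality in a $(d-1)$-dimensional box gives only $|\partial_{\R_i}\calP_i^B(A\cap\R)| \gtrsim |\calP_i^B(A\cap\R)|^{(d-2)/(d-1)}$, which is sublinear and too weak for the desired linear bound. The subtlety is to simultaneously exploit two features: the length $r_i$ of the tubes (so that each internal boundary edge in $\R_i$ contributes up to $r_i$ points of $\fext A \cap \R$ when the neighboring line is empty) and the projection hypothesis in every direction $j \neq i$ (which ensures that a positive fraction of neighbors of $\calP_i^B(A\cap\R)$ lie in $\R_i \setminus \calP_i$ rather than in $\calP_i^G$). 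Balancing these two mechanisms is what allows the final constant $c(d,\lambda)$ to depend only on $d$ and $\lambda$, independently of $R$.
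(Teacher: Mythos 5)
Your proposal takes a genuinely different route from the paper's: you split each projection into good and bad points globally and aim for a direct Cheeger/Poincaré argument, whereas the paper inducts on the dimension, slicing $\R$ into layers $L_k = \{x_d = k\}$ and reducing to the $(d-1)$-dimensional statement with a slightly worsened parameter $\frac{\lambda+1}{2}$. Your good-point bound $\sum_i |\calP_i^G(A\cap\R)| \le d|\fext A\cap\R|$ is correct, as is your cylinder observation $r_i\,|\pi_{i,j}(\calP_i^B(A\cap\R))| \le |\calP_j(A\cap\R)| \le \lambda|\R_j|$, yielding small shadows of $\calP_i^B$ on every $(d-2)$-face.

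However, there is a genuine gap in the bad-point estimate, and you essentially flag it yourself. The relative Cheeger inequality in the $(d-1)$-box does give the linear bound $|\partial_{\R_i}\calP_i^B(A\cap\R)| \gtrsim |\calP_i^B(A\cap\R)|/R$ (since $|\calP_i^B| \le \lambda|\R_i|$), but to convert this to $|\fext A\cap\R|$ you need each boundary edge to contribute $\sim r_i$ points to $\fext A\cap\R$, and this only happens when the neighbor in $\R_i$ is an \emph{empty} line ($x+e_j\in\R_i\setminus\calP_i$); a bad-to-good edge yields only $O(1)$ boundary points. Your stated remedy — that the projection hypothesis in every direction $j\ne i$ ``ensures that a positive fraction of neighbors of $\calP_i^B$ lie in $\R_i\setminus\calP_i$ rather than in $\calP_i^G$'' — is asserted but not established, and it does not follow from anything you wrote: the small-shadow property is a statement about projections of $\calP_i^B$ and says nothing about where the relative boundary $\partial_{\R_i}\calP_i^B$ lands. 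Nothing in the hypotheses forbids $\calP_i^G$ from being a thin shell completely enclosing $\calP_i^B$ inside $\R_i$, in which case \emph{every} relative-boundary edge of $\calP_i^B$ is bad-to-good, and your chain of bounds degenerates to $|\calP_i^B| \lesssim R\,|\fext A\cap\R|$, which has the extra factor of $R$ you were trying to avoid. (In that geometric configuration $\fext A$ does in fact turn out to be large for other reasons — the boundary of the union of full tubes is large — but detecting that requires looking in the other coordinate directions, which is exactly what the paper's layer-by-layer induction organizes.) So the final paragraph of your plan is not a proof; the ``balancing'' you describe is the hard part, and the paper's inductive slicing is precisely the mechanism that makes it work: on each layer $\R^k$, either all slice-projections are below $\frac{\lambda+1}{2}$ of their faces (apply the inductive hypothesis), or some slice-projection is large, in which case the slice-level bad-point bound $|\calP_j^B(A\cap\R^k)|\le\lambda|(\R^k)_j|$ forces $|\calP_j^G(A\cap\R^k)|\gtrsim(1-\lambda)|(\R^k)_j|$, i.e.\ a linear supply of good lines, and that is what carries the day.
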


\begin{proof}
The proof will be done by induction on the dimension. For $d=2$, take a rectangle ${\R=[1,r_1]\times[1,r_2]}$. If there is no bad points in $\calP_1(A\cap\R)$, then 
\begin{align}\label{Eq: Bound.1.on.P.1}
    |\calP_1(A\cap\R)| = |\calP_1^G(A\cap \R)| \leq |\fext A \cap \R|.
\end{align}

If there is a bad point $p=(1,p_2)\in \calP_1^B(A\cap\R)$, $\ell_p^1\subset A\cap \R$  by definition of bad point. As $|\calP_1(A\cap \R)| \leq \lambda|\R_1| < |\R_1|$, there is a point $p^\prime = (1,p_2^\prime)\in \R_1\setminus \calP_1(A\cap \R)$ that is in the face $\R_1$ but not in the projection. By definition of the projection, $\ell_{p^\prime}^1\in A^c\cap \R$. Therefore, for any $1\leq k\leq r_1$, $(k,p_2)\in  A\cap \R$ and $(k,p^\prime_2)\in  A^c\cap \R$, we can find a point $p^k=(k, p^k_2) \in \fext A \cap \R$. Since $p^{k_1}\neq p^{k_2}$ for every $k_1\neq k_2$, we have $r_1 \leq |\fext A \cap \R|$, hence
\begin{equation}\label{Eq: Bound.2.on.P.1}
   |P_1(A\cap \R)| \leq  |\R_1| = {r_2}\leq  2R \leq 2r_1 \leq  2|\fext A \cap \R|.
\end{equation}
A completely analogous argument can be done to bound $|P_2(A\cap \R)|$, and we conclude that
\begin{equation*}
    \sum_{i=1}^2|\calP_i(A\cap \R)|\leq 4|\fext A \cap \R|,
\end{equation*}
and take $c(2,\lambda)=4$. Suppose the lemma holds for $d-1$ and fix a rectangle $\R=[1,r_1]\times\dots\times[1,r_d]$. We split $\R$ into layers $L_k = \{x\in\Z^d : x_d = k\}$, for $k=1,\dots, r_d$. We can then partition the projection and write
\begin{equation*}
|\calP_i(A\cap \R)| = \sum_{k=1}^{r_d} |\calP_i(A\cap \R)\cap L_k|,    
\end{equation*}
for any $i\in\{1,\dots, d-1\}$. This yields

\begin{align}\label{Eq: Partition.sum.proj.}
    \sum_{i=1}^d|\calP_i(A\cap \R)|    &=  \sum_{k=1}^{r_d}\sum_{i=1}^{d-1}|\calP_i(A\cap \R)\cap L_k| + |\calP_d(A\cap \R)|.
\end{align}
Notice now that $\calP_i(A\cap \R)\cap L_k = \calP_i(A\cap (\R\cap L_k))$. Defining the rectangle $\R^k \coloneqq \R\cap L_k$, for every point $p\in \calP_j^B(A\cap \R^k)$, $\ell_p^j \subset A\cap \R^k$. Moreover, we can associate every point $x\in \ell_p^j$ in the line with a point $x^\prime\in \calP_d(A\cap\R)$ by taking $x_m^\prime = x_m$ for $m \leq d-1$ and $x_d^\prime = 1$, therefore

\begin{equation*}
    r_j|\calP_j^B(A\cap \R^k)| = \sum_{p\in \calP_j^B(A\cap \R^k)}|\ell_p^j| \leq |\calP_d(A\cap\R)|.
\end{equation*}

\begin{figure}[H] 
	\centering
	\includegraphics[scale=0.13]{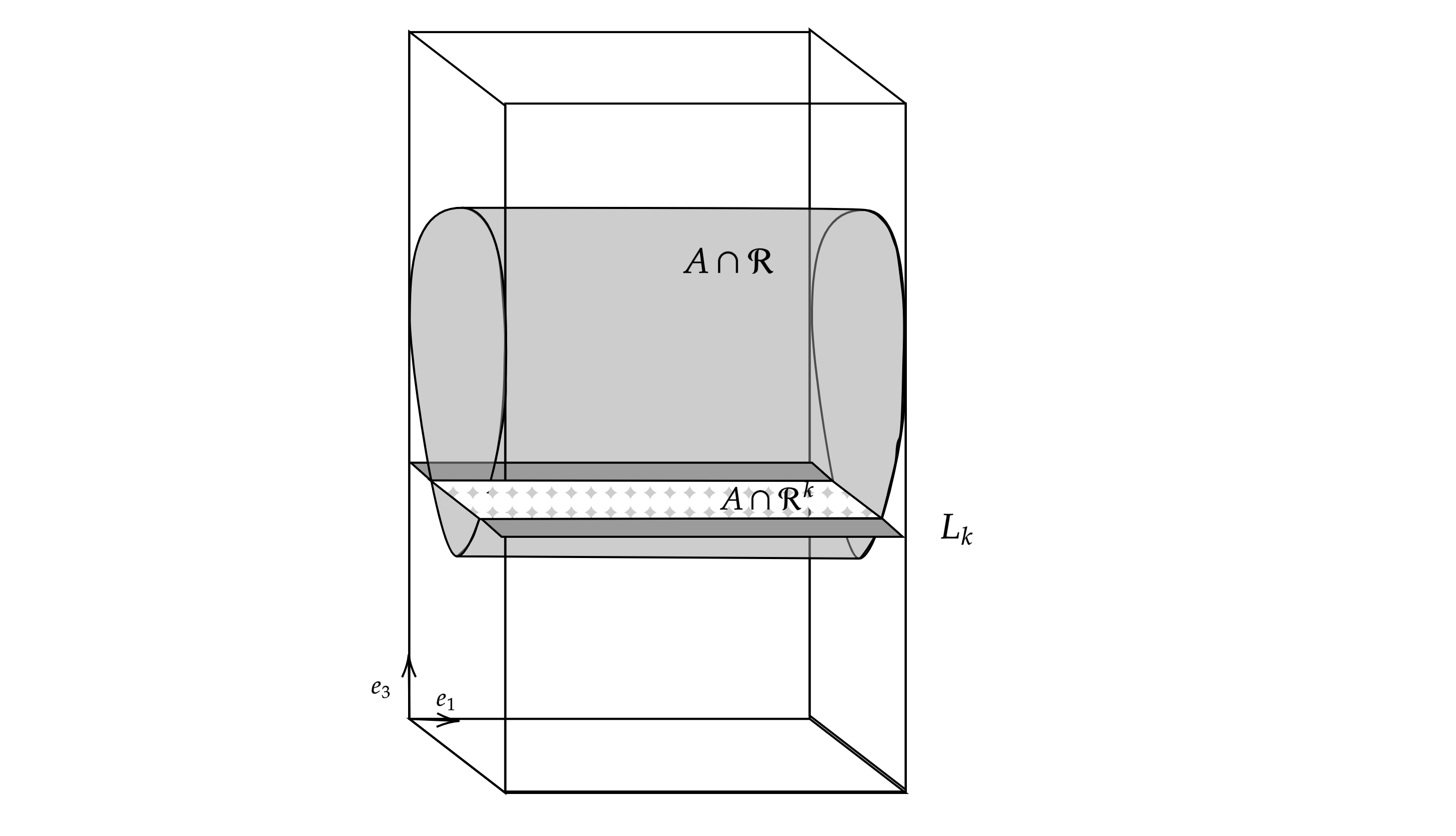}
 
	\caption{Considering $A\cap\R$ the light gray region, the dark gray region is a layer $L_k$  and the dotted region is $A\cap\R^k$, the restriction of $A$ to the layer $L_k$.} \label{fig: Figura6}
\end{figure}

Using the hypothesis \eqref{Eq: hypothesis.lemma.1} we conclude that

\begin{equation}\label{Eq: upper.bound.projection.i.bad.points}
     |\calP_j^B(A\cap \R^k)| \leq \lambda\frac{|\R_d|}{r_j} = \lambda \frac{ \prod_{q\neq d}r_q}{r_j} =  \lambda \prod_{q\neq j,d}r_q = \lambda |(\R^k)_j|.
\end{equation}
We consider two cases:
    \begin{itemize}
        \item[(a)] If $|\calP_i(A\cap \R^k)| \leq \frac{\lambda +1}{2}|(\R^k)_i|$, for all $i\leq d-1$, then we are in the hypothesis of the lemma in $d-1$ and therefore
\begin{equation}\label{Eq: Primeiro.bound.soma.projecoes}
    \sum_{i=1}^{d-1} |\calP_i(A\cap \R^k)| \leq c\left(d-1, \frac{\lambda + 1}{2}\right)|\fext A\cap \R^k|.
\end{equation}
    \item [(b)] If there exists $j\in\{1,\dots,d-1\}$ satisfying $|\calP_j(A\cap \R^k)| > \frac{\lambda +1}{2}|(\R^k)_j|$, by \eqref{Eq: upper.bound.projection.i.bad.points} we have $|\calP_j^G(A\cap \R^k)| = |\calP_j(A\cap \R^k)| - |\calP_j^B(A\cap \R^k)| \geq \frac{1-\lambda}{2}|(\R^k)_j|$, hence
\begin{equation*}
    |(\R^k)_j| \leq  \frac{2}{1 - \lambda}|\fext A \cap \R^k|.
\end{equation*}
    Using that $|(\R^k)_i| \leq (2R)^{d-2} \leq 2^{d-2}|(\R^k)_j|$ for every $i\in\{1,\dots,d\}$, we conclude that 
\begin{equation}\label{Eq: Segundo.bound.soma.projecoes}
    \sum_{i=1}^{d-1} |\calP_i(A\cap\R^k)| \leq \sum_{i=1}^{d-1} |(\R^k)_i| \leq (d-1)2^{d-2}|(\R^k)_j| \leq \frac{(d-1)2^{d-1}}{1 - \lambda}|\fext A \cap \R^k|.
\end{equation}
\end{itemize}

In both cases, we were able to give an upper bound to the sum of projections by a constant times the size of the boundary of $A$ in $\R^k$. Applying \eqref{Eq: Primeiro.bound.soma.projecoes} and \eqref{Eq: Segundo.bound.soma.projecoes} back in \eqref{Eq: Partition.sum.proj.} we get
\begin{align*}
    \sum_{i=1}^d|\calP_i(A\cap \R)| &\leq
    \sum_{k=1}^{r_d}\left[c\left(d-1, \frac{\lambda + 1}{2}\right)+ \frac{(d-1)2^{d-1}}{1 - \lambda}\right]|\fext A \cap \R \cap L_k| + |\calP_d(A\cap \R)|\\
    &=\left[c\left(d-1, \frac{\lambda + 1}{2}\right)+ \frac{(d-1)2^{d-1}}{1 - \lambda}\right]|\fext A \cap \R| + |\calP_d(A\cap \R)|.
\end{align*}
We finish the proof by noticing that we can repeat this same argument but now splitting $\R$ into layers $L_k = \{x\in \R : x_j = k\}$. Doing so, we have that  
\begin{equation*}
    \sum_{i=1}^d|\calP_i(A\cap \R)| \leq
     \left[c\left(d-1, \frac{\lambda + 1}{2}\right)+ \frac{(d-1)2^{d-1}}{1 - \lambda}\right]|\fext A \cap \R| + |\calP_j(A\cap \R)|
\end{equation*}
for any $j\in\{1,\dots, d\}$. Summing both sides in $j$ we conclude

\begin{equation}
    \sum_{i=1}^d|\calP_i(A\cap \R)| \leq \frac{d}{d-1}\left[c\left(d-1, \frac{\lambda + 1}{2}\right)+ \frac{(d-1)2^{d-1}}{1 - \lambda}\right]|\fext A \cap \R|,
\end{equation}
which proves our claim if we take $c(d,\lambda) \coloneqq \frac{d}{d-1}\left[c(d-1, \frac{\lambda + 1}{2})+ \frac{(d-1)2^{d-1}}{1 - \lambda}\right] = 2d + \frac{(d-2)d2^{d-1}}{1-\lambda}$.
\end{proof}

\begin{remark}
Previous lemma can be proved when $R\leq r_i \leq \kappa R$ for any $\kappa>1$. When applying the lemma, we will choose $\lambda=\frac{7}{8}$ to simplify the notation. In what will follow, all the proofs work as long as we choose $\lambda> \frac{3}{4}$.
\end{remark}

\begin{lemma}\label{Lemma: Proposicao1.Aux1}
    Given $A\subset \Z^d$, $\ell\geq 0$ and $U= C_{r\ell}\cup C_{r\ell}^\prime$ with $C_{r\ell}$ and $C_{r\ell}^\prime$ being two $r\ell$-cubes sharing a face, there exists a constant $b\coloneqq b(d)$ such that, if 
    
\begin{align}\label{Eq. U.condition}
    \frac{2^{r\ell d}}{2} \leq |C_{r\ell}\cap A| \qquad \text{and} \qquad |C_{r\ell}^\prime\cap A|< \frac{2^{r\ell d}}{2}
\end{align}
then $2^{r\ell(d-1)}\leq b|\fext A\cap U|$.
\end{lemma}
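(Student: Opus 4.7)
The plan is to reduce the statement to an application of Lemma \ref{Lemma: Geo.discreta.1} (the isoperimetric lemma for rectangles), handling separately the situation where its hypothesis fails. Without loss of generality, assume $C_{r\ell}$ and $C_{r\ell}^\prime$ share a face perpendicular to $e_d$, so that $U$ is an axis-aligned rectangle $\mathcal{R}=[1,r_1]\times\cdots\times[1,r_d]$ with $r_1=\cdots=r_{d-1}=2^{r\ell}$ and $r_d=2\cdot 2^{r\ell}$, satisfying $R\le r_i\le 2R$ with $R=2^{r\ell}$. The hypothesis gives $|A\cap U|\ge 2^{r\ell d}/2$ and $|A^c\cap C_{r\ell}^\prime|>2^{r\ell d}/2$, so each half of $U$ is occupied non-trivially by $A$ and by $A^c$.

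Before splitting into cases, note a cheap lower bound on the projection in the long direction: every line $\ell_p^d$ has exactly $2\cdot 2^{r\ell}$ points in $U$, hence
\[
|\mathcal{P}_d(A\cap U)|\ \ge\ \frac{|A\cap U|}{2\cdot 2^{r\ell}}\ \ge\ \frac{2^{r\ell(d-1)}}{4}.
\]
\emph{Case A:} $|\mathcal{P}_i(A\cap U)|\le \tfrac{7}{8}|U_i|$ for every $i=1,\ldots,d$. Then Lemma \ref{Lemma: Geo.discreta.1} (with $\lambda=7/8$) applies and yields
\[
\tfrac{1}{4}\,2^{r\ell(d-1)}\ \le\ |\mathcal{P}_d(A\cap U)|\ \le\ \sum_{i=1}^d|\mathcal{P}_i(A\cap U)|\ \le\ c(d,\tfrac{7}{8})\,|\fext A\cap U|,
\]
giving the claim with $b=4c(d,7/8)$.

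\emph{Case B:} there is an $i$ with $|\mathcal{P}_i(A\cap U)|>\tfrac{7}{8}|U_i|$. The key observation is that bad points (lines entirely contained in $A$) are severely restricted by the hypothesis $|A\cap C_{r\ell}^\prime|<2^{r\ell d}/2$. Indeed, if $i=d$, each bad line carries $2^{r\ell}$ points into $C_{r\ell}^\prime\cap A$, so $|\mathcal{P}_d^B(A\cap U)|\cdot 2^{r\ell}<2^{r\ell d}/2$, i.e.\ $|\mathcal{P}_d^B|<\tfrac{1}{2}|U_d|$. If $i\ne d$, split $U_i$ into its halves lying in $C_{r\ell}$ and $C_{r\ell}^\prime$; the same argument bounds the bad points on the $C_{r\ell}^\prime$ side by $2^{r\ell(d-1)}/2$, while the $C_{r\ell}$ side contributes at most $2^{r\ell(d-1)}$, so $|\mathcal{P}_i^B|<\tfrac{3}{2}\cdot 2^{r\ell(d-1)}=\tfrac{3}{4}|U_i|$. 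In both situations $|\mathcal{P}_i^G(A\cap U)|=|\mathcal{P}_i|-|\mathcal{P}_i^B|>\tfrac{1}{8}|U_i|\ge\tfrac{1}{4}\cdot 2^{r\ell(d-1)}$, and since \eqref{Eq: upper.bound.good.points} gives $|\mathcal{P}_i^G(A\cap U)|\le|\fext A\cap U|$, we conclude the same bound with $b=4$.

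Taking $b\coloneqq 4c(d,7/8)$ handles both cases simultaneously. The only non-routine step is the verification that good points $\mathcal{P}_i^G$ are well approximated by $|\fext A\cap U|$ while bad points $\mathcal{P}_i^B$ are controlled by the density asymmetry between the two cubes; this is precisely where the hypothesis $|A\cap C_{r\ell}^\prime|<2^{r\ell d}/2$ enters, and where the proof would be hardest to streamline because the case $i\ne d$ requires separating the two halves of the face $U_i$.
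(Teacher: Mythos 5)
Your proof is correct in substance, and it takes a genuinely different (and in some respects cleaner) route than the paper's. The paper's argument first derives $d2^{1/d}2^{r\ell(d-1)}\le|\fext(A\cap U)|$ from the isoperimetric inequality applied to $A\cap U$, then passes from $\fext(A\cap U)$ to $\fint(A\cap U)$, and finally splits the inner boundary into the part on $\fint U$ (controlled by $\sum_i|\calP_i|$) and the part in the interior of $U$ (controlled by $|\fext A\cap U|$); this machinery is what feeds into its Case (b). You avoid the isoperimetric inequality and the $\fint/\fext$ bookkeeping entirely by observing that each line $\ell_p^d$ has $r_d$ points, so $|\calP_d(A\cap U)|\ge |A\cap U|/r_d\ge 2^{r\ell(d-1)}/4$; combined with Lemma \ref{Lemma: Geo.discreta.1} this already settles Case A in one line. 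In Case B you also squeeze more out of the hypothesis than the paper does: the paper only uses the two-sided bound $|A\cap U|\le\frac{3}{2}2^{r\ell d}$, whereas you use the sharper one-sided condition $|A\cap C_{r\ell}^\prime|<2^{r\ell d}/2$, splitting the face $U_i$ into its $C_{r\ell}$- and $C_{r\ell}^\prime$-halves to bound bad points. Both the geometry (a bad line through the $C_{r\ell}^\prime$-half lies wholly in $A\cap C_{r\ell}^\prime$) and the disjointness of the lines are invoked correctly, and for $\ell=0$ your case analysis still degenerates correctly (Case A is vacuous, Case B forces $i=d$ with $|\calP_d^B|=0$), so you do not need the paper's separate treatment of $\ell=0$.

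One small expository slip at the very end: the consolidated chain $|\calP_i^G|>\tfrac{1}{8}|U_i|\ge\tfrac14\cdot 2^{r\ell(d-1)}$ is not valid as written when $i=d$, because $|U_d|=2^{r\ell(d-1)}$ so $\tfrac18|U_d|<\tfrac14\cdot 2^{r\ell(d-1)}$. What saves you in that case is your own stronger bound $|\calP_d^B|<\tfrac12|U_d|$, which gives $|\calP_d^G|>\tfrac38|U_d|=\tfrac38\cdot 2^{r\ell(d-1)}$. The conclusion $|\calP_i^G|>\tfrac14\cdot 2^{r\ell(d-1)}$ (and hence $b=4$ for Case B) is correct in both subcases, but you should present the two subcases with their separate numerics rather than through a single inequality chain. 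With that cosmetic fix the proof is sound and yields a slightly better constant ($b=4c(d,7/8)$) than the paper's $b=\max\{8,(2c+1)2^{1-1/d}\}$.
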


\begin{proof}
For $\ell=0$, \eqref{Eq. U.condition} guarantees that $C_{r\ell} = \{x\} \subset A$ and $C_{r\ell}^\prime = \{y\}\subset A^c$, hence $|\fext A\cap \{x,y\}| = 1$ and it is enough to take $b\geq 1$. For $\ell \geq 1$, \eqref{Eq. U.condition} yields
\begin{equation}\label{Eq. A.cap.U.volume}
    \frac{1}{2}2^{r\ell d} \leq |A\cap U| \leq \frac{3}{2}2^{r\ell d}.
\end{equation}

To simplify the notation, we can assume without loss of generality that ${U=[1,2^{r\ell}]^{d-1}\times [1, 2^{r\ell+1}]}$. As discussed before, for each point $p\in\calP^B_j(A\cap U)$ in the projection, $\ell_p^j\subset A\cap U$ and the lines are disjoint. Moreover, $|\calP_{j}^B(A\cap U)|r_j = \sum_{p\in\calP_{j}^B(A\cap U)} |\ell_p^j| \leq |A\cap U|$, since the size of the lines are constant $r_j\coloneqq |\ell_p^j|$. Together with the upper bound \eqref{Eq. A.cap.U.volume}, this yields 
\begin{equation}\label{Eq: Upper.bound.bad.points}
    |\calP_{j}^B(A\cap U)| \leq \frac{3}{2}2^{r\ell d}r_j^{-1}.
\end{equation}
Using the isometric inequality, the lower bound on \eqref{Eq. A.cap.U.volume} yields $d2^{\frac{1}{d}}2^{r\ell(d-1)}\leq |\fext (A\cap U)|$. As %
\begin{align*}
\frac{1}{2d}|\fext (A\cap U)| &\leq |\fint (A\cap U)| = |\fint(A\cap U) \cap \fint U| + |\fint(A\cap U) \cap(U\setminus \fint U)|\\
&\leq 2\sum_{i=1}^d |\calP_{i}(A\cap U)| + |\fint A\cap U| \leq 2\sum_{i=1}^d |\calP_{i}(A\cap U)| + |\fext A\cap U|,
\end{align*}
we get
\begin{equation}\label{Eq: Lemma.geo.discreta.3}
    2^{\frac{1}{d}-1}2^{r\ell(d-1)}\leq 2\sum_{i=1}^d |\calP_{i}(A\cap U)| + |\fext A\cap U|
\end{equation}

We again consider two cases:
\begin{itemize}
    \item[(a)] If $|\calP_{j}(A\cap U)|> \frac{7}{8}|U_j|$ for some $j=1,\dots, d$, by \eqref{Eq: upper.bound.good.points} and \eqref{Eq: Upper.bound.bad.points}  we get
    \begin{align*}
    \frac{7}{8}|U_j| < |\calP_{j}(A\cap U)| \leq |\fext A \cap U| + \frac{3}{2}2^{r\ell d}r_j^{-1}.
    \end{align*}
    A simple calculation shows that $\frac{1}{8}2^{r\ell(d-1)}\leq \frac{7}{8}|U_j| - \frac{3}{2}2^{r\ell d}r_j^{-1}$, therefore 
    \begin{equation}\label{Eq: upper.bound.big.projections.1}
        \frac{1}{8}2^{r\ell(d-1)} \leq |\fext A \cap U|.
    \end{equation}
    \item[(b)] If $|\calP_{i}(A\cap U)|\leq \frac{7}{8} |U_i|$ for all $i$, by Lemma \ref{Lemma: Geo.discreta.1}, there is a constant $c= c(d)$ such that
    \begin{equation}\label{Eq: Lemma.geo.discreta.2}
        \sum_{i=1}^d |\calP_{i}(A\cap U)|\leq c|\fext A\cap U|.
    \end{equation}
    Together with \eqref{Eq: Lemma.geo.discreta.3}, this yields
    \begin{equation}\label{Eq: upper.bound.big.projections.2}
        2^{r\ell(d-1)}\leq \frac{2c+1}{2^{\frac{1}{d}-1}}|\fext A\cap U|.
    \end{equation}
\end{itemize}

Equations \eqref{Eq: upper.bound.big.projections.1} and \eqref{Eq: upper.bound.big.projections.2} shows the desired results taking $b\coloneqq \max \{8, {(2c+1)}{2^{1-\frac{1}{d}}}\}$.
\end{proof}

\begin{proposition}\label{Proposition1}For the functions $(B_\ell)_{\ell\geq 0}$ defined in the beginning of Subsection \ref{Subsection: Coarse-graining}, there exists constants $b_1,b_2$ depending only on $d$ and $r$ such that 
\begin{equation}\label{Eq: Prop.1.FFS.i}
    |\fint\mathfrak{C}_\ell(\gamma)| \leq b_1\frac{|\fext \I_-(\gamma)|}{2^{r\ell(d-1)}} \leq b_1 \frac{|\gamma|}{2^{r\ell(d-1)}},
\end{equation}
    and 
\begin{equation}\label{Eq: Prop.1.FFS.ii}
    |B_\ell(\gamma)\Delta B_{\ell+1}(\gamma)| \leq b_2 2^{r\ell} |\gamma|,
\end{equation}
for every $\ell\geq 0$ and $\gamma\in\mathcal{C}_0(n)$.
\end{proposition}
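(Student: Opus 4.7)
My plan is to derive both bounds from Lemma \ref{Lemma: Proposicao1.Aux1} by exhibiting, for every object being counted, a pair of adjacent $r\ell$-cubes straddling the boundary of $\I_-(\gamma)$ and then applying the isoperimetric-type estimate it provides. Before starting, I would record the geometric fact that $\fext \I_-(\gamma)\subset \Sp(\gamma)$: since $\I_-(\gamma)$ is a union of bounded connected components of $\Sp(\gamma)^c$, any point adjacent to $\I_-(\gamma)$ from outside must belong to $\Sp(\gamma)$. This immediately gives the second inequality $|\fext \I_-(\gamma)|\leq |\gamma|$ as soon as the first is established.

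For part (i), I would proceed as follows. For each $C\in\fint\mathfrak{C}_\ell(\gamma)$, choose (arbitrarily) a face-adjacent cube $C'\notin \mathfrak{C}_\ell(\gamma)$ and set $U_C\coloneqq C\cup C'$. By definition of admissibility, $U_C$ and $A=\I_-(\gamma)$ satisfy the hypothesis \eqref{Eq. U.condition}, so Lemma \ref{Lemma: Proposicao1.Aux1} yields
\begin{equation*}
|\fext \I_-(\gamma)\cap U_C|\geq \frac{2^{r\ell(d-1)}}{b}.
\end{equation*}
Summing over $C\in\fint\mathfrak{C}_\ell(\gamma)$, each point of $\fext\I_-(\gamma)$ appears in at most $2d+1$ of the sets $U_C$ (once as a point of some $C\in\fint\mathfrak{C}_\ell(\gamma)$ and at most $2d$ times through the $2d$ face-neighbours which could have picked its cube as their $C'$). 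Rearranging gives $|\fint\mathfrak{C}_\ell(\gamma)|\leq (2d+1)b\,2^{-r\ell(d-1)}|\fext\I_-(\gamma)|$, which is the claim with $b_1\coloneqq (2d+1)b$.

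For part (ii), the key observation is that a point $x\in B_\ell(\gamma)\Delta B_{\ell+1}(\gamma)$ must lie in an $r(\ell+1)$-cube $C'$ that is \emph{split} at scale $\ell$, meaning $C'$ contains both admissible and non-admissible $r\ell$-subcubes. Indeed, if all $r\ell$-subcubes of $C'$ are admissible then $C'\subset B_\ell(\gamma)$ and $|C'\cap \I_-(\gamma)|\geq 2^{r(\ell+1)d}/2$, so $C'\in\mathfrak{C}_{\ell+1}(\gamma)$ and $C'\subset B_{\ell+1}(\gamma)$; the analogous argument handles the all-non-admissible case. In a split $C'$, one finds two face-adjacent $r\ell$-subcubes $C_1,C_2$ with differing admissibility, and Lemma \ref{Lemma: Proposicao1.Aux1} applied to $C_1\cup C_2$ gives $|\fext\I_-(\gamma)\cap(C_1\cup C_2)|\geq 2^{r\ell(d-1)}/b$. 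Since each boundary point belongs to exactly one $r(\ell+1)$-cube, the number of split cubes is at most $b\,2^{-r\ell(d-1)}|\fext\I_-(\gamma)|$, and each contributes at most $2^{r(\ell+1)d}$ points to the symmetric difference. Multiplying and using $|\fext\I_-(\gamma)|\leq |\gamma|$ yields $|B_\ell(\gamma)\Delta B_{\ell+1}(\gamma)|\leq b\,2^{rd}\,2^{r\ell}|\gamma|$, so $b_2\coloneqq b\,2^{rd}$ works.

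I do not expect any genuine obstacle beyond bookkeeping: the substance is entirely contained in Lemma \ref{Lemma: Proposicao1.Aux1}, and the remaining arguments are double-counting. The only subtlety worth double-checking is the statement $\fext\I_-(\gamma)\subset\Sp(\gamma)$, which relies on $\I_-(\gamma)$ being a union of bounded connected components of $\Sp(\gamma)^c$; this in turn follows from the definitions of $\I(\gamma)$ and of the labels introduced earlier in the section and does not require connectedness of $\I_-(\gamma)$ itself.
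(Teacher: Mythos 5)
Your proof is correct and follows essentially the same strategy as the paper: both parts are reduced to Lemma~\ref{Lemma: Proposicao1.Aux1} by exhibiting, for each object counted, a face-adjacent pair of $r\ell$-cubes straddling $\partial\I_-(\gamma)$, and then double-counting against $|\fext\I_-(\gamma)|$ using disjointness of the cubes. Your ``split cube'' reformulation for part (ii) handles $B_\ell(\gamma)\Delta B_{\ell+1}(\gamma)$ in one stroke where the paper treats $B_{\ell+1}\setminus B_\ell$ and $B_\ell\setminus B_{\ell+1}$ separately by the same mechanism, and you usefully make explicit the inclusion $\fext\I_-(\gamma)\subset\Sp(\gamma)$ that the paper leaves implicit for the second inequality in \eqref{Eq: Prop.1.FFS.i}; these are minor tidyings, not a different argument.
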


    \begin{proof} Fix $\ell\geq 0$. To each cube $C_{r\ell}\in  \fint \mathfrak{C}_\ell(\gamma)$ there is an $r\ell$-cube $C_{r\ell}^\prime \not\in  \mathfrak{C}_\ell(\gamma)$ not admissible, sharing a face with  $C_{r\ell}$. We denote this relation by $C_{r\ell} \sim C_{r\ell}^\prime$.  Considering the collection of $r\ell$-cubes ${\mathscr{C}}_{r\ell}^\prime \coloneqq \{C_{r\ell} : C_{r\ell}\in \fint \mathfrak{C}_\ell(\gamma) \text{ or } C_{r\ell}\notin  \mathfrak{C}_\ell(\gamma)\}$ and $A\Subset\Z^d$, 
    
    \begin{align*}
        \sum_{\substack{C_{r\ell}\in \fint \mathfrak{C}_\ell(\gamma)}}\sum_{\substack{C_{r\ell}^\prime\notin  \mathfrak{C}_\ell(\gamma)\\ C_{r\ell} \sim C_{r\ell}^\prime}} |A \cap \left(C_{r\ell} \cup C_{r\ell}^\prime\right)| &\leq \sum_{\substack{C_{r\ell}\in \fint \mathfrak{C}_\ell(\gamma)}}\sum_{\substack{C_{r\ell}^\prime\notin  \mathfrak{C}_\ell(\gamma)\\ C_{r\ell} \sim C_{r\ell}^\prime}}  \left(|A \cap C_{r\ell}| + |A \cap C_{r\ell}^\prime|\right)\\
        &\leq \sum_{\substack{C_{r\ell}\in \fint \mathfrak{C}_\ell(\gamma)}} 2d |A \cap C_{r\ell}| + \sum_{C_{r\ell}^\prime\notin \mathfrak{C}_\ell(\gamma)} 2d|A \cap C_{r\ell}^\prime| \\
        &= 2d \sum_{C\in{\mathscr{C}}_{r\ell}^\prime} |A\cap C| = 2d|A\cap B_{{\mathscr{C}}_{r\ell}^\prime}| \leq 2d |A|
    \end{align*}
    
 As any pair of cubes $C_{r\ell}\sim C_{r\ell}^\prime$ are in the hypothesis of Lemma \ref{Lemma: Proposicao1.Aux1}, $ {b 2^{r\ell(d-1)} \leq |\fext \I_-(\gamma) \cap \left( C_{r\ell} \cup C_{r\ell}^\prime \right)|}$. Applying equation above for $A=\fext\I_-(\gamma)$ we get that
    \begin{equation*}
       \frac{b}{2d} 2^{r\ell(d-1)}| \fint \mathfrak{C}_\ell(\gamma)| \leq \frac{1}{2d}\sum_{\substack{C_{r\ell}\in \fint \mathfrak{C}_\ell(\gamma)}}\sum_{\substack{C_{r\ell}^\prime\notin  \mathfrak{C}_\ell(\gamma)\\ C_{r\ell} \sim C_{r\ell}^\prime}}  |\fext \I_-(\gamma) \cap \{C_{r\ell} \cup C_{r\ell}^\prime\}| \leq |\fext \I_-(\gamma)|,
    \end{equation*}
that concludes \eqref{Eq: Prop.1.FFS.i} for $b_1\coloneqq 2d/b$.

Given $C_{r(\ell+1)}\in \mathscr{C}_{r(\ell+1)}(B_{\ell+1}(\gamma)\setminus B_{\ell}(\gamma))$, there is a $r\ell$-cube $C_{r\ell}^\prime\subset C_{r(\ell + 1)}$ with $C_{r\ell}^\prime\notin \mathfrak{C}_\ell(\gamma)$, otherwise  $(B_{\ell+1}(\gamma)\setminus B_{\ell}(\gamma))\cap C_{r(\ell+1)} = \emptyset$. There is also a $r\ell$-cube $C_{r\ell}\subset C_{r(\ell + 1)}$ with $C_{r\ell}\in \mathfrak{C}_\ell(\gamma)$, otherwise we would have 
\begin{align*}
    |\I_-(\gamma)\cap C_{r(\ell+1)}| &= \sum_{C_{r\ell}\subset C_{r(\ell+1)}} |\I_-(\gamma)\cap C_{r\ell}| \leq \frac{1}{2} |C_{r(\ell+1)}|.
\end{align*}

Moreover, we can assume that $C_{r\ell}$ and $C_{r\ell}^\prime$ share a face. Again, we use Lemma \ref{Lemma: Proposicao1.Aux1} to get,
\begin{align}\label{Eq: bound.on.c.bar}
    |B_{\ell+1}(\gamma)\setminus B_\ell(\gamma)\cap C_{r(\ell+1)}| &\leq |C_{r(\ell+1)}| =2^{rd}2^{r\ell}2^{r\ell(d-1)} \nonumber\\
                &\leq 2^{rd}2^{r\ell} b|\fext \I_-(\gamma) \cap \{C_{r\ell} \cup C_{r\ell}^\prime\}| \nonumber\\
                &\leq 2^{rd}2^{r\ell}b|\fext \I_-(\gamma) \cap C_{r(\ell+1)}|.
\end{align}
Therefore, 
\begin{align*}
    |B_{\ell+1}(\gamma)\setminus B_\ell(\gamma)| &= \sum_{C_{r(\ell+1)}\in \mathscr{C}_{r(\ell+1)}(B_{\ell+1}(\gamma)\setminus B_\ell(\gamma))} |B_{\ell+1}(\gamma)\setminus B_\ell(\gamma)\cap C_{r(\ell+1)}| \\
    &\leq \sum_{C_{r(\ell+1)}\in \mathscr{C}_{r(\ell+1)}(B_{\ell+1}(\gamma)\setminus B_\ell(\gamma))} 2^{rd}2^{r\ell}b|\fext \I_-(\gamma) \cap C_{r(\ell+1)}| \leq  \frac{b_2}{2}2^{r\ell}|\fext \I_-(\gamma)|.
\end{align*}
with $b_2=b2^{rd+1}$. To get the same bound for $|B_{\ell}(\gamma)\setminus B_{\ell+1}(\gamma)|$ we repeat a similar argument, covering $B_{\ell}(\gamma)\setminus B_{\ell+1}(\gamma)$ with $r(\ell+1)$-cubes. 
\end{proof}

\begin{remark}\label{rmk: Upper_bound_on_ell}
    This proposition shows that when $\frac{b_1|\gamma|}{2^{r\ell(d-1)}}<1$ there are no admissible cubes. Therefore, in some propositions we assume $\ell\leq \frac{\log_{2^r}(b_1 |\gamma|)}{d-1}$, since the relevant bounds on the complementary case follow trivially.
\end{remark}

The next Corollary estimates the difference between the minus interior of a contour and its approximation, see Figure \ref{Fig: Figura7}.
\begin{corollary}\label{Cor: Bound_diam_B_ell}
     There exists a constant $b_3>0$ such that, for any $\ell>0$ and any two contours $\gamma_1,\gamma_2 \in \mathcal{C}_0(n)$ with $B_\ell(\gamma_1)=B_{\ell}(\gamma_2)$, 
    \begin{equation*}
        \d_2(\I_-(\gamma_1),\I_-(\gamma_2))\leq 4 \varepsilon b_3 2^{\frac{r\ell}{2}} n^{\frac{1}{2}}. 
    \end{equation*} 
\end{corollary}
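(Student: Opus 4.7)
The plan is to reduce the bound on $\d_2(\I_-(\gamma_1), \I_-(\gamma_2))$ to a bound on the error $|\I_-(\gamma) \Delta B_\ell(\gamma)|$ for a single contour, and then estimate this error by a telescoping argument using Proposition \ref{Proposition1}.

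The first key observation is that $B_0(\gamma) = \I_-(\gamma)$. Indeed, for $\ell = 0$ each cube $C_0(x)$ is just the singleton $\{x\}$, so an $0$-cube is admissible precisely when the unique point it contains lies in $\I_-(\gamma)$. Hence $\mathfrak{C}_0(\gamma) = \{\{x\}: x \in \I_-(\gamma)\}$ and the identification follows.

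Next, I would apply the telescoping inequality
\begin{equation*}
\I_-(\gamma) \Delta B_\ell(\gamma) \;=\; B_0(\gamma) \Delta B_\ell(\gamma) \;\subset\; \bigcup_{k=0}^{\ell - 1} B_k(\gamma) \Delta B_{k+1}(\gamma),
\end{equation*}
and invoke the bound \eqref{Eq: Prop.1.FFS.ii} of Proposition \ref{Proposition1} at each scale. This yields
\begin{equation*}
|\I_-(\gamma) \Delta B_\ell(\gamma)| \;\leq\; \sum_{k=0}^{\ell-1} b_2\, 2^{rk} |\gamma| \;\leq\; \frac{b_2}{2^r - 1}\, 2^{r\ell}\, n,
\end{equation*}
since $|\gamma| = n$.

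Finally, I use the triangle inequality for symmetric difference. If $B_\ell(\gamma_1) = B_\ell(\gamma_2)$, then
\begin{equation*}
|\I_-(\gamma_1) \Delta \I_-(\gamma_2)| \;\leq\; |\I_-(\gamma_1) \Delta B_\ell(\gamma_1)| + |\I_-(\gamma_2) \Delta B_\ell(\gamma_2)| \;\leq\; \frac{2 b_2}{2^r - 1}\, 2^{r\ell} n,
\end{equation*}
so that $\d_2(\I_-(\gamma_1), \I_-(\gamma_2)) = 2\varepsilon |\I_-(\gamma_1) \Delta \I_-(\gamma_2)|^{1/2} \leq 4\varepsilon b_3\, 2^{r\ell/2} n^{1/2}$ with $b_3 \coloneqq \sqrt{b_2 / (2(2^r - 1))}$.

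There is no real obstacle here: the proof is essentially an immediate corollary of the per-scale estimate \eqref{Eq: Prop.1.FFS.ii}, combined with the telescoping identity and the trivial starting point $B_0(\gamma) = \I_-(\gamma)$. The conceptual work has all been absorbed into Proposition \ref{Proposition1}, whose proof required the nontrivial discrete-geometric Lemmas \ref{Lemma: Geo.discreta.1} and \ref{Lemma: Proposicao1.Aux1}.
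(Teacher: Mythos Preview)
Your proof is correct and follows essentially the same approach as the paper: both reduce to bounding $\d_2(\I_-(\gamma), B_\ell(\gamma))$ via the triangle inequality, telescope through the scales $B_0(\gamma)=\I_-(\gamma),\dots,B_\ell(\gamma)$, and apply the per-scale estimate \eqref{Eq: Prop.1.FFS.ii}. The only minor difference is that the paper telescopes in the metric $\d_2$ (summing $\sqrt{|B_i\Delta B_{i-1}|}$) while you telescope in the symmetric-difference cardinality and take the square root at the end; your route in fact yields a slightly sharper constant, but the argument is otherwise identical.
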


\begin{proof}
    This is a simple application of the triangular inequality, since $\d_2(\I_-(\gamma_1),\I_-(\gamma_2)) \leq \d_2(\I_-(\gamma_1),B_\ell(\gamma_1)) + \d_2(\I_-(\gamma_2),B_\ell(\gamma_2))$ and 
    \begin{align*}
        \d_2(\I_-(\gamma_1),B_\ell(\gamma_1)) &\leq \sum_{i=1}^\ell \d_2(B_i(\gamma_1),B_{i-1}(\gamma_1)) = \sum_{i=1}^\ell 2\varepsilon\sqrt{|B_i(\gamma_1)\Delta B_{i-1}(\gamma_1)|} \\
        & \leq 2\varepsilon\sqrt{b_2}\sqrt{n} \sum_{i=1}^\ell  2^{\frac{ir}{2}}   \leq 4\varepsilon\sqrt{b_2}2^{\frac{r\ell}{2}} \sqrt{n} 
    \end{align*}
    where in the second to last equation used \eqref{Eq: Prop.1.FFS.ii}. As the same bound holds for $d_2(\I_-(\gamma_2),B_\ell(\gamma_2))$, the corollary is proved by taking $b_3 = 2\sqrt{b_2}$.
\end{proof}

\begin{remark}\label{Rmk: Bounding_N_by_B_ell}
    Corollary \ref{Cor: Bound_diam_B_ell} shows that we can create a cover of $\I_-(n)$, indexed by $B_\ell(\mathcal{C}_0(n))$, of balls with radius $4 \varepsilon b_3 2^{\frac{r\ell}{2}} n^{\frac{1}{2}}$. Therefore $N(\I_-(n), \d_2, 4\varepsilon b_3 2^{\frac{r\ell}{2}} n^{\frac{1}{2}}) \leq |B_\ell(\mathcal{C}_0(n))|$. 
\end{remark}
In the next section we bound $|B_\ell(\mathcal{C}_0(n))|$, using a method similar to the one used in \cite{Affonso.2021} to count $|\mathcal{C}_0(n)|$.

    \subsection{Entropy Bounds}

    As we discussed before, in the definition of admissibility at the beginning of Subsection \ref{Subsection: Coarse-graining}, $|B_\ell(\mathcal{C}_0(n))| = |\partial \mathfrak{C}_{r\ell}(\mathcal{C}_0(n))|$. In the short-range case, a key ingredient to count the admissible cubes is that, despite $B_\ell(\gamma)$ not being connected, all cubes are close to a connected region with size $|\gamma|$. As the contours now may not be connected, we need to change the strategy: we choose a suitable scale $L(\ell)$ and count how many $rL(\ell)$-coverings of contours there are. That is, we first control $|\C_{rL(\ell)}(\mathcal{C}_0(n))|$. Once a $rL(\ell)$-covering is fixed, we choose which $r\ell$-cubes inside this covering will be admissible. At last, we choose the scale $L(\ell)$ in a suitable way.

The first step is to bound $|\C_{rL}(\mathcal{C}_0(n))|$, for $L>0$. For $n,m\geq 0$, we say that $\mathscr{C}_n$ is \textit{subordinated} to $\C_m$, denoted by $\C_n\preceq \C_m$, if for all $C_n\in\C_n$, there exists $C_m\in \C_m$ such that $C_n\subset C_m$. Moreover, define 
\begin{equation*}
    N(\C_m, n, V) \coloneqq |\{\C_n : \C_n\preceq \C_m, |\C_n|=V\}|,
\end{equation*}
the number of collections of $n$-cubes $\C_n$ subordinated to a fixed collection $\C_m$ and with $|\C_n|=V$. Notice that every $m$-cube contains $2^d$ $(m-1)$-cubes, all of them being disjoint. Therefore, the number of $n$-cubes inside a $m$-cube is $2^{(m-n)d}$ and we have $N(\C_m, n, V) = \binom{2^{(m-n)d} |\C_{m}|}{V}$.
In particular, the bound on the binomial $\binom{n}{k}\leq \left(\frac{en}{k}\right)^k$ yields
\begin{equation}\label{Eq: Bound.on.N}
    N(\C_{r(\ell+1)}, r\ell, V) = \binom{2^{rd}|\C_{r(\ell+1)}|}{V} \leq \left(\frac{2^{rd}e|\C_{r(\ell+1)}|}{V}\right)^{V}.
\end{equation}
For any subset $\Lambda \Subset \Z^d$, define
\begin{equation*}
    V_r^\ell(\Lambda)\coloneqq \sum_{n=\ell}^{n_r(\Lambda)} |\C_{rn}(\Lambda)|,
\end{equation*}
where $n_r(\Lambda)\coloneqq \ceil{\log_{2^r}(\diam (\Lambda))}$. To control $V_r^\ell(\Lambda)$ we bound the number of coverings at a fixed step $L>0$.

\begin{proposition}\label{Prop. partition.a.graph}
Let $k\geq 1$ and $G$ be a finite, non-empty, connected simple graph with vertex set $v(G)$. Then, $G$ can be covered by $\ceil*{|v(G)|/k}$ connected sub-graphs of size at most $2k$.
\end{proposition}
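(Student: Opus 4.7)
My plan is to reduce the problem to a spanning tree of $G$ and then exploit the Euler tour of that tree. Fix any spanning tree $T\subseteq G$, root it at an arbitrary vertex, and write $n=|v(G)|$. The Euler tour of $T$ is the vertex sequence $w_1,w_2,\ldots,w_L$, with $L=2n-1$, produced by a depth-first walk that traverses every edge of $T$ exactly twice (once in each direction); by construction, consecutive entries $w_i,w_{i+1}$ are adjacent in $T$, and each vertex of $T$ appears at least once in the sequence.

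I would then partition the index set $\{1,\ldots,L\}$ into $m\coloneqq\lceil L/(2k)\rceil$ consecutive intervals $I_1,\ldots,I_m$, each of length at most $2k$, and define the vertex sets $V_j\coloneqq\{w_i:i\in I_j\}$ for $j=1,\ldots,m$. Because the restriction of the Euler tour to $I_j$ is itself a walk inside $T$, the set $V_j$ spans a connected subtree of $T$ and hence a connected subgraph of $G$; moreover $|V_j|\le |I_j|\le 2k$. Since every vertex of $G$ appears somewhere in the Euler tour, the family $\{V_j\}_{j=1}^m$ covers $v(G)$.

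Finally, the count is
\[
m=\Bigl\lceil \tfrac{2n-1}{2k}\Bigr\rceil\le \Bigl\lceil \tfrac{n}{k}\Bigr\rceil,
\]
which follows from monotonicity of the ceiling and the elementary inequality $(2n-1)/(2k)\le n/k$. The only point that requires a brief verification is that a contiguous sub-walk of an Euler tour gives a connected vertex set, which is immediate from the fact that consecutive entries are joined by an edge of $T$; I do not expect any substantive obstacle, since the argument is essentially a packaging of the Euler-tour trick.
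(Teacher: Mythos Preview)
Your argument is correct. The paper does not supply its own proof of this proposition---it simply defers to \cite{Affonso.2021}---so there is no in-paper argument to compare against. The Euler-tour packaging you use is a standard and clean route: contiguous blocks of the depth-first walk on a spanning tree are automatically connected, the size bound is the block length, and the count $\lceil(2n-1)/(2k)\rceil\le\lceil n/k\rceil$ follows from $(2n-1)/(2k)\le n/k$ together with monotonicity of the ceiling. The typical alternative (and the one usually presented in this line of papers) is a greedy subtree-extraction on a rooted spanning tree: repeatedly locate a deepest vertex whose subtree has at least $k$ vertices, detach it (its size is then in $[k,2k]$ since each child subtree has fewer than $k$), and recurse. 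Both reduce to a spanning tree; your version avoids the case analysis of the extraction step, while the extraction argument makes the count more transparent since each removed piece has at least $k$ vertices. One minor caveat: your identity ``partition $\{1,\dots,L\}$ into $\lceil L/(2k)\rceil$ intervals of length at most $2k$'' tacitly assumes $2k$ is an integer; this is the natural reading of the statement and is harmless in the paper's applications.
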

We omit the proof since it is the same as in \cite{Affonso.2021}. Remember that, given ${G = (V,E) \in \mathscr{G}_n(\Lambda)}$, $\Lambda^G \coloneqq \Lambda \cap B_V$ denotes the area of $\Lambda$ covered by $G$. Remember also that, for $A\Subset \Z^d$ and $j\geq 1$, $\Gamma^r_j(A)$ are the partition elements removed at step $j$, in the construction presented in Section 2.  Using this construction we can prove the following lemma.

\begin{lemma}\label{Lemma: Big.clusters_2}
    Let $A\Subset \Z^d$, $\gamma\in\Gamma^r(A)$ and $j \geq 1$ be such that $\gamma\in \Gamma^r_j(A)$. Then, for any $\ell < j$ and $G_{r\ell}\in \mathscr{G}_{r\ell}(\gamma)$,
    \begin{equation}\label{Eq: Lower_bound_on_the_covering_of_gamma_G}
        2^{r(1-\frac{1}{d})\ell} \leq |\C_{r\ell}(\gamma^{G_{r\ell}})| 
    \end{equation}
\end{lemma}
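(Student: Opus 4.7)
Proof plan. The first move is to locate $G_{r\ell}$ inside the ambient graph $\mathscr{G}_{r\ell}(A_\ell)$. Because the edge relation $d(C,C')\leq M2^{ar\ell}$ depends only on the geometric position of the cubes, and because $\mathscr{C}_{r\ell}(\gamma)\subseteq\mathscr{C}_{r\ell}(A_\ell)$, any two cubes connected in $\mathscr{G}_{r\ell}(\gamma)$ remain connected in $\mathscr{G}_{r\ell}(A_\ell)$. Hence there is a unique component $\widehat{G}\in\mathscr{G}_{r\ell}(A_\ell)$ with $V(G_{r\ell})\subseteq V(\widehat{G})$, and $|\mathscr{C}_{r\ell}(\gamma^{G_{r\ell}})|=|V(G_{r\ell})|$. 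The hypothesis $\ell<j$ forces $\gamma\subseteq A_{\ell+1}$: if $\widehat{G}$ belonged to $\mathscr{P}_\ell$ then $A_\ell^{\widehat{G}}\in\Gamma^r_\ell(A)$ would be extracted at step $\ell$ and its intersection with $\gamma$ would not survive into $A_{\ell+1}$, contradicting $\gamma\in\Gamma^r_j(A)$ with $j>\ell$. Consequently $\widehat{G}\notin\mathscr{P}_\ell$, i.e.
\[
|V(A_\ell^{\widehat{G}})|> 2^{r\ell(d+1)}.
\]

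The second step converts this volume bound into a lower bound on the number of $r\ell$-cubes. The key geometric fact is that $\fint V(\Lambda)\subseteq\Lambda$ for any finite $\Lambda\subset\Z^d$, since a point of $V(\Lambda)$ with a neighbour in the unbounded component of $\Lambda^c$ cannot sit in a hole of $V(\Lambda)$ and must therefore lie in $\Lambda$ itself. Applied to $\Lambda=A_\ell^{\widehat{G}}$ together with the isoperimetric inequality this gives
\[
|A_\ell^{\widehat{G}}|\geq |\fint V(A_\ell^{\widehat{G}})|\geq |V(A_\ell^{\widehat{G}})|^{(d-1)/d}> 2^{r\ell(d+1)(d-1)/d}.
\]
Since $A_\ell^{\widehat{G}}\subseteq B_{V(\widehat{G})}$ and the cubes in $V(\widehat{G})$ are pairwise disjoint of size $2^{r\ell d}$, a preliminary lower bound on $|V(\widehat{G})|$ follows. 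The target $|V(G_{r\ell})|\geq 2^{r\ell(d-1)/d}$ is then produced by carefully attributing this cube count to the $\gamma$-part of $\widehat{G}$: any cube of $V(\widehat{G})\setminus V(G_{r\ell})$ intersects some $\gamma'\in\Gamma^r(A)\setminus\{\gamma\}$ within graph-distance $M2^{ar\ell}$ of $G_{r\ell}$, and the $(M,a)$-partition condition $d(\gamma,\gamma')>M\min(|V(\gamma)|,|V(\gamma')|)^{a/(d+1)}$ forces $|V(\gamma')|<2^{r\ell(d+1)}$, severely constraining the number and total cube-content of such foreign pieces.

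The main obstacle is precisely this last bookkeeping step: one must show that the cubes contributed to $\widehat{G}$ by foreign small pieces cannot dominate those contributed by $\gamma^{G_{r\ell}}$ itself, so that the isoperimetric lower bound on $|V(\widehat{G})|$ can be transferred to $|V(G_{r\ell})|$ with the correct exponent $(d-1)/d$. I would carry this out by applying Proposition~\ref{Prop. partition.a.graph} to partition $\widehat{G}$ into connected subgraphs of controlled cardinality, counting within each subgraph how many separated foreign pieces (each of $(M,a)$-bounded size) can accumulate near $G_{r\ell}$, and showing their aggregate covering number is absorbed into a constant times $|V(G_{r\ell})|$. The quantitative balance of exponents—where the factor $2^{r\ell(d+1)}$ built into the definition of $\mathscr{P}_\ell$ meets the $(d-1)/d$ power coming from isoperimetry—is the technical heart of the argument and the point that must be handled with care.
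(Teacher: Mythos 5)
Your plan has two genuine gaps, the first fatal to the argument as sketched.

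\textbf{The isoperimetric step is applied at the wrong level and yields a vacuous bound.} You take
\[
|A_\ell^{\widehat{G}}|\geq |\fint V(A_\ell^{\widehat{G}})|\geq |V(A_\ell^{\widehat{G}})|^{(d-1)/d}> 2^{r\ell(d+1)(d-1)/d},
\]
and then divide by the cube size $2^{r\ell d}$ to bound $|V(\widehat{G})|$. But
$(d+1)\tfrac{d-1}{d}-d=-\tfrac{1}{d}$, so all this gives is $|V(\widehat{G})|>2^{-r\ell/d}<1$, which is trivially true and useless. The difficulty is that you are applying isoperimetry to the point set $\gamma^{G_{r\ell}}$ (via $\fint V(\Lambda)\subseteq\Lambda$) and only afterwards passing to cubes; one cube can contain $2^{r\ell d}$ points, and $2^{r\ell(d^2-1)/d}<2^{r\ell d}$, so the ``many points'' you produce could all sit in a single cube. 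The paper's route is structurally different: it first converts the volume bound $|V(\gamma^{G_{r\ell}})|>2^{r\ell(d+1)}$ into a bound on the number of cubes needed to cover the \emph{whole volume} $V(\gamma^{G_{r\ell}})$, namely $|\C_{r\ell}(V(\gamma^{G_{r\ell}}))|>2^{r\ell}$; it then identifies $r\ell$-cubes with lattice points (a bijection preserving face-adjacency) and applies the discrete isoperimetric inequality \emph{at the cube level}, getting $|\fint\C_{r\ell}(V(\gamma^{G_{r\ell}}))|\geq (2^{r\ell})^{(d-1)/d}=2^{r(1-1/d)\ell}$. Finally, every inner-boundary cube of the volume covering must meet $\gamma^{G_{r\ell}}$ itself (by exactly the fact $\fint V(\Lambda)\subseteq\Lambda$ you correctly identify), so this count is a lower bound on $|\C_{r\ell}(\gamma^{G_{r\ell}})|$. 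Your version of the same fact is used at the wrong place in the chain.

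\textbf{The ``bookkeeping'' third step is both missing and unnecessary.} You flag as the main obstacle the possible presence of cubes in $\widehat{G}$ that cover pieces $\gamma'\neq\gamma$, and you propose to control their contribution via Proposition~\ref{Prop. partition.a.graph} and the $(M,a)$-separation. In fact, $A_\ell^{\widehat{G}}=\gamma^{G_{r\ell}}$: cubes in $\mathscr{C}_{r\ell}(A_\ell)$ partition $\mathbb{Z}^d$, so any cube covering a point of a piece removed at step $\ell$ lies in the removed component and hence not in the surviving $\widehat{G}$; and if two pieces $\gamma,\gamma'$ both surviving step $\ell$ lay in the same component of $\mathscr{G}_{r\ell}(A_\ell)$, then, since (for the paper's choice of $r$) connectivity at scale $r\ell$ among cubes covering only $A_{\ell+1}$ implies connectivity at scale $r(\ell+1)$ in $\mathscr{G}_{r(\ell+1)}(A_{\ell+1})$, one inducts up to the step $\min(j,j')$ and contradicts the fact that the piece removed there equals the full $A^G$ of its own component. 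So once you know $\widehat{G}\notin\mathscr{P}_\ell$ you already have $|V(\gamma^{G_{r\ell}})|=|V(A_\ell^{\widehat{G}})|>2^{r\ell(d+1)}$, which is precisely the paper's starting point, and the ``foreign pieces'' you worry about do not appear. The work you budget for step three should instead be spent getting the cube-level isoperimetry right.
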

\begin{proof}
        Given $G_{r\ell}\in \mathscr{G}_{r\ell}(\gamma)$, by our construction of the contour, $2^{r(d+1)\ell} < |V(\gamma^{G_{r\ell}})|$. A trivial bound gives us $|V(\gamma^{G_{r\ell}})| \leq 2^{r\ell d}|\C_{r\ell}(V(\gamma^{G_{r\ell}}))|$. Associating each cube $C_m(x)$ to $x$, we get a one-to-one correspondence between $m$-cubes and lattice points that preserves neighbors, that is, two m-cubes $C_m(x)$ and $C_m(y)$ share a face if and only if $|x-y|=1$. We can therefore apply the isoperimetric inequality to get $|\C_{r\ell}(V(\gamma^{G_{r\ell}}))| \leq |\fint \C_{r\ell}(V(\gamma^{G_{r\ell}}))|^{\frac{d}{d-1}}\leq |\C_{r\ell}(\gamma^{G_{r\ell}})|^{\frac{d}{d-1}}$, where in the last equation we are using that every cube in the boundary of cubes must cover at least one point of $\gamma^{G_{r\ell}}$. We conclude that $2^{r(d+1)\ell} \leq 2^{r\ell d}|\C_{r\ell}(\gamma^{G_{r\ell}})|^{\frac{d}{d-1}}$, and \eqref{Eq: Lower_bound_on_the_covering_of_gamma_G} follows.
\end{proof}

As a corollary, we can recuperate a key lemma of \cite{Affonso.2021}, which is the following.
\begin{lemma}\label{Lemma: Big.clusters_1}
    Given $A\Subset \Z^d$, $n> 1$ and $\gamma\in\Gamma^r(A)$, if $|\mathscr{G}_{rn}(\gamma)|\geq 2$ then $|v(G_{rn}(\gamma))| \geq 2^r$ for every $G_{rn}(\gamma)\in \mathscr{G}_{rn}(\gamma)$ 
\end{lemma}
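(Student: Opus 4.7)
The plan is to derive the statement directly from Lemma \ref{Lemma: Big.clusters_2} by identifying the range of scales to which it applies. Fix $\gamma \in \Gamma^r(A)$ and let $j \geq 1$ be the (unique) step at which $\gamma$ was removed, so $\gamma \in \Gamma^r_j(A)$. By construction $\gamma = A_j^{G_j}$ for some component $G_j \in \mathscr{G}_{rj}(A_j)$, and since $G_j$ is connected in the graph at scale $rj$, the induced subgraph on $\mathscr{C}_{rj}(\gamma) \subseteq \mathscr{C}_{rj}(A_j)$ is connected as well, i.e.\ $|\mathscr{G}_{rj}(\gamma)| = 1$.

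The first key point is that connectivity is monotone in the scale: if $|\mathscr{G}_{rj}(\gamma)| = 1$ then $|\mathscr{G}_{rn}(\gamma)| = 1$ for every $n \geq j$. Indeed, any two cubes $C_{rn}^1, C_{rn}^2 \in \mathscr{C}_{rn}(\gamma)$ contain cubes $C_{rj}^1, C_{rj}^2 \in \mathscr{C}_{rj}(\gamma)$, which are joined by a path in $G_{rj}(\gamma)$. Each edge of that path connects cubes at distance at most $M 2^{rja} \leq M 2^{rna}$, and since $\d(C_{rn}^i, C_{rn}^{i'}) \leq \d(C_{rj}^i, C_{rj}^{i'})$ whenever $C_{rj}^i \subset C_{rn}^i$, the path projects to a path in $G_{rn}(\gamma)$. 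Thus the hypothesis $|\mathscr{G}_{rn}(\gamma)| \geq 2$ forces $n < j$, placing us exactly in the range where Lemma \ref{Lemma: Big.clusters_2} is available.

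Applying Lemma \ref{Lemma: Big.clusters_2} with $\ell = n < j$, for any component $G_{rn} \in \mathscr{G}_{rn}(\gamma)$ one has
\begin{equation*}
|v(G_{rn})| \;=\; |\mathscr{C}_{rn}(\gamma^{G_{rn}})| \;\geq\; 2^{r(1-\frac{1}{d})n}.
\end{equation*}
Since $n > 1$ and $n$ is an integer we have $n \geq 2$, and since $d \geq 2$ one has $2(1 - 1/d) \geq 1$. Therefore
\begin{equation*}
2^{r(1-\frac{1}{d})n} \;\geq\; 2^{2r(1-\frac{1}{d})} \;\geq\; 2^r,
\end{equation*}
completing the proof. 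The only non-routine ingredient is the upward propagation of connectivity across scales; everything else is direct substitution into the previous lemma followed by an elementary exponent comparison.
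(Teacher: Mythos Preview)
Your proof is correct and follows exactly the route the paper indicates: it states Lemma~\ref{Lemma: Big.clusters_1} as a corollary of Lemma~\ref{Lemma: Big.clusters_2} without spelling out the details, and you have supplied them. The only step that required care---showing that $|\mathscr{G}_{rn}(\gamma)|\geq 2$ forces $n<j$ via upward monotonicity of connectivity---is handled correctly, and the final exponent comparison $2^{r(1-1/d)n}\geq 2^r$ for $n\geq 2$, $d\geq 2$ is immediate.
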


   The next proposition bounds the partial volume.
\begin{proposition}\label{Prop. Bound.on.V_r^l(gamma)}
    There exists a constant $b_3 \coloneqq b_3(d, M, r)$ such that, for any $A\Subset \Z^d$, $\gamma\in\Gamma^r(A)$ and $\ell \geq 0$,
    
     \begin{equation*}
        V_r^\ell(\gamma)\leq b_3 (\ell\vee 1)|\mathscr{C}_{r\ell}(\gamma)|. 
    \end{equation*}

\end{proposition}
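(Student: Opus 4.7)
Let $j \geq 1$ be the unique step at which $\gamma$ is removed, i.e., $\gamma \in \Gamma_j^r(A)$. The first move is to bound $n_r(\gamma)$ linearly in $j$. By the construction of Proposition \ref{Prop:Construction_(M,a,delta)_partition}, $\gamma$ forms a single connected component of $\mathscr{G}_{rj}(A_j)$, so the cubes of $\mathscr{C}_{rj}(\gamma)$ can be chained together with consecutive jumps of length at most $M 2^{raj}$. This immediately gives
$$\diam(\gamma) \leq |\mathscr{C}_{rj}(\gamma)|\bigl(M 2^{raj}+2^{rj}\bigr) \leq 2^{rj(d+1)}\bigl(M 2^{raj}+2^{rj}\bigr),$$
where I used the removal criterion $|\mathscr{C}_{rj}(\gamma)| \leq |V(\gamma)| \leq 2^{rj(d+1)}$. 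Taking $\log_{2^r}$ yields $n_r(\gamma) \leq C_1 j$ for some constant $C_1 = C_1(d,a,M,r)$. Since every $r\ell$-cube is contained in exactly one $rn$-cube for $n \geq \ell$, one also has the monotonicity $|\mathscr{C}_{rn}(\gamma)| \leq |\mathscr{C}_{r\ell}(\gamma)|$, giving the universal estimate
$$V_r^\ell(\gamma) \leq (n_r(\gamma)-\ell+1)\,|\mathscr{C}_{r\ell}(\gamma)|.$$

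\paragraph{} For $\ell \geq j$, combining the two displays gives $n_r(\gamma) - \ell + 1 \leq (C_1-1)\ell + 1 \leq C_1(\ell \vee 1)$, so the proposition holds in this regime with any $b_3 \geq C_1$. For the harder case $\ell < j$, I would split the sum at scale $rj$:
$$V_r^\ell(\gamma) = \sum_{n=\ell}^{j-1}|\mathscr{C}_{rn}(\gamma)| + V_r^j(\gamma),$$
apply the previous case at scale $rj$ to control the tail $V_r^j(\gamma) \leq C_1 j\,|\mathscr{C}_{rj}(\gamma)| \leq C_1 j\,|\mathscr{C}_{r\ell}(\gamma)|$, and then use Lemma \ref{Lemma: Big.clusters_2} to control the initial sum. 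That lemma guarantees that each connected component of $\mathscr{G}_{rn}(\gamma)$ with $\ell \leq n < j$ contains at least $2^{r(1-1/d)n}$ cubes, which in turn forces the exponential lower bound $|\mathscr{C}_{r\ell}(\gamma)| \geq |\mathscr{C}_{r(j-1)}(\gamma)| \geq 2^{r(1-1/d)(j-1)}$. The plan is to use this lower bound to absorb the multiplicative factor of $j$ coming from the $V_r^j$-term, and to bound the intermediate terms $|\mathscr{C}_{rn}(\gamma)|$ by exploiting the component-density condition in a telescoping fashion, possibly combined with the graph-partition argument of Proposition \ref{Prop. partition.a.graph}.

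\paragraph{} The main obstacle will be the regime $\ell < j$: the trivial monotonicity produces a factor of $j$ rather than $\ell$, and turning $j$ into $(\ell \vee 1)$ requires genuine use of the multiscale density guaranteed by Lemma \ref{Lemma: Big.clusters_2}. Concretely, one must show that either $|\mathscr{C}_{rn}(\gamma)|$ decays geometrically in $n-\ell$ (because the large lower bound on component size forces enough merging when passing from scale $r\ell$ to scale $r(\ell+1)$) or, equivalently, that the logarithmic dependence $j \lesssim \log|\mathscr{C}_{r\ell}(\gamma)|$ is already absorbed into the inequality. This is the same delicate step that appears, with $(M,a,r)$-partitions in place of $(M,a)$-partitions, in the proof of the analogous statement in \cite{Affonso.2021}, and I expect the argument here to mirror it closely.
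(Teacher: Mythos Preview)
Your outline is on the right track in identifying Lemma~\ref{Lemma: Big.clusters_2} as the key input and in handling the regime $\ell \geq j$ via the diameter bound $n_r(\gamma)\leq C_1 j$. But the case $\ell < j$ is left genuinely open, and the two mechanisms you propose for closing it do not work as stated.

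First, ``geometric decay of $|\mathscr{C}_{rn}(\gamma)|$ in $n-\ell$'' is too strong: if it held, summing would give $V_r^\ell(\gamma)\lesssim |\mathscr{C}_{r\ell}(\gamma)|$ with \emph{no} factor of $\ell$, whereas that factor is genuinely present in the estimate. The density condition of Lemma~\ref{Lemma: Big.clusters_2} does not force a constant-factor drop at each successive scale; it only forces a drop when you jump across roughly $a$ scales at once. Second, the ``equivalent'' observation $j\lesssim \log|\mathscr{C}_{r\ell}(\gamma)|$ does not help: the target is $(\ell\vee1)\,|\mathscr{C}_{r\ell}(\gamma)|$, not $\log|\mathscr{C}_{r\ell}(\gamma)|\cdot|\mathscr{C}_{r\ell}(\gamma)|$, and there is no reason for $\log|\mathscr{C}_{r\ell}(\gamma)|\lesssim \ell$ when $\ell$ is small.

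The paper does not split at $j$ at all. Instead it introduces the map $g(n)=\lfloor (n-2-\log_{2^r}(2M))/a\rfloor$ and proves, via Lemma~\ref{Lemma: Big.clusters_1} and the graph-partition argument of Proposition~\ref{Prop. partition.a.graph}, the single-step contraction
\[
|\mathscr{C}_{rn}(\gamma)|\;\leq\; 2^{-(r-d-1)}\,|\mathscr{C}_{rg(n)}(\gamma)|,
\]
valid whenever every component of $G_{rg(n)}(\gamma)$ has at least $2^r-1$ vertices. Iterating $l_1(n)\coloneqq\max\{m:g^m(n)\geq\ell\}\gtrsim \log_a(n/\ell)$ times gives $|\mathscr{C}_{rn}(\gamma)|\lesssim (n/\ell)^{-(r-d-1)/\log_2 a}\,|\mathscr{C}_{r\ell}(\gamma)|$, a \emph{polynomial} decay in $n$ with exponent exceeding $1$ (by the choice of $r$). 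Summing over $n\geq\ell$ produces exactly the factor $\ell$. The scales where the component hypothesis fails (those with $|\mathscr{G}_{rg^m(n)}(\gamma)|=1$ and small vertex set) are shown to lie within a bounded window near $n_r(\gamma)$, contributing only a constant. So the missing ingredient in your plan is not a telescoping over consecutive scales, but this multiplicative iteration that jumps from $n$ down to $g(n)\approx n/a$.
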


\begin{proof}
Start by noticing that $\gamma \in \Gamma^r(A)$ implies that $\Gamma^r(\gamma) = \{\gamma\}$. Let's assume first that $\ell\geq 2$. Define $g : \mathbb{N} \xrightarrow{} \Z$ by
\begin{equation}
    g(n)\coloneqq \floor*{\frac{n - 2 - \log_{2^r}(2M)}{a}}.
\end{equation}
It was proved in \cite[Proposition 3.13]{Affonso.2021} that 
\begin{equation}\label{Eq: Bound_c_n_by_C_g(n)}
    |\C_{rn}(\gamma)| \leq \frac{1}{2^{r-d-1}}|\C_{rg(n)}(\gamma)|,
\end{equation}
whenever $g(n)>0$, and every connected component of $G_{rg(n)}(\gamma)$ has more than $2^r -1$ vertices. This is equivalent,  by Lemma \ref{Lemma: Big.clusters_1}, to $|\mathscr{G}_{rg(n)}(\gamma)|\geq 2$ or $|\mathscr{G}_{rg(n)}(\gamma)|=1$ with $|v(G_{rg(n)}(\gamma))| \geq 2^r$. Consider then the auxiliary quantities
\begin{align*}
    &l_1(n)\coloneqq\max\{m : g^m(n)\geq \ell\} &\text{and} &&l_2(n)\coloneqq\max\{ m : |\mathscr{G}_{rg^m(n)}(\gamma)| = 1 \text{ and } |v(G_{rg^m(n)})|\leq 2^r-1\}.
\end{align*}

We first show that $l_2(n)$ is not zero for only a constant number of scales $n$. For any $m\leq l_2(n)$, as $\Sp(\gamma)\subset B_{\C_{rg^m(n)}(\gamma)}$, $\diam(\gamma)\leq \diam(B_{\C_{rg^m(n)}(\gamma)})$. For any $\Lambda,\Lambda^\prime\Subset \Z^d$, 
    \begin{equation*}
        \diam(\Lambda\cup \Lambda) \leq \diam(\Lambda) + \diam(\Lambda^\prime) + \dis(\Lambda,\Lambda^\prime),
    \end{equation*}
    and we can always extract a vertex from a connected graph in a way that the induced sub-graph is still connected, by removing a leaf of a spanning tree. Using this we can bound 
\begin{align}\label{Eq: Bound_diam_removing_trees}
    \diam(\gamma)\leq \diam(B_{\C_{rg^m(n)(\gamma)}}) & \leq \sum_{C_{rg^m(n)}\in v(G_{rg^m(n)})} \diam(C_{rg^{m}(n)}) + |v(G_i)|M2^{arg^m(n)} \nonumber\\
    &\leq(d2^{rg^m(n)} + Md^a2^{arg^m(n)})|\C_{rg^m(n)(\gamma)}|\leq 2Md^a2^{arg^m(n)+r},
\end{align}
since the graph $G_{rg^m(n)}\in \mathscr{G}_{rg^m(n)}(\gamma)$ has $v(G_{rg^m(n)}) = \C_{rg^m(n)}(\gamma)$, and $|\C_{rg^m(n)(\gamma)}|\leq 2^{r}-1$. Applying the logarithm with respect to base $2^{r}$ we get
\begin{equation*}
    \log_{2^r}(\diam(\gamma)) \leq \log_{2^r}(2Md^a) + ag^m(n)+1 \leq \log_{2^r}(2Md^a) + \frac{n}{a^{m-1}} + 1
\end{equation*}
Assuming $\diam(\gamma)>2^{2r + 1}Md^a$, we can isolate the term depending on $m$ in the equation above and take the logarithm on both sides to get
\begin{equation*}
    m \leq 1 + \frac{\log_2(n) - \log_2(\log_{2^r}(\diam(\gamma)) - \log_{2^r}(2Md^a) - 1)}{\log_2(a)}.
\end{equation*}
Equation above holds for any element of $\{m : |\mathscr{G}_{rg^m(n)}(A)| = 1, |v(G_{rg^m(n)})|\leq 2^r-1\}$ thus it also holds for $l_2(n)$. This shows in particular that $l_2(n)=0$ for $n<\log_{2^r}(\diam(\gamma)) - \log_{2^r}(2Md^a) - 1$. Taking $N_0 = n_r(\gamma) - \log_{2^r}(2Md^a) - 2$, as $N_0\leq \log_{2^r}(\diam(\gamma)) - \log_{2^r}(2Md^a) - 1$ we can bound 
\begin{equation}\label{Eq: Bound_last_terms_of_V_r_l}
    \sum_{n=N_0}^{n_r(\gamma)}|\C_{rn}(\gamma)| \leq (\log_{2^r}(2Md^a)+2)|\C_{r\ell}(\gamma)|.
\end{equation}

We consider now $n<N_0$. Knowing that $l_2(n)=0$ and $|\C_{k}(\gamma)|\leq |\C_j(\gamma)|$, for all $j\leq k$, we get 
\begin{equation}\label{Eq: bound.on.rn.covering}
    |\C_{rn}(\gamma)|\leq \frac{1}{2^{(r-d-1)l_1(n)}}|\C_{r\ell}(\gamma)|.
\end{equation}
We claim that
\begin{equation}\label{Eq: lower.bound.on.l1}
    l_1(n) \geq \begin{cases}
                        0, &\text{ if }n\leq \overline{b}+\ell\\ 
                        \left\lfloor\frac{\log_2(n) - \log_2(\overline{b} + \ell)}{\log_2(a)}\right\rfloor, & \text{ if }n > \overline{b} + \ell, 
                \end{cases}
\end{equation}
where $\overline{b} = (a+2 + \log_{2^r}(2M))(a-1)^{-1}$. Given $n > \overline{b} + \ell$, consider
\begin{equation*}
    \Tilde{g}(n) = \frac{n - 2 - \log_{2^r}(2M)}{a} - 1.
\end{equation*}
It is clear that $g(n)\geq \Tilde{g}(n)$ and both functions are increasing, therefore $g^m(n)\geq \Tilde{g}^m(n)$ for every $m\geq 0$. As
\begin{equation*}
    \Tilde{g}^m(n) = \frac{n}{a^m} - b^\prime\frac{a^m - 1}{a^{m-1}(a-1)},
\end{equation*}
with $b^\prime = (a+2 + \log_{2^r}(2M))a^{-1}$, it is sufficient to have
\begin{equation*}
    \frac{n}{a^m} -\frac{a b^\prime}{(a-1)}\geq \ell.
\end{equation*}
We get the desired bound by applying the logarithm with base two in the equation above. The bounds \eqref{Eq: bound.on.rn.covering} and \eqref{Eq: lower.bound.on.l1} yields

\begin{align*}
    V_r^\ell(\gamma) &\leq 
     \overline{b}|\C_{r\ell}(\gamma)| + |\C_{r\ell}(\gamma)|2^{r-d-1}(\overline{b}+\ell)^{\frac{r-d-1}{\log_2(a)}}\sum_{n=\overline{b} + \ell}\frac{1}{n^{\frac{r-d-1}{\log_2(a)}}} + (\log_{2^r}(2Md^a)+2)|\C_{r\ell}(\gamma)|\\
    &\leq (\overline{b} + \log_{2^r}(2Md^a) +2)|\C_{r\ell}(\gamma)| +|\C_{r\ell}(\gamma)| 2^{r-d-1}(\overline{b}+1)^{\frac{r-d-1}{\log_2(a)}}\ell^{\frac{r-d-1}{\log_2(a)}}\sum_{n=\ell + 1}^{\infty}\frac{1}{n^{\frac{r-d-1}{\log_2(a)}}}\\
    &\leq \left(\overline{b} + \log_{2^r}(2Md^a) +2+ 2^{r-d-1}(\overline{b}+1)^{\frac{r-d-1}{\log_2(a)}}\frac{\log_2(a)}{r-d-1+\log_2(a)}\right)\ell|\C_{r\ell}(\gamma)|,
\end{align*}
where in the last inequality we used the integral bound $$\sum_{n=\ell + 1}^{\infty}{n^{-\frac{r-d-1}{\log_2(a)}}}\leq \int_{\ell}^\infty {x^{-\frac{r-d-1}{\log_2(a)}}} dx = \frac{\log_2(a)}{r-d-1+\log_2(a)}\ell^{1 - \frac{r-d-1}{\log_2(a)}}.$$
If $\diam(\gamma)\leq 2^{2r + 1}M$, we have
\begin{equation*}
     V_r^\ell(\gamma) \leq (n_r(\gamma) - \ell+1)|\C_{r\ell}(\gamma)| \leq (3 + \log_{2^r}(2M))|\C_{r\ell}(\gamma)|.
\end{equation*}
Taking $b_3^\prime\coloneqq \max\{{2^{r-d+2}(2 + \frac{a}{d -1})(\overline{b} + \log_{2^r}(2Md^a) +3)^{\frac{r-d-1}{\log_2(a)}}}, 3 + \log_{2^r}(2M)\}$ we get the desired bound when $\ell\geq 2$. For $\ell=0$, a trivial bound yields  $V_r^0(\gamma) = 2|\gamma| + V_r^2(\gamma)\leq (2 + b_3^\prime 2)|\gamma|$. Similarly, for $\ell =1$,  $V_r^1(\gamma) = |\C_{r}(\gamma)| + V_r^2(\gamma)\leq (1+ b_3^\prime 2)|\mathscr{C}_{r}(\gamma)|$ and we conclude the proof by taking $b_3 \coloneqq 2(b_3^\prime +1)$.
\end{proof}

We then need to bound the minimal number of $r\ell$-cubes necessary to cover a contour. Using only Lemma \ref{Lemma: Big.clusters_1}, it is possible to prove the next proposition, in the same steps as in \cite[Proposition 3.13]{Affonso.2021}.

\begin{proposition}\label{Prop. Bound.on.C_rl(gamma)_Lucas}
    There exists a constant $b_4^{\prime\prime}\coloneqq b_4^{\prime\prime}(\alpha, d)$ such that for any $A\Subset \Z^d$, $\gamma\in\Gamma(A)$ and $1 \leq \ell\leq n_r(A)$, 
     \begin{equation*}
       |\C_{r\ell}(\gamma)|\leq b_4^{\prime\prime}\frac{|\gamma|}{\ell^{\frac{r-d-1}{\log_2(a)}}}.
    \end{equation*}
\end{proposition}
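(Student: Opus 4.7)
The strategy is to iterate the key recursive bound established earlier in the paper, namely
\begin{equation*}
  |\C_{rn}(\gamma)| \leq \frac{1}{2^{r-d-1}} |\C_{rg(n)}(\gamma)|,
\end{equation*}
valid whenever $g(n) > 0$ and the connected components of $G_{rg(n)}(\gamma)$ are all of size at least $2^r$, where $g(n) = \lfloor (n - 2 - \log_{2^r}(2M))/a \rfloor$. This inequality was used (as equation \eqref{Eq: Bound_c_n_by_C_g(n)}) in the proof of Proposition \ref{Prop. Bound.on.V_r^l(gamma)}, and it expresses the fact that passing from scale $g(n)$ to scale $n$ we lose at least a factor $2^{r-d-1}>1$ in the covering size, because of our choice $r = 4\lceil \log_2(a+1)\rceil + d + 1$.

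Starting from scale $\ell$ and iterating $k$ times, I would obtain
\begin{equation*}
  |\C_{r\ell}(\gamma)| \leq 2^{-k(r-d-1)} |\C_{rg^k(\ell)}(\gamma)| \leq 2^{-k(r-d-1)} |\gamma|,
\end{equation*}
provided that $g^{j}(\ell) > 0$ for all $0\le j\le k$ and that at each such scale Lemma \ref{Lemma: Big.clusters_1} applies. As computed in the proof of Proposition \ref{Prop. Bound.on.V_r^l(gamma)}, one has $g^k(\ell) \geq \ell/a^k - \overline b \cdot a/(a-1)$, so the iteration is legitimate as long as $\ell/a^k$ stays above a fixed constant depending only on $a, M, r, d$. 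Choosing $k$ to be the largest integer satisfying this condition yields
\begin{equation*}
  k \geq \left\lfloor \frac{\log_2(\ell) - \log_2(C)}{\log_2(a)} \right\rfloor
\end{equation*}
for a suitable constant $C = C(a, M, r, d)$. Plugging this $k$ into the iterated inequality gives
\begin{equation*}
  |\C_{r\ell}(\gamma)| \leq 2^{(r-d-1)\log_2(C)/\log_2(a)}\, \ell^{-\frac{r-d-1}{\log_2(a)}}\, |\gamma|,
\end{equation*}
which is the desired bound with $b_4''$ the constant prefactor.

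The main obstacle is verifying that the inductive hypothesis of \eqref{Eq: Bound_c_n_by_C_g(n)} persists under iteration, i.e.\ that at each intermediate scale either $|\mathscr{G}_{rg^j(\ell)}(\gamma)|\geq 2$ or the single connected component has at least $2^r$ cubes. The quantity $l_2(n)$ from the proof of Proposition \ref{Prop. Bound.on.V_r^l(gamma)} (bounding how many scales can fail this condition) is again the right tool: failures occur only on a controlled number of "terminal" scales near $n_r(\gamma)$, and these can be absorbed into the constant $b_4''$ by a trivial $|\C_{rn}(\gamma)| \le |\C_{r\ell}(\gamma)|$ bound. Finally, the edge cases $\diam(\gamma) \le 2^{2r+1}M d^a$ and small $\ell$ are dealt with separately by the crude estimate $|\C_{r\ell}(\gamma)|\leq |\gamma|$ together with $\ell^{(r-d-1)/\log_2(a)} \leq \diam(\gamma)^{(r-d-1)/\log_2(a)}$ being bounded by a constant times $1$, which can likewise be absorbed into $b_4''$.
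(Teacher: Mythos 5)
Your proposal correctly identifies and reconstructs the argument the paper refers to for this proposition: iterate \eqref{Eq: Bound_c_n_by_C_g(n)} roughly $\log_a(\ell)$ times, bound $|\C_{rg^k(\ell)}(\gamma)|$ crudely by $|\gamma|$, and handle the scales where the cluster condition of Lemma \ref{Lemma: Big.clusters_1} may fail together with the small-$\ell$ and small-diameter edge cases by absorbing a constant factor; the paper itself omits the proof, citing \cite[Prop.~3.13]{Affonso.2021} plus Lemma \ref{Lemma: Big.clusters_1}. Two small imprecisions worth fixing but not affecting the argument: when absorbing terminal scales the relevant monotonicity is $|\C_{r\ell}(\gamma)|\leq|\C_{r\ell'}(\gamma)|$ with $\ell'<\ell$ and $\ell-\ell'$ bounded by a constant (not $|\C_{rn}(\gamma)|\leq|\C_{r\ell}(\gamma)|$), and in the edge case the correct comparison is $\ell\leq n_r(\gamma)\approx\log_{2^r}\diam(\gamma)$ rather than $\ell\leq\diam(\gamma)$, which is what makes $\ell^{(r-d-1)/\log_2 a}$ bounded by a constant once $\diam(\gamma)\leq 2^{2r+1}Md^a$.
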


Next, we improve this upper bound using our construction. This is the most relevant property of the new contours. 
\begin{proposition}\label{Prop. Bound.on.C_rl(gamma)}
    There exists constants $b_4\coloneqq b_4(\alpha, d)$ and $b_4^\prime\coloneqq b_4^\prime(\alpha, d)$  such that for any $A\Subset \Z^d$, $\gamma\in\Gamma^r_j(A)$ and $0 \leq \ell<j$,
    
     \begin{equation}\label{Eq: Bound.on.C_rl(gamma)_small_l}
       |\C_{r\ell}(\gamma)|\leq b_4\frac{(\ell \vee 1)^{\kappa}}{2^{ra^\prime\ell}}|\gamma|,
    \end{equation}
    with $a^\prime \coloneqq \frac{(1-\frac{1}{d})}{a -\frac{1}{d}}$ and $\kappa \coloneqq \frac{d+1 + r(1-\frac{1}{d}) (a+2 - d^{-1}+ \log_{2^r}(2M))(a-d^{-1})^{-1}}{\log_2(a + 1 -d^{-1})} $.
    Moreover, for $\ell\geq j$
    \begin{equation}\label{Eq: Bound.on.C_rl(gamma)_large_l}
        |\C_{r\ell}(\gamma)|\leq b_4^\prime \ell^{\kappa} \left(\frac{|\gamma|}{2^{r\frac{a^\prime}{a} \ell}} \vee 1\right).
    \end{equation}
\end{proposition}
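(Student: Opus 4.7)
The plan is to refine Proposition~\ref{Prop. Bound.on.C_rl(gamma)_Lucas} from polynomial to exponential decay in $\ell$ by exploiting the quantitative lower bound supplied by Lemma~\ref{Lemma: Big.clusters_2}: for $\gamma\in\Gamma^r_j(A)$ and $\ell<j$, every connected component $G\in\mathscr{G}_{r\ell}(\gamma)$ satisfies $|v(G)|\ge k_\ell:=2^{r(1-1/d)\ell}$. The original argument behind Proposition~\ref{Prop. Bound.on.C_rl(gamma)_Lucas} only extracts the crude $|v(G)|\ge 2^r$ coming from Lemma~\ref{Lemma: Big.clusters_1}; the fact that $k_\ell$ grows exponentially with $\ell$ is precisely what upgrades the decay rate to $2^{-ra'\ell}$.

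The heart of the proof is the improved one-step recursion
\begin{equation*}
|\C_{r\tilde\ell}(\gamma)|\le\frac{2^{d+1}}{k_\ell}\,|\C_{r\ell}(\gamma)|,\qquad \tilde\ell:=\lceil(a+1-1/d)\ell+C'\rceil,
\end{equation*}
valid for $\ell<j$, where $C'$ depends only on $d,a,M$. For each $G\in\mathscr{G}_{r\ell}(\gamma)$, I apply Proposition~\ref{Prop. partition.a.graph} with threshold $k_\ell$ to partition $v(G)$ into $\lceil|v(G)|/k_\ell\rceil$ connected subgraphs of size between $k_\ell$ and $2k_\ell$. Any such piece $H$ contains at most $2k_\ell$ $r\ell$-cubes pairwise connected in $G_{r\ell}(\gamma)$, so any spanning tree of $H$ has at most $2k_\ell$ edges each of graph-length $\le M 2^{ar\ell}$; hence the region $B_{v(H)}$ has $\ell_1$-diameter at most $2k_\ell\,(M 2^{ar\ell}+d 2^{r\ell})=O(2^{r(a+1-1/d)\ell})$ and therefore fits inside at most $2^d$ cubes at scale $r\tilde\ell$. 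Summing the number of pieces over $G\in\mathscr{G}_{r\ell}(\gamma)$ and using the consequence $|\mathscr{G}_{r\ell}(\gamma)|\le|\C_{r\ell}(\gamma)|/k_\ell$ of Lemma~\ref{Lemma: Big.clusters_2} produces the recursion.

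To extract \eqref{Eq: Bound.on.C_rl(gamma)_small_l}, I iterate the recursion \emph{backwards}. Starting from $\ell_0:=\ell$ and setting $\ell_{k+1}:=(\ell_k-C')/b$ with $b:=a+1-1/d$, the sequence decays geometrically and reaches the trivial regime $\ell_m\le 1$ after $m=O(\log_b\ell)$ steps, at which point the trivial estimate $|\C_{r\ell_m}(\gamma)|\le|\gamma|$ applies. Telescoping the recursion,
\begin{equation*}
|\C_{r\ell}(\gamma)|\le\frac{2^{(d+1)m}}{2^{r(1-1/d)\sum_{k=1}^{m}\ell_k}}\,|\gamma|,
\end{equation*}
and the closed form $\ell_k=\ell/b^k-C'(1-b^{-k})/(b-1)$ yields $\sum_{k=1}^m\ell_k=\ell/(a-1/d)-C'm/(a-1/d)+O(1)$. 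The delicate point, and main obstacle, is that the $-O(m)$ correction coming from the additive constants must be subtracted from the exponent and reabsorbed into an enlarged prefactor $2^{(d+1+ra'C')m}$; this is what produces the exact rate $ra'\ell$ with $a'=(1-1/d)/(a-1/d)$, whereas a naive tally would leave the correction in the exponent and give only the strictly weaker rate $(1-1/d)/(b-1+\text{corr})$. Using $m\le\log_b\ell+O(1)$ converts the enlarged prefactor into a polynomial $b_4\ell^\kappa$ with $\kappa$ of the form $[(d+1)+ra'C']/\log_2 b$, which matches the stated expression for $\kappa$.

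For the regime $\ell\ge j$ the recursion is unavailable at scale $\ell$ itself, but the bound at $\ell=j-1$ still applies, and the monotonicity $|\C_{r\ell}(\gamma)|\le|\C_{r(j-1)}(\gamma)|$ transports \eqref{Eq: Bound.on.C_rl(gamma)_small_l} across the range $j\le\ell\le a(j-1)$, since there $ra'(j-1)\ge r(a'/a)\ell$. For the residual range $\ell>a(j-1)$ I fall back on Proposition~\ref{Prop. Bound.on.C_rl(gamma)_Lucas} to produce the required $\ell^\kappa$ factor, combined with the trivial bound $|\C_{r\ell}(\gamma)|\le 1$ that becomes valid once $2^{r\ell}$ exceeds $\mathrm{diam}(V(\gamma))$; the latter is controlled using the partition constraint $|V(\gamma)|\le 2^{rj(d+1)}$. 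The ``$\vee 1$'' clause in \eqref{Eq: Bound.on.C_rl(gamma)_large_l} absorbs the transition between the regime where $|\gamma|/2^{r(a'/a)\ell}$ is still large and the regime where $\gamma$ already fits into a single $r\ell$-cube.
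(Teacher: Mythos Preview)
Your argument for the regime $\ell<j$ is essentially the paper's: both of you establish a one-step recursion of the form $|\C_{r\ell}(\gamma)|\le 2^{d+1}2^{-r(1-1/d)f(\ell)}|\C_{rf(\ell)}(\gamma)|$ with $f(\ell)\approx \ell/(a+1-1/d)$, obtained by combining Lemma~\ref{Lemma: Big.clusters_2} with the graph-splitting Proposition~\ref{Prop. partition.a.graph}, iterate it $O(\log\ell)$ times down to the trivial scale, and absorb the accumulated $2^{O(m)}$ factors into the polynomial prefactor $\ell^\kappa$. Your treatment of $j\le\ell\le a(j-1)$ by monotonicity from $\ell=j-1$ also matches the paper.

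The gap is in the residual range $\ell>a(j-1)$. Your claim that $\diam(V(\gamma))$ is controlled by the constraint $|V(\gamma)|\le 2^{rj(d+1)}$ is unjustified: $V(\gamma)$ need not be connected (take $\Sp(\gamma)$ to be two small simply-connected pieces far apart, which is allowed at step $j$ provided the pieces lie in the same component of $G_{rj}$), so its volume does not bound its diameter. And Proposition~\ref{Prop. Bound.on.C_rl(gamma)_Lucas} alone yields only polynomial decay $|\gamma|/\ell^{p}$, which cannot produce the exponential factor $2^{-r(a'/a)\ell}$ needed when $|\gamma|/2^{r(a'/a)\ell}\ge 1$. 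Even if you replace your diameter claim by the correct one coming from the connectivity of $G_{rj}(\gamma)$, namely $\diam(\gamma)\lesssim |\C_{rj}(\gamma)|\cdot 2^{arj}$, the trivial bound $|\C_{r\ell}(\gamma)|=O(1)$ only activates at $\ell\gtrsim (a+d+1-a')j$, and neither monotonicity nor Proposition~\ref{Prop. Bound.on.C_rl(gamma)_Lucas} covers the remaining window $aj\lesssim\ell\lesssim (a+d+1-a')j$, whose width is proportional to $j$. The paper handles this range differently: it exploits that $G_{rj}(\gamma)$ is a single connected graph (since $\gamma$ was removed as one component at step $j$), applies Proposition~\ref{Prop. partition.a.graph} \emph{at scale $rj$} with threshold $k=\lceil 2^{r(\ell-aj)}\rceil$ to split $G_{rj}(\gamma)$ into $\lceil |\C_{rj}(\gamma)|/k\rceil$ connected pieces each of diameter $O(2^{r\ell})$, and reads off $|\C_{r\ell}(\gamma)|\le C\lceil |\C_{rj}(\gamma)|\,2^{r(aj-\ell)}\rceil$; plugging in the bound on $|\C_{r(j-1)}(\gamma)|$ from the first part then gives~\eqref{Eq: Bound.on.C_rl(gamma)_large_l}.
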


\begin{proof}
     Lets first consider $\ell < j$. Define $f : \mathbb{N} \xrightarrow{} \Z$ by
\begin{equation}
    f(\ell)\coloneqq \floor*{\frac{\ell - \log_{2^r}(2M) - 1}{a + (1-\frac{1}{d})}}.
\end{equation}
Following the proof of \eqref{Eq: Bound_c_n_by_C_g(n)} in \cite[Proposition 3.13]{Affonso.2021}, we can show that 
\begin{equation}\label{Eq: Bound_C_l_by_C_f(l)}
    |\C_{r\ell}(\gamma)| \leq \frac{2^{d+1}}{2^{r(1-\frac{1}{d})f(\ell)}}|\C_{rf(\ell)}(\gamma)|.
\end{equation}
 By definition, $\mathscr{G}_{rf(\ell)}(\gamma)$ is the set of all connected components of $G_{rf(\ell)}(\gamma)$, hence
    \begin{equation}\label{Eq: 3.16.Lucas}
        |\C_{rf(\ell)}(\gamma)| = 2^{r(1-\frac{1}{d})f(\ell)}\sum_{G\in \mathscr{G}_{rf(\ell)}(\gamma)}\frac{|v(G)|}{2^{r(1-\frac{1}{d})f(\ell)}}.
    \end{equation}
    Proposition \ref{Prop. partition.a.graph} guarantees that we can split $G$ into sub-graphs $G_i$, with $1\leq i\leq \ceil{v(G)/2^{r(1-\frac{1}{d})f(\ell)}}$ and $|v(G_i)|\leq 2^{r(1-\frac{1}{d})f(\ell)+1}$. Proceeding as in \eqref{Eq: Bound_diam_removing_trees}, we can bound 
    \begin{align*}
        \diam(B_{v(G_i)}) &\leq \sum_{C_{rf(\ell)}\in v(G_i)} \diam(C_{rf(\ell)}) + |v(G_i)|M2^{arf(\ell)}\\
        &\leq |v(G_i)|(d2^{rf(\ell)} + M2^{arf(\ell)}) \leq 2M2^{r[f(\ell)(1-\frac{1}{d}) + a] + 1}\\
        &\leq 2^{r\ell}.
    \end{align*}
    
    The last inequality holds since $M,a,r\geq 1$. This shows that every $G_i$ can be covered by a cube with center in $\Z^d$ and side length $2^{r\ell}$. Every such cube can be covered by at most $2^d$ $r\ell$-cubes. Indeed, it is enough to consider the simpler case when the cube is of the form
    \begin{equation}\label{Cube.Q}
        \prod_{i=1}^d[q_i , q_i + 2^{r\ell})\cap\Z^d,
    \end{equation}
    with $q_i\in\{0, 1, \dots, 2^{r\ell}-1\}$, for $1\leq i \leq d$. It is easy to see that \begin{equation*}
        [q_i , q_i + 2^{r\ell }]\subset [0, 2^{r\ell})\cup [2^{r\ell}, 2^{r\ell +1}). 
    \end{equation*} 
    Taking the products for all $1\leq i\leq d$, we get $2^d$ $r\ell$-cubes that covers \eqref{Cube.Q}. 
    We conclude that, to cover a connected component $G\in \mathscr{G}_{rf(\ell)}$, we need at most $2^d\ceil{|v(G)|/2^{r(1-\frac{1}{d})f(\ell)}}$ $rf(\ell)$-cubes, yielding us
    \begin{equation}\label{Eq: 3.18.Lucas}
        |\C_{r\ell}(\gamma)|\leq |\C_{r\ell}( B_{\C_{rf(\ell)}(\gamma)})| \leq \sum_{G\in \mathscr{G}_{rf(\ell)}} |\C_{r\ell}(v(G))| \leq \sum_{G\in \mathscr{G}_{rf(\ell)}} 2^d\left\lceil{\frac{|v(G)|}{2^{r(1-\frac{1}{d})f(\ell)}}}\right\rceil. 
    \end{equation}
    When every connected component of $G_{rf(\ell)}(\gamma)$ has more than $2^{r(1-\frac{1}{d})f(\ell)}$ vertices, we can bound 
    \begin{equation*}
        \frac{1}{2}\left\lceil{\frac{|v(G)|}{2^{r(1-\frac{1}{d})f(\ell)}}}\right\rceil \leq {\frac{|v(G)|}{2^{r(1-\frac{1}{d})f(\ell)}}}.
    \end{equation*}
    Together with Inequalities \eqref{Eq: 3.16.Lucas}  and \eqref{Eq: 3.18.Lucas}, this yields
    \begin{equation}\label{Eq: Bound_C_rl_by_gamma_with_f(l)}
         |\C_{r\ell}(\gamma)| \leq \sum_{G\in \mathscr{G}_{rf(\ell)}} 2^{d+1} {\frac{|v(G)|}{2^{r(1-\frac{1}{d})f(\ell)}}}= \frac{2^{d+1}}{2^{r(1-\frac{1}{d})f(\ell)}}|\C_{rf(\ell)}(\gamma)|.
    \end{equation}

    Equation \eqref{Eq: Bound_C_l_by_C_f(l)} can be iterated as long as $f(\ell)$ is positive. Considering then the auxiliary quantity
    \begin{equation*}
        m(\ell)\coloneqq\max\{m : f^m(\ell)\geq 0\},  
    \end{equation*}
we have 
\begin{equation}\label{Eq: Bound_C_rl_by_gamma_with_f(l)_2}
    |\C_{r\ell}(\gamma)| \leq \frac{2^{(d+1)m(\ell)}}{2^{r\left(1-\frac{1}{d}\right)\left(\sum_{i=1}^{m(\ell)}f^i(\ell)\right)}}|\gamma|,
\end{equation}
so we need upper and lower estimates for $m(\ell)$.
We claim that
\begin{equation}\label{Eq: lower.bound.on.m}
    m(\ell) \geq \begin{cases}
                        0, &\text{ if }\ell\leq \overline{b}\\ 
                        \left\lfloor\frac{\log_2(\ell) - \log_2(\overline{b})}{\log_2(a + (1-\frac{1}{d}))}\right\rfloor, & \text{ if }\ell > \overline{b}, 
                \end{cases}
\end{equation}
where $\overline{b} = (\overline{a}+1 + \log_{2^r}(2M))(\overline{a}-1)^{-1}$ and $\overline{a}\coloneqq a + (1-\frac{1}{d})$. Given $\ell > \overline{b}$, consider
\begin{equation*}
     \overline{f}(\ell) = \frac{\ell - 1 - \log_{2^r}(2M)}{a + (1 - \frac{1}{d})} - 1.
\end{equation*}
It is clear that $f(\ell)\geq \overline{f}(\ell)$ and both functions are increasing, therefore $f^m(\ell)\geq \overline{f}^m(\ell)$ for every $m\geq 0$.  As
\begin{equation*}
       \overline{f}^m(\ell) = \frac{\ell}{\overline{a}^m} - b^\prime\frac{\overline{a}^m - 1}{\overline{a}^{m-1}(\overline{a}-1)},
\end{equation*}
with $b^\prime = (\overline{a}+1 + \log_{2^r}(2M))\overline{a}^{-1}$, it is sufficient to have
\begin{equation*}
    \frac{\ell}{\overline{a}^m} -\frac{\overline{a} b^\prime}{(\overline{a}-1)}\geq 0.
\end{equation*}
We get the desired bound by applying the logarithm with base two in the equation above. Moreover, we can bound
\begin{align*}
    \sum_{i=1}^{m(\ell)}f^{i}(\ell) & \geq  \sum_{i=1}^{m(\ell)}\frac{\ell}{\overline{a}^i} - m(\ell)\frac{\overline{a}b^\prime}{\overline{a} - 1} = \frac{1}{\overline{a}}(\frac{1-\frac{1}{\overline{a}^{m(\ell)}}}{{1-\frac{1}{\overline{a}}}})\ell - m(\ell)\overline{b} \\
    &\geq \frac{1}{\overline{a}-1}(1-\frac{1}{\overline{a}^{m(\ell)}})\ell - m(\ell)\overline{b} \geq \frac{1}{\overline{a}-1}(\ell-{\overline{a}\overline{b}}) - m(\ell)\overline{b}
\end{align*}

For the upper bound on $m(\ell)$, take $\Tilde{f}(\ell) \coloneqq \frac{\ell}{a + (1-\frac{1}{d})}$. As $f(\ell)\leq \Tilde{f}(\ell)$ and $\Tilde{f}$ is increasing, for every $m\geq 0$,  $f^m(\ell)\leq \Tilde{f}^m(\ell)$. Notice that, if $\Tilde{f}^m(\ell)\leq 1$, $f^{m+1}(\ell)<0$, and therefore $m+1>m(\ell)$. As $\Tilde{f}^m(\ell)\leq 1$ if and only if $\ell \leq [a + (1-\frac{1}{d})]^m$, taking $m=\left\lceil\frac{\log_2(\ell)}{\log_2\left(a + (1-\frac{1}{d})\right)}\right\rceil$ we get $\left\lceil\frac{\log_2(\ell)}{\log_2\left(a + (1-\frac{1}{d})\right)}\right\rceil + 1 > m(\ell)$.
Applying this bound on \eqref{Eq: Bound_C_rl_by_gamma_with_f(l)_2} we conclude that 
\begin{equation}\label{Eq: bound_on_C_ell_covering_l_geq_ab}
     |\C_{r\ell}(\gamma)| \leq \frac{2^{d+1 + r(1-\frac{1}{d})\left(\frac{\overline{a}}{\overline{a}-1} + 1\right)\overline{b}}\ell^{\frac{d+1 + r(1-\frac{1}{d})\overline{b}}{\log_2(\overline{a})}}}{2^{r(1-\frac{1}{d})\frac{1}{\overline{a}-1}\ell}}|\gamma|,
\end{equation}
for $\ell>\overline{b}$. When $\ell\leq\overline{b}$, we can take $\overline{b}_4\coloneqq \min\{{ (j\vee 1)^{\frac{d+1 + r(1-\frac{1}{d})\overline{b}}{\log_2(\overline{a})}}{2^{-r(1-\frac{1}{d})\frac{1}{\overline{a}-1}j}}} : 0\leq j \leq \overline{b}\}$ and then 
\begin{equation*}
       |\C_{r\ell}(\gamma)| \leq |\gamma|\leq \frac{1}{\overline{b}_4}\frac{(\ell\vee 1)^{\frac{d+1 + r(1-\frac{1}{d})\overline{b}}{\log_2(\overline{a})}}}{2^{r(1-\frac{1}{d})\frac{1}{\overline{a}-1}\ell}}|\gamma|.
\end{equation*}
This, together with equation \eqref{Eq: bound_on_C_ell_covering_l_geq_ab}, yields inequality \eqref{Eq: Bound.on.C_rl(gamma)_small_l} with $b_4 \coloneqq \max\{ 2^{d+1 + r(1-\frac{1}{d})\left(\frac{\overline{a}}{\overline{a}-1} + 1\right)\overline{b}}, \overline{b}_4^{-1}\}$.

To prove inequality \eqref{Eq: Bound.on.C_rl(gamma)_large_l}, we first notice that for any $\ell\geq j$,
\begin{align}\label{Eq:  Bound.on.C_rl(gamma)_large_l_aux_1}
    |\C_{r\ell}(\gamma)|\leq |\C_{r(j-1)}(\gamma)| \leq  b_42^{ra^\prime}\frac{j^\kappa}{2^{ra^\prime j}}|\gamma|.
\end{align}
When $\ell \leq aj$, this already gives us \eqref{Eq:  Bound.on.C_rl(gamma)_large_l}. For $\ell > aj$, we can give a better bound once we notice that, by the construction of the contour, the graph $G_{rj}(\gamma)$ is connected and its vertices are the covering $\C_{rj}(\gamma)$. In a similar fashion as done previously, by Proposition \ref{Prop. partition.a.graph} we can split $G_{rj}(\gamma)$ into $\ceil{|v(G_{rj}(\gamma))|/k}$ connected sub-graphs $G_1,\dots, G_k$, with $k\coloneqq { \left\lceil \frac{2^{r\ell}}{2^{raj}} \right\rceil}$ and $|v(G_i)|\leq 2 \left\lceil \frac{2^{r\ell}}{2^{raj}} \right\rceil$ for all $i=1,\dots, k$. Assuming $\ell > aj$, we have $|v(G_i)|\leq 2^{r(\ell - aj) + 2}$. As
  \begin{align*}
        \diam(B_{v(G_i)}) &\leq |v(G_i)|(d2^{rj} + M2^{arj}) \leq  2Md2^{raj}|v(G_i)|\\
        &\leq 8Md2^{r\ell},
    \end{align*}
$B_{v(G_i)}$ can be covered by $(8Md)^d$ cubes centered in $\Z^d$ with side length $2^{r\ell}$. As we seen before, every such cube can be covered by at most $2^d$ $r\ell$-cubes, therefore $|\C_{r\ell}(B_{v(G_i)})| \leq (16Md)^d$ and we conclude that 
\begin{align}\label{Eq:  Bound.on.C_rl(gamma)_large_l_aux_2}
    |\C_{r\ell}(\gamma)| \leq \sum_{i=1}^{\ceil{|v(G_{rj}(\gamma))|/k}} |\C_{r\ell}(B_{v(G_i)})| &\leq (16Md)^d \left\lceil \frac{|\C_{rj}(\gamma)|}{\frac{2^{r\ell}}{2^{raj}}} \right\rceil \nonumber \\ 
    &\leq 2(16Md)^d b_42^{ra^\prime} j^\kappa\left(\frac{2^{r(aj - \ell)}}{2^{ra^\prime j}}|\gamma| \vee 1\right).
\end{align}
As we are assuming $\ell> aj$, $\frac{2^{r(aj - \ell)}}{2^{ra^\prime j}} \leq 2^{-r\frac{a^\prime}{a}\ell}$, taking $b_4^\prime \coloneqq 2(16Md)^d b_42^{ra^\prime}$ we conclude the proof.
\end{proof}

For any non-negative $V, M, a, r$, define
\begin{equation*}
    \mathcal{F}_{V}^\ell\coloneqq \{ \C_{r\ell} : V_r^\ell(B_{\C_{r\ell}}) = V, B_{\C_{r\ell}}\subset [-\diam(B_{\C_{r\ell}}), \diam(B_{\C_{r\ell}})]^d\}.
\end{equation*}

Using equation \eqref{Eq: Bound.on.N}, in the same steps as \cite[Proposition 3.11]{Affonso.2021}, we can show that the number of collections in $\mathcal{F}_V$ is exponentially bounded by $V$.

\begin{proposition}\label{Prop. Bound.on.Fv}
    There exists $b_5\coloneqq b_5(d,r)$ such that
    \begin{equation}\label{Eq: Bound.on.F_V}
        |\mathcal{F}^\ell_V| \leq e^{b_5V}.
    \end{equation}
\end{proposition}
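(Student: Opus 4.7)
The plan is to enumerate each $\C_{r\ell} \in \mathcal{F}_V^\ell$ by reconstructing it from the top scale down, using the subordination counting bound \eqref{Eq: Bound.on.N} at each scale transition. The first key observation is that, since the $m$-cubes partition $\Z^d$ and refine each other perfectly as $m$ varies, the minimal covering $\C_{rn}(\Lambda)$ at scale $rn$ of any $\Lambda \Subset \Z^d$ is the unique covering $\{C_{rn} : C_{rn} \cap \Lambda \neq \emptyset\}$. Setting $\Lambda = B_{\C_{r\ell}}$, $V_n = |\C_{rn}(\Lambda)|$, and $n_r = n_r(\Lambda)$, this implies $\C_{r\ell}(\Lambda) = \C_{r\ell}$ and that the chain of coverings is automatically nested: $\C_{rn}(\Lambda) \preceq \C_{r(n+1)}(\Lambda)$ for all $\ell \leq n < n_r$, with $V_{n+1} \leq V_n$ and $\sum_{n=\ell}^{n_r} V_n = V$ by the definition of $V_r^\ell(\Lambda)$. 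In particular, $\C_{r\ell}$ is fully determined by the nested sequence of coverings $(\C_{rn}(\Lambda))_{n=\ell}^{n_r}$.

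Given this reduction, I would organize the count by first fixing $n_r$ (at most $V$ values, since $V_n \geq 1$ forces $n_r - \ell + 1 \leq V$), then fixing a weakly decreasing size sequence $(V_n)_{n=\ell}^{n_r}$ with $\sum V_n = V$ (at most $p(V) \leq e^{c\sqrt{V}}$ such sequences, where $p$ is the partition function), and finally choosing the coverings themselves from coarsest to finest. At the top scale $n_r$, the positional constraint $\Lambda \subset [-\diam(\Lambda), \diam(\Lambda)]^d$ combined with $2^{rn_r} \geq \diam(\Lambda)$ restricts the candidate $rn_r$-cubes meeting this box to at most $3^d$, so the number of admissible top-level configurations is bounded by $2^{3^d}$. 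For each subsequent scale transition from $n+1$ to $n$, equation \eqref{Eq: Bound.on.N} bounds the number of $rn$-coverings subordinated to a fixed $\C_{r(n+1)}(\Lambda)$ with $V_n$ cubes by
\begin{equation*}
N(\C_{r(n+1)}(\Lambda), rn, V_n) = \binom{2^{rd} V_{n+1}}{V_n} \leq 2^{2^{rd} V_{n+1}}.
\end{equation*}

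Multiplying these bounds telescopes the exponent into $\sum_{n=\ell+1}^{n_r} V_n \leq V$, giving a product of at most $2^{2^{rd} V}$. Assembling all factors produces
\begin{equation*}
|\mathcal{F}_V^\ell| \leq V \cdot p(V) \cdot 2^{3^d} \cdot 2^{2^{rd} V} \leq e^{b_5 V}
\end{equation*}
for a suitable constant $b_5 = b_5(d,r)$, as claimed. The most substantive step is really the first one: recognizing that in this grid-based setup, the minimal cover coincides with the tiling cover and that these nest canonically across scales, so that $\C_{r\ell}$ is reconstructed from a deterministic sequence of coverings with no extra combinatorial branching from alternative nested chains. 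Once this structural fact is in place, the remaining argument is a routine application of \eqref{Eq: Bound.on.N} together with elementary bookkeeping, and I expect no genuine obstacle beyond tracking constants.
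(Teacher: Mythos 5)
Your proof is correct and follows essentially the same strategy as the paper's: reduce to counting nested chains of minimal coverings across scales, fix the top scale $n_r$ (at most $V$ choices), fix a weakly decreasing size sequence, and then bound the number of chains by iterating the subordination count $N(\cdot,\cdot,\cdot)$ from coarse to fine. The two small differences are cosmetic: you bound the number of size sequences by the partition function $p(V)\leq e^{c\sqrt V}$ (using $\sum V_n = V$ exactly), whereas the paper uses the cruder $2^V$ bound for sequences with $\sum V_n \leq V$; and you bound each binomial by $\binom{n}{k}\leq 2^n$, whereas the paper uses $\binom{n}{k}\leq (en/k)^k$. Your choices yield a larger constant (exponential in $rd$ rather than linear) but still a valid $b_5(d,r)$, which is all that is needed. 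The structural observation you lead with — that minimal coverings by the grid cubes nest canonically across scales, so $\C_{r\ell}$ is determined by a unique chain — is exactly the (implicit) reduction the paper uses when it passes from $\mathcal{F}^\ell_{V,m}$ to the count over nested families $(\C_{rn})_{n=\ell}^{m}$.
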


\begin{proof}
We start by splitting $\mathcal{F}^\ell_V$ into $\mathcal{F}^\ell_{V,m} \coloneqq \{ \C_{r\ell}\in \mathcal{F}^\ell_V : n_r(B_{\C_{r\ell}})=m \}$. Since $\ell\leq n_r(B_{\C_{r\ell}}) \leq V_r^\ell(B_{\C_{r\ell}}) +\ell$, we get

\begin{equation}
    |\mathcal{F}^\ell_V| \leq \sum_{m=\ell}^{V+\ell} |\mathcal{F}^\ell_{V,m}|. 
\end{equation}
Denoting $(V_{rn})_{n=\ell}^{m}$ an arbitrary family of natural numbers satisfying 
\begin{equation}\label{Eq: sum.of.V_rn}
    \sum_{n=\ell}^{m} V_{rn}\leq  V,
\end{equation}
with $V_{rn}\leq V_{r(n-1)}$, we can bound
\begin{align}\label{Eq: bound.FVL}
    |\mathcal{F}^\ell_{V,m}| \leq \sum_{(V_{rn})_{n=\ell}^{m}} |\{\C_{r\ell} : B_{\C_{r\ell}} \subset [-2^{rm},2^{rm}]^d, |\C_{rn}(B_{\C_{r\ell}})| = V_{rn}, \text{ for every } \ell \leq n\leq m, n_r(B_{\C_{r\ell}}) = m\}|.
\end{align}
 As  $[-2^{rm},2^{rm}]^d$ is a cube centered in $\Z^d$ with side length $2^{rm+1}$, it can be covered by $3^d$ ${(rm+1)}$-cubes, as we showed in Proposition \ref{Prop. Bound.on.C_rl(gamma)}. So, denoting $\C^0_{rm+1} \coloneqq \C_{rm+1}([-2^{rm},2^{rm}]^d)$, we have $|\C^0_{rm+1}| \leq 3^d$.
 
We can give an upper bound to the right-hand side of equation \eqref{Eq: bound.FVL} by counting the number of families $(\C_{rn})_{n=\ell}^m$ such that $\C_{rn}\preceq \C_{r(n+1)}$, for $n<m$, and $\C_{rm}\preceq \C^0_{rm+1}$, yielding us 
\begin{align*}
    |\mathcal{F}^\ell_{V,m}| 
    &\leq \sum_{(V_{rn})_{n=\ell}^{m-1}} |\{ (\C_{rn})_{n=\ell}^{m} : |\C_{rn}|=V_{rn}, \C_{rn}\preceq \C_{r(n+1)},\C_{rm}\preceq \C^0_{rm+1}\}| \\
    & \leq  \sum_{(V_{rn})_{n=\ell}^{m-1}} \sum_{\substack{\C_{rm}\preceq \C^0_{rm+1}\\ |\C_{rm}|=V_{rm}}}\sum_{\substack{\C_{r(m-1)} \\ |\C_{r(m-1)}|=V_{r(m-1)}\\ \C_{r(m-1)\preceq \C_{rm}}}} \cdots \sum_{\substack{\C_{r(\ell+1)} \\ |\C_{r(\ell+1)}|=V_{r(\ell+1)}\\ \C_{r(\ell+1)\preceq \C_{r(\ell+2)}}}} N(\C_{r(\ell+1)}, r\ell, V_{r\ell}).           
\end{align*}
Iterating equation \eqref{Eq: Bound.on.N} we get that
\begin{align*}
    |\mathcal{F}^\ell_{V,m}| &\leq \sum_{(V_{rn})_{n=\ell}^{m}}\left( \frac{2^{d}e |\C^0_{rm+1}|}{V_{rm}}\right)^{V_{rm}}\prod_{n=\ell}^{m-1}\left( \frac{2^{rd}e V_{r(n+1)}}{V_{rn}}\right)^{V_{rn}}\\
    &\leq  \sum_{(V_{rn})_{n=\ell}^{m-1}}\left( \frac{2^{d}e3^d}{V_{rm}}\right)^{V_{rm}}\prod_{n=\ell}^{m-1}e^{(rd\log(2) +1)V_{rn}} \\
    &\leq \sum_{(V_{rn})_{n=\ell}^{m-1}}e^{(d\log(2) + 1 + d\log(3))V_{rm}}\prod_{n=\ell}^{m-1}e^{(rd\log(2) +1)V_{rn}} \leq  \sum_{(V_{rn})_{n=\ell}^{m-1}}e^{(rd\log(2) + 1 + d\log(3))V}
\end{align*}

As the number of solutions of \eqref{Eq: sum.of.V_rn} is bounded by $2^V$, we conclude that 
\begin{equation*}
     |\mathcal{F}^\ell_{V}| \leq \sum_{m=\ell}^{V+\ell}|\mathcal{F}_{V,m}^\ell| \leq  V2^Ve^{(rd\log(2) + 1 + d\log(3))V},
\end{equation*}
therefore equation \eqref{Eq: Bound.on.F_V} holds for $b_5\coloneqq [rd +1]\log(2) + 2 + d\log(3)$.
\end{proof}

With these propositions we can control the number of coverings of contours at a given scale, that its, we can give an upper bound to $\left|\C_{r\ell}\left( \mathcal{C}_0(n) \right) \right| = \left|\{\C_{r\ell} : \C_{r\ell}=\C_{r\ell}(\gamma) \ \text{for some }\gamma\in\mathcal{C}_0(n)\} \right|$.

\begin{proposition}\label{Prop: Bound_on_rl_coverings}
    Let $n\geq 0$, $\Lambda\Subset\Z^d$. There exists a constant $b_6\coloneqq b_6(a,d)>0$ such that,
    \begin{equation*}
        |\C_{r\ell}(\mathcal{C}_0(n))|\leq \exp{\left\{ b_6 (\ell\vee 1)^{\kappa+1}\left(\frac{n}{2^{r\frac{a^\prime}{a}\ell}}\vee 1\right)    \right\}}.
    \end{equation*}
\end{proposition}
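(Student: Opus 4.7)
The idea is to reduce the count $|\C_{r\ell}(\mathcal{C}_0(n))|$ to the entropy estimate of Proposition~\ref{Prop. Bound.on.Fv}, after using Propositions~\ref{Prop. Bound.on.C_rl(gamma)} and~\ref{Prop. Bound.on.V_r^l(gamma)} to control $V_r^\ell(\gamma)$ uniformly in $\gamma\in\mathcal{C}_0(n)$.

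First, unify the two cases of Proposition~\ref{Prop. Bound.on.C_rl(gamma)}. For each $\gamma\in\mathcal{C}_0(n)$ there is a unique $j\geq 1$ with $\Sp(\gamma)\in\Gamma^r_j$. Since $a\geq 1$ gives $2^{-ra'\ell}\leq 2^{-ra'\ell/a}$, the bound \eqref{Eq: Bound.on.C_rl(gamma)_small_l} (valid for $\ell<j$) is dominated by \eqref{Eq: Bound.on.C_rl(gamma)_large_l} (valid for $\ell\geq j$), so uniformly in $\ell\geq 0$,
$$
|\C_{r\ell}(\gamma)|\leq \tilde b_4\,(\ell\vee 1)^\kappa\Bigl(\frac{n}{2^{ra'\ell/a}}\vee 1\Bigr),\qquad \tilde b_4:=\max\{b_4,b_4'\}.
$$
Applying Proposition~\ref{Prop. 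Bound.on.V_r^l(gamma)} then gives
$$
V_r^\ell(\gamma)\leq b_3(\ell\vee 1)|\C_{r\ell}(\gamma)|\leq V^*,\qquad V^*:=\bigl\lceil b_3\tilde b_4(\ell\vee 1)^{\kappa+1}(n/2^{ra'\ell/a}\vee 1)\bigr\rceil.
$$

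Next, I verify the centering condition. Since $0\in V(\gamma)$ and $V(\gamma)$ is contained in the bounding box of $\Sp(\gamma)$ (any lattice point strictly outside this box belongs to the unbounded component of $\Sp(\gamma)^c$, hence not to $V(\gamma)$), we have $V(\gamma)\subset[-\diam(\gamma),\diam(\gamma)]^d$. Because each $r\ell$-cube of $\C_{r\ell}(\gamma)$ meets $\Sp(\gamma)\subset V(\gamma)$,
$$
B_{\C_{r\ell}(\gamma)}\subset[-\diam(\gamma)-2^{r\ell},\diam(\gamma)+2^{r\ell}]^d\subset[-2\diam(B_{\C_{r\ell}(\gamma)}),2\diam(B_{\C_{r\ell}(\gamma)})]^d,
$$
using $\diam(B_{\C_{r\ell}(\gamma)})\geq\max\{\diam(\gamma),2^{r\ell}\}$. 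This matches the centering condition in the definition of $\mathcal{F}^\ell_V$ up to an absolute factor $2$; inspection of the proof of Proposition~\ref{Prop. Bound.on.Fv} shows that replacing the cover of $[-2^{rm},2^{rm}]^d$ by $3^d$ $(rm+1)$-cubes with a cover of the enlarged cube by $5^d$ such cubes only changes the constant $b_5$. Summing over $V\leq V^*$ therefore yields
$$
|\C_{r\ell}(\mathcal{C}_0(n))|\leq\sum_{V=1}^{V^*}|\mathcal{F}^\ell_V|\leq V^*\, e^{\tilde b_5 V^*}\leq \exp\bigl\{b_6\,(\ell\vee 1)^{\kappa+1}(n/2^{ra'\ell/a}\vee 1)\bigr\}
$$
for a suitable $b_6=b_6(a,d)$, which is the claimed bound.

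The only delicate point is the centering step: we only know $0\in V(\gamma)$ and not $0\in\Sp(\gamma)$, so $B_{\C_{r\ell}(\gamma)}$ fits a priori inside a cube roughly twice as large as the one demanded by the definition of $\mathcal{F}^\ell_V$. This is a cosmetic issue absorbed into the constant $b_6$. Everything else follows from a direct chaining of Propositions~\ref{Prop. Bound.on.C_rl(gamma)}, \ref{Prop. Bound.on.V_r^l(gamma)} and \ref{Prop. Bound.on.Fv}; the worse decay rate $a'/a$ (rather than $a'$) in the exponent is dictated by the case $\ell\geq j$, which is the binding one when $\gamma$ is deep in the multiscale hierarchy.
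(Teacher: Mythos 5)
Your proof takes the same route as the paper's: combine Propositions~\ref{Prop. Bound.on.V_r^l(gamma)} and~\ref{Prop. Bound.on.C_rl(gamma)} to bound $V_r^\ell(\gamma)$ uniformly over $\gamma\in\mathcal{C}_0(n)$, place the resulting coverings in $\bigcup_{V\leq\lceil R_{n,\ell}\rceil}\mathcal{F}^\ell_V$, and apply Proposition~\ref{Prop. Bound.on.Fv}. The one step you add is the verification of the centering constraint $B_{\C_{r\ell}}\subset[-\diam(B_{\C_{r\ell}}),\diam(B_{\C_{r\ell}})]^d$ in the definition of $\mathcal{F}^\ell_V$, which the paper leaves implicit; your factor-$2$ patch is valid, but the constraint in fact holds exactly and no adjustment to the constants is needed. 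Indeed, if $0\in\Sp(\gamma)$ then the $r\ell$-cube containing the origin belongs to $\C_{r\ell}(\gamma)$, so $0\in B_{\C_{r\ell}(\gamma)}$; and if $0\in\I(\gamma)$ then $0$ lies in a bounded component of $\Sp(\gamma)^c$, so each of the $2d$ coordinate half-axes from the origin meets $\Sp(\gamma)\subset B_{\C_{r\ell}(\gamma)}$. In either case, any $x\in B_{\C_{r\ell}(\gamma)}$ with $|x_i|>\diam(B_{\C_{r\ell}(\gamma)})$ would lie at $\ell_1$-distance greater than $\diam(B_{\C_{r\ell}(\gamma)})$ from a point of $B_{\C_{r\ell}(\gamma)}$ on the opposite side of the $i$-th axis, a contradiction; thus Proposition~\ref{Prop. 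Bound.on.Fv} applies verbatim.
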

\begin{proof}
    Proposition \ref{Prop. Bound.on.V_r^l(gamma)} together with Proposition \ref{Prop. Bound.on.C_rl(gamma)} yields, 
    \begin{equation}\label{Eq: Bound_partial_volume_l_between_1_and_j}
        V_r^\ell(\gamma) = V_r^\ell(B_{\C_{r\ell}(\gamma)}) \leq b_3(b_4+b_4^\prime)(\ell\vee 1)^{\kappa+1}\left(\frac{n}{2^{r\frac{a^\prime}{a}\ell}}\vee 1\right) =: R_{n,\ell}.
    \end{equation}
    Therefore, 
        \begin{equation*}
        \left\{\C_{r\ell} :\C_{r\ell}=\C_{r\ell}(\gamma) \ \textrm{for some }\gamma\in\mathcal{C}_0(n)\right\}\subset \bigcup_{V=1}^{\left\lceil R_{n,\ell}\right\rceil}\mathcal{F}^\ell_V
    \end{equation*}
    and Proposition \ref{Prop. Bound.on.Fv} yields 
    \begin{align*}
        |\left\{\C_{r\ell} :\C_{r\ell}=\C_{r\ell}(\gamma) \ \textrm{for some }\gamma\in\mathcal{C}_0(n)\right\}| &\leq \sum_{V=1}^{\left\lceil R_{n,\ell}\right\rceil}|\mathcal{F}^\ell_V|  
        \leq \exp{\left\{ 2b_5b_3(b_4+b_4^\prime)(\ell\vee 1)^{\kappa+1}\left(\frac{n}{2^{r\frac{a^\prime}{a}\ell}}\vee 1\right)\right\}}.
    \end{align*}
    This concludes the proof for $b_6\coloneqq 2b_5b_3(b_4 + b_4^\prime)$.
\end{proof}

A consequence of Proposition \ref{Prop: Bound_on_rl_coverings} is that we get an exponential bound on the number of contours with a fixed size.

\begin{corollary}\label{Cor: Bound_on_C_0_n}
	Let $d\ge 2$, and $\Lambda\Subset \mathbb{Z}^d$. For all $n\geq 1$, $|\mathcal{C}_0(n)| \leq e^{b_6 n}$.  
\end{corollary}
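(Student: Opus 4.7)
The plan is to reduce the corollary directly to Proposition \ref{Prop: Bound_on_rl_coverings} applied at the smallest scale $\ell = 0$, combined with an elementary counting of labels for a fixed support. At scale $\ell = 0$, an $r\ell$-cube $C_0(x)$ is the singleton $\{x\}$, so the minimal covering $\mathscr{C}_0(\gamma)$ is in canonical bijection with $\Sp(\gamma)$. Consequently $|\mathscr{C}_0(\mathcal{C}_0(n))|$ equals the number of distinct supports that arise from contours in $\mathcal{C}_0(n)$. Plugging $\ell = 0$ into Proposition \ref{Prop: Bound_on_rl_coverings} gives $(\ell \vee 1)^{\kappa+1} = 1$ and $(n/2^{r(a'/a)\ell}) \vee 1 = n$, so the number of supports is at most $e^{b_6 n}$.

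To finish, I would control the number of labels that can be placed on a fixed support $\overline{\gamma}$ with $|\overline{\gamma}| = n$ and $0 \in V(\overline{\gamma})$. A compatible label assigns $\pm 1$ to the exterior component $(\overline{\gamma})^{(0)}$ and to each bounded connected component of $\I(\overline{\gamma})$; since we are restricting to $\gamma \in \mathcal{E}^+$, the exterior label is fixed to $+1$, so only the bounded interior components are free. Each such component contains at least one point of $\fext \overline{\gamma}$, and distinct components contribute distinct points of $\fext \overline{\gamma}$, so the number of bounded components is bounded by $|\fext \overline{\gamma}| \leq 2d |\overline{\gamma}| = 2dn$. Hence the number of labels compatible with a fixed support is at most $2^{2dn}$, and
\begin{equation*}
|\mathcal{C}_0(n)| \;\leq\; |\mathscr{C}_0(\mathcal{C}_0(n))| \cdot 2^{2dn} \;\leq\; e^{b_6 n} \cdot 2^{2dn} \;=\; e^{(b_6 + 2d\log 2)n}.
\end{equation*}
The additive constant $2d\log 2$ is then absorbed into the constant $b_6$ from Proposition \ref{Prop: Bound_on_rl_coverings} (which is only determined up to an arbitrary enlargement), yielding the stated bound.

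The main obstacle in this line of argument is not contained in the corollary itself but already in Proposition \ref{Prop: Bound_on_rl_coverings}, whose proof relied on the improved bound on $|\mathscr{C}_{r\ell}(\gamma)|$ in Proposition \ref{Prop. Bound.on.C_rl(gamma)} together with the entropy estimate of Proposition \ref{Prop. Bound.on.Fv}. Once that machinery is available, the present corollary is a one-line specialization to $\ell = 0$, plus the elementary adjacency count for the labels.
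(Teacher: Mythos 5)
Your proof is correct and takes essentially the paper's (unstated) approach: specialize Proposition \ref{Prop: Bound_on_rl_coverings} to $\ell = 0$, note that $0$-cubes are singletons so $\mathscr{C}_0(\gamma)$ determines $\Sp(\gamma)$, and then pay an exponential factor for the labels. Your count of the bounded components of $\I(\overline{\gamma})$ is sound; the paper uses the slightly sharper bound $2^n$ (implicit in the proof of Proposition \ref{Prop: Bound.gamma_2}, where they write ``$2^n$ being an upper bound on the number of labels given a fixed support''), which one gets by observing that every bounded component has $|\fext I^{(k)}| \geq 2d$ points in $\overline{\gamma}$ while every point of $\overline{\gamma}$ lies in $\fext I^{(k)}$ for at most $2d$ distinct $k$, giving (number of components) $\leq |\overline{\gamma}| = n$. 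Either way the extra factor is exponential in $n$ and gets absorbed; your acknowledgement of the resulting constant-tracking imprecision is fair, since the paper's own statement reuses the $b_6$ of Proposition \ref{Prop: Bound_on_rl_coverings} even though, strictly speaking, that $b_6$ only bounds the number of supports and not the number of labeled contours.
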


We are finally ready to upper bound the number of admissible regions $ |B_\ell(\mathcal{C}_0(n))|$ at scale $r\ell$.
\begin{proposition}\label{Prop: Bound_on_boundary_of_admissible_sets}
    Let $n\geq 0$, $\Lambda\Subset\Z^d$ and $1\leq\ell\leq \log_{2^r}(b_1 n)(d-1)^{-1}$. There exists a constant $c_4\coloneqq c_4(\alpha, d)$ such that,
    \begin{equation}\label{Eq: Bound_on_boundary_of_admissible_sets}
        |B_\ell(\mathcal{C}_0(n))|\leq \exp{\left\{c_4 \frac{\ell^{\kappa + 1} n}{2^{r\ell(d-1)}}   \right\}}.
    \end{equation}
\end{proposition}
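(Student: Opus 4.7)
The plan is to implement the coarse-graining argument sketched in the introduction, reducing the count of admissible regions to that of inner cube boundaries, localizing those cubes inside a fixed $rL$-scale covering of the contour, and then calibrating $L=L(\ell)$ so that the two resulting entropy contributions exactly hit the target bound.

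\emph{Step 1 (Reduction).} Each $B_\ell(\gamma)$ is a finite union of disjoint $r\ell$-cubes, hence is uniquely determined by $\mathfrak{C}_\ell(\gamma)$, which is in turn determined by its inner boundary $\fint\mathfrak{C}_\ell(\gamma)$ (for a bounded collection, the ``inside'' is recoverable from its boundary). Therefore $|B_\ell(\mathcal{C}_0(n))|\leq |\fint\mathfrak{C}_\ell(\mathcal{C}_0(n))|$, and Proposition~\ref{Proposition1} bounds $|\fint\mathfrak{C}_\ell(\gamma)|\leq M_{n,\ell}\coloneqq b_1 n/2^{r\ell(d-1)}$ uniformly.

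\emph{Step 2 (Localization).} For any $L\geq \ell$ and $\gamma\in\mathcal{C}_0(n)$, I claim the $rL$-parent of every $C_{r\ell}\in\fint\mathfrak{C}_\ell(\gamma)$ is either in $\C_{rL}(\gamma)$ or shares a face with a cube of $\C_{rL}(\gamma)$. Indeed, pick a non-admissible neighbor $C_{r\ell}'\notin\mathfrak{C}_\ell(\gamma)$ of $C_{r\ell}$; then $C_{r\ell}\cup C_{r\ell}'$ contains both points in $\I_-(\gamma)$ and in $\I_-(\gamma)^c$. Following an $\ell_1$-path between two such points inside the union, the transition site must lie in $\Sp(\gamma)$, because $\I_-(\gamma)$ is a union of entire connected components of $\Sp(\gamma)^c$, so no lattice neighbor of an $\I_-(\gamma)$-point can be in $\Sp(\gamma)^c\setminus\I_-(\gamma)$. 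Since $\Sp(\gamma)\subset\bigcup\C_{rL}(\gamma)$, some cube of $\C_{rL}(\gamma)$ intersects $C_{r\ell}\cup C_{r\ell}'$, whence the claim. This confines $\fint\mathfrak{C}_\ell(\gamma)$ to a pool of size at most $(2d+1)\cdot 2^{rd(L-\ell)}|\C_{rL}(\gamma)|$.

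\emph{Step 3 (Entropy sum and choice of scale).} Summing over $rL$-coverings,
$$|B_\ell(\mathcal{C}_0(n))|\leq \sum_{\C_{rL}\in\C_{rL}(\mathcal{C}_0(n))} \sum_{M=0}^{M_{n,\ell}} \binom{(2d+1)\cdot 2^{rd(L-\ell)}|\C_{rL}|}{M}.$$
Using $\binom{N}{M}\leq(eN/M)^M$, the worse case of Proposition~\ref{Prop. Bound.on.C_rl(gamma)} (which gives $|\C_{rL}(\gamma)|\leq b_4' L^\kappa(n/2^{r(a'/a)L}\vee 1)$), and Proposition~\ref{Prop: Bound_on_rl_coverings}, the logarithm of the right-hand side is bounded by
$$b_6 L^{\kappa+1}\!\left(\tfrac{n}{2^{r(a'/a)L}}\vee 1\right) \;+\; M_{n,\ell}\Bigl(r\ln 2\cdot\bigl(L(d-a'/a)-\ell\bigr)+\kappa\log L+O(1)\Bigr).$$
The key calibration is $L(\ell)\coloneqq \lceil c_*\ell\rceil$ with $c_*\coloneqq (d-1)a/a'$, chosen precisely so that $r(a'/a)L=r(d-1)\ell$. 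The hypothesis $\ell\leq \log_{2^r}(b_1 n)/(d-1)$ ensures $n/2^{r(d-1)\ell}\geq 1$, so the first term becomes $O(\ell^{\kappa+1})\cdot n/2^{r(d-1)\ell}$. The second is $O(\ell)\cdot M_{n,\ell}\leq O(\ell^{\kappa+1})\cdot M_{n,\ell}=O(\ell^{\kappa+1})\cdot n/2^{r\ell(d-1)}$, since $\kappa>0$ and $\ell\geq 1$. Both terms fit into $c_4\,\ell^{\kappa+1}\,n/2^{r\ell(d-1)}$ for a suitable $c_4=c_4(\alpha,d)$.

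\emph{Main difficulty.} The essential step is the combinatorial--topological localization in Step~2: it is here that the disconnectedness of our contours must be bypassed, by reducing each admissible boundary cube back to the actual support $\Sp(\gamma)$ through the connected-component structure of $\Sp(\gamma)^c$. After that, the arithmetic is dictated by the single identity $c_*(a'/a)=d-1$, which is what forces the denominator $2^{r\ell(d-1)}$ to appear on both sides simultaneously, so the two entropies can be summed without spoiling the polynomial prefactor $\ell^{\kappa+1}$.
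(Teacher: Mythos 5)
Your overall strategy — reduce to counting boundary-cube configurations, localize these inside an $rL$-covering of the support, and calibrate $L(\ell)$ so that the exponent of Proposition~\ref{Prop: Bound_on_rl_coverings} and the binomial exponent both become $O(\ell^{\kappa+1}n/2^{r\ell(d-1)})$ — is exactly the paper's proof, including the identity $c_*(a'/a)=d-1$ that fixes $L(\ell)$, so Steps~2 and~3 are essentially a correct reproduction. Your Step~2 localization argument (the transition site along an $\ell_1$-path between $\I_-(\gamma)$ and $\I_-(\gamma)^c$ must lie in $\Sp(\gamma)$ because $\I_-(\gamma)$ is a union of connected components of $\Sp(\gamma)^c$) is sound and matches the paper's observation that ``either $C_{r\ell}$ or one of its neighbouring cubes intersects $\Sp(\gamma)$.''

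There is, however, a genuine gap in Step~1. You claim that $\mathfrak{C}_\ell(\gamma)$ is determined by its inner boundary $\fint\mathfrak{C}_\ell(\gamma)$, and conclude $|B_\ell(\mathcal{C}_0(n))|\leq|\fint\mathfrak{C}_\ell(\mathcal{C}_0(n))|$. This is false: the inner boundary alone does not determine the collection. For example, in the cube-lattice a filled $3\times 3$ square of $r\ell$-cubes and the hollow $3\times 3$ ``O'' obtained by deleting its center have \emph{the same} inner boundary (the eight border cubes are in $\fint\C$ in both cases), yet are different collections. It is the \emph{edge} boundary $\partial\mathfrak{C}_\ell(\gamma)$ — the set of ordered/unordered \emph{pairs} of a cube inside and a face-neighbor outside — that determines $B_\ell(\gamma)$ for a bounded collection, which is exactly what the paper uses ($|B_\ell(\mathcal{C}_0(n))|=|\partial\mathfrak{C}_\ell(\mathcal{C}_0(n))|$). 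Passing from edge boundaries to inner boundaries costs a multiplicity factor: each $C\in\fint\C$ has at most $2d$ faces, so there are at most $2^{2d|\fint\C|}$ edge boundaries with a given inner boundary, giving
\[
|B_\ell(\mathcal{C}_0(n))|\leq |\fint\mathfrak{C}_\ell(\mathcal{C}_0(n))|\cdot \exp\left\{2d\log 2\cdot b_1\frac{n}{2^{r\ell(d-1)}}\right\},
\]
which is what the paper's equation~\eqref{Eq: Replacing_edge_by_inner_boundary} does. Since this extra factor is itself $\exp\{O(n/2^{r\ell(d-1)})\}$, it is absorbed into your target bound and the rest of your argument goes through unchanged, but the inequality you wrote in Step~1 is not correct as stated and must be replaced by this two-step reduction.
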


\begin{proof}
    The upper bound on $\ell$ may seem artificial, but Remark \ref{rmk: Upper_bound_on_ell} shows that this is not the case. Remember that $|B_\ell(\mathcal{C}_0(n))|=|\partial\mathfrak{C}_\ell(\mathcal{C}_0(n))|$. Moreover, given $\{C_{r\ell},C_{r\ell}^\prime\}\in\partial\mathfrak{C}_\ell(\gamma)$, either $C_{r\ell}\in\fint\mathfrak{C}_{\ell}$ or $C_{r\ell}^\prime\in\fint\mathfrak{C}_{\ell}$. Using that $\sum_{k=0}^p\binom{p}{k} = 2^{p}$, we have
    \begin{equation}\label{Eq: Replacing_edge_by_inner_boundary}
       \begin{split} |\partial\mathfrak{C}_\ell(\mathcal{C}_0(n))| &= \sum_{\fint\C_{r\ell}\in \fint\mathfrak{C}_{\ell}(\mathcal{C}_0(n))} |\{\partial\C_{r\ell}^\prime : \fint\C_{r\ell}^\prime = \fint\C_{r\ell}\}| \\
        &\leq \sum_{\fint\C_{r\ell}\in \fint\mathfrak{C}_{\ell}(\mathcal{C}_0(n))} \sum_{k=1}^{2d|\fint\C_{r\ell}|}\binom{2d|\fint\C_{r\ell}|}{k}  \\
        &\leq\sum_{\fint\C_{r\ell}\in \fint\mathfrak{C}_{\ell}(\mathcal{C}_0(n))} 2^{2d|\fint\C_{r\ell}|}\leq |\fint\mathfrak{C}_\ell(\mathcal{C}_0(n))|e^{\log(2)2db_1\frac{n}{2^{r\ell(d-1)}}},
        \end{split}
    \end{equation}
     where in the last inequality we applied Proposition \ref{Proposition1}.  For every $L\geq \ell$ and an arbitrary collection $\C_{rL}$, define $\overline{\C_{rL}} = \C_{rL}\cup \{C_{rL}^\prime : \exists C_{rL}\in\C_{rL} \text{ such that } C_{rL}^\prime \text{ shares a face with } C_{rL}\}$. 
     
     Given $C_{r\ell}\in \fint \mathfrak{C}_\ell(\gamma)$, either $C_{r\ell}$ or one of its neighbouring cubes intersects $\Sp(\gamma)$. Hence, for any $L\geq \ell$, $\fint \mathfrak{C}_{\ell}(\gamma)\preceq \overline{\C_{rL}(\gamma)}$. Moreover, the number of $r\ell$-cubes inside a collection $\overline{\C_{rL}(\gamma)}$ of $rL$-cubes is bound by $|\overline{\C_{rL}(\gamma)}|2^{rd(L-\ell)} \leq 2d|\C_{rL}(\gamma)|2^{rd(L-\ell)}$. Using again Proposition \ref{Proposition1}, we can bound 
    \begin{equation}\label{Eq: Bound_on_internal_boundary}
    \begin{split}
           |\fint \mathfrak{C}_{\ell}(\mathcal{C}_0(n))| &\leq  \sum_{\substack{\C_{rL} \in \C_{rL}(\mathcal{C}_0(n))}}\sum_{k=0}^{\left\lceil{\frac{b_1n}{2^{r\ell(d-1)}}}\right\rceil}\binom{2d|\C_{rL}|2^{rd(L-\ell)}}{k} \\
           &\leq   \sum_{\substack{\C_{rL} \in \C_{rL}(\mathcal{C}_0(n))}}\left(\frac{e2d|\C_{rL}|2^{rdL}}{{b_1n}{2^{r\ell}}}\right)^{\frac{2b_1n}{2^{r\ell(d-1)}}},
           \end{split}
    \end{equation}
    where in the last equation we used that, for any $0<M\leq N$, $\sum_{p=0}^{M}\binom{N}{p}\leq \left(\frac{eN}{M}\right)^{M}$. Moreover, the restriction $\ell\leq \log_{2^r}(b_1 n)(d-1)^{-1}$ gives us $1\leq \frac{b_1 n}{2^{r\ell(d-1)}}$, so we bounded $\left\lceil \frac{b_1n}{2^{r\ell(d-1)}} \right\rceil \leq \frac{2b_1n}{2^{r\ell(d-1)}}$. Given a scale $\ell$, we choose $L(\ell) \coloneqq \left\lfloor \frac{a(d-1)\ell}{a^\prime} \right\rfloor$. The restriction $\ell\leq \log_{2^r}(b_1 n)(d-1)^{-1}$ allow us to bound $\left(\frac{n}{2^{r\frac{a^\prime}{a} L(\ell)}} \vee 1\right)\leq b_1(2^{r\frac{a^\prime}{a}}\vee 1)\frac{n}{2^{(d-1)r\ell}}$, so, for any $ \C_{rL(\ell)} \in \C_{rL(\ell)}(\mathcal{C}_0(n))$, Proposition \ref{Prop. Bound.on.C_rl(gamma)} yields
    \begin{align*}
        |\C_{rL(\ell)}|2^{rdL(\ell)} &\leq (b_4+b_4^\prime)L(\ell)^{\kappa}\left(\frac{n}{2^{r\frac{a^\prime}{a} L(\ell)}}\vee 1\right) 2^{rd\frac{a(d-1)}{a^\prime}\ell} \\
        &\leq b_1 (b_4+b_4^\prime)\left(\frac{(d-1)a}{a^\prime}\right)^{\kappa}\left\lceil{2^{r\frac{a^\prime}{a}}}\right\rceil{\ell^{\kappa}}{2^{(d-1)r(\frac{ad}{a^\prime} -1)\ell}}n,
    \end{align*}
    hence 
    \begin{align*}
        \left(\frac{e2d|\C_{rL(\ell)}|2^{rdL(\ell)}}{{b_1n}{2^{r\ell}}}\right)^{\frac{2b_1n}{2^{r\ell(d-1)}}} &\leq \left(\frac{e2db_1 (b_4+b_4^\prime)\left(\frac{(d-1)a}{a^\prime}\right)^{\kappa}\left\lceil{2^{r\frac{a^\prime}{a}}}\right\rceil{\ell^{\kappa}}{2^{(d-1)r(\frac{ad}{a^\prime} -1)\ell}}n}{{b_1n}{2^{r\ell}}}\right)^{\frac{2b_1n}{2^{r\ell(d-1)}}}\\
        &\leq \left({e2d (b_4+b_4^\prime)\left(\frac{a(d-1)}{a^\prime}\right)^{\kappa}\left\lceil{2^{r\frac{a^\prime}{a}}}\right\rceil{\ell^{\kappa}}{2^{[(d-1)(\frac{ad}{a^\prime} -1) - 1]r\ell}}}\right)^{\frac{2b_1n}{2^{r\ell(d-1)}}}\\
        &\leq \exp\left\{ c_4^\prime \frac{\ell n}{2^{r\ell(d-1)}} \right\},
    \end{align*}
    with $c_4^\prime = [1  + \log(2d (b_4+b_4^\prime)\left(\frac{(d-1)a}{a^\prime}\right)^{\kappa}\left\lceil{2^{r\frac{a^\prime}{a}}}\right\rceil) + \kappa  + ((d-1)(\frac{ad}{a^\prime} -1) -1)\log(2)r]2b_1$. Moreover, by Proposition \ref{Prop: Bound_on_rl_coverings},
    \begin{align*}
         |\C_{rL(\ell)}(\mathcal{C}_0(n))| &\leq  \exp{\left\{ b_6L(\ell)^{\kappa+1} \left(\frac{n}{2^{r\frac{a^\prime}{a} L(\ell)}} \vee 1 \right) \right\}} \leq  \exp{\left\{ b_6 b_1\left(\frac{a(d-1)}{a^\prime}\right)^{\kappa +1}\left\lceil2^{r\frac{a^\prime}{a}}\right\rceil\frac{\ell^{\kappa+1}n}{2^{(d-1)r\ell}}\right\}}   \\
    \end{align*}
    so equations \eqref{Eq: Replacing_edge_by_inner_boundary} and \eqref{Eq: Bound_on_internal_boundary} yield
    \begin{align}\label{Eq: Bound_on_boundary_of_admissible_sets_aux_1}
         |\partial\mathfrak{C}_\ell(\mathcal{C}_0(n))| &\leq  \exp{\left\{b_6 b_1\left(\frac{a(d-1)}{a^\prime}\right)^{\kappa +1}\left\lceil2^{r\frac{a^\prime}{a}}\right\rceil\frac{\ell^{\kappa+1}n}{2^{r\ell(d-1)}} + c_4^\prime \frac{\ell n}{2^{r\ell(d-1)}} + \log(2)2db_1\frac{n}{2^{r\ell(d-1)}} \right\}}.
    \end{align}

  that concludes our proof taking $c_4\coloneqq b_6 b_1\left(\frac{a(d-1)}{a^\prime}\right)^{\kappa +1}\left\lceil2^{r\frac{a^\prime}{a}}\right\rceil + c_4^{\prime} + \log(2)2db_1$.
    
\end{proof}

\begin{remark}\label{Rmk: Adaptation_for_Mar_partition}
    Using the notion of long-range contours of \cite{Affonso.2021}, we can get a worse upper bound on $|B_\ell(\mathcal{C}_0(n))|$ that is still good enough to prove phase transition in $d\geq 3$. Using Proposition \ref{Prop. Bound.on.C_rl(gamma)_Lucas}, we can prove in the same steps as Proposition \ref{Prop: Bound_on_rl_coverings} that $|\C_{r\ell}(\mathcal{C}_0(n))|\leq b_4^{\prime\prime}n \ell^{-\frac{r - d- 1 - \log_2(a)}{\log_2(a)}}$. With this, we can proceed similarly as in the proof of Proposition \ref{Prop: Bound_on_boundary_of_admissible_sets} but now choosing $L(\ell) = 2^{2r\left\lfloor \frac{\log_2(a)\ell}{r - d -1 - \log_2(a)} \right\rfloor}$, which gives us the bound 
    \begin{equation*}
        |B_\ell(\mathcal{C}_0(n))| \leq \exp{\left\{ c_4^{\prime}\frac{n}{2^{r\ell(d-1-\frac{2\log_2(a)}{r - d -1 - \log_2(a)})}} \right\}}.
    \end{equation*}
    For $r$ large enough, $d-1-\frac{2\log_2(a)}{r - d -1 - \log_2(a)}>1$ and the proof of Proposition \ref{Prop: Bound.gamma_2} follows with small adaptations. 
\end{remark}

At last, we prove the main proposition of this section. 

\subsubsection*{Proof of Proposition \ref{Prop: Bound.gamma_2}}

 As $N(\I_-(n), \d_2, \epsilon)$ is decreasing in $\epsilon$, we can use Dudley's entropy bound to get
\begin{multline*}
      {\mathbb{E}\left[\sup_{\gamma\in\mathcal{C}_0(n)}{\Delta_{\I_-(\gamma)}(h)}\right]} \leq 4\varepsilon b_3 \overline{L} n^{\frac{1}{2}} \sqrt{\log{N(\I_-(n), d_2, 0)}} \\ + 4\varepsilon b_3 \overline{L}n^{\frac{1}{2}}\sum_{\ell=0}^\infty (2^{\frac{r(\ell+1)}{2}} - 2^{\frac{r\ell}{2}})\sqrt{\log N(\I_-(n), \d_2,4\varepsilon b_3 2^{\frac{r\ell}{2}}n^{\frac{1}{2}})}.
\end{multline*}
We can bound the first term by noticing that $N(\I_-(n), d_2, 0) = |\I_-(n)|\leq 2^n|\mathcal{C}_0(n)|$, $2^n$ being an upper bound on the number of labels given a fixed support. By Corollary \ref{Cor: Bound_on_C_0_n}, $|\mathcal{C}_0(n)| \leq e^{c_1 n}$, and hence
\begin{equation*}
    4\varepsilon b_3 \overline{L} n^{\frac{1}{2}}\sqrt{\log{N(\I_-(n), d_2, 0)}} \leq 4\varepsilon b_3 \overline{L} (c_1 + \log 2)^{\frac{1}{2}} n.
\end{equation*}

Since $\d_2(\I_-(\gamma_1),\I_-(\gamma_2))\leq 2\varepsilon\sqrt{|\I_-(\gamma_1)| + |\I_-(\gamma_2)|}\leq 2\sqrt{2}\varepsilon n^{\frac{1}{2} + \frac{1}{2(d-1)}}$ for any $\gamma_1,\gamma_2\in\mathcal{C}_0(n)$, when $4\varepsilon b_3 \overline{L} 2^{\frac{r\ell}{2}}n^{\frac{1}{2}}\geq 2\sqrt{2}\varepsilon n^{\frac{1}{2} + \frac{1}{2(d-1)}}$, only one ball covers all interiors, hence all the terms in the sum above with $\ell > k(n)\coloneqq \floor{\frac{\log_{2^r}(n)}{(d-1)}}$ are zero. As $N(\I_-(n), \d_2,\varepsilon b_3 2^{\frac{r\ell}{2}}n^{\frac{1}{2}})\leq |B_{\ell}(\mathcal{C}_0(n))|$, see Remark \ref{Rmk: Bounding_N_by_B_ell}, using Proposition \ref{Prop: Bound_on_boundary_of_admissible_sets} we get
\begin{align*}
        {\mathbb{E}\left[\sup_{\gamma\in\mathcal{C}_0(n)}{\Delta_{\I(\gamma)}(h)}\right]} &\leq 4\varepsilon b_3 \overline{L} 2^{\frac{r}{2}}\sqrt{c_4} n^{\frac{1}{2}}\sum_{\ell=1}^{k(n)}2^{\frac{r\ell}{2}}\sqrt{\frac{\ell^{\kappa + 1} n }{2^{r(d-1)\ell}}} +  4\varepsilon b_3 \overline{L} (c_1 + \log 2)^{\frac{1}{2}} n\nonumber\\
        &\leq  4\varepsilon b_3 \overline{L} 2^{\frac{r}{2}} \sqrt{c_4}\left[ (c_1 + \log 2)^{\frac{1}{2}} + \sum_{\ell=1}^{\infty}\left(\frac{\ell^\frac{\kappa+1}{2}}{2^{\frac{r\ell(d-2)}{2}}} \right)\right]n.
\end{align*}

The series above converges for any $d\geq 3$, and we conclude that 
\begin{equation*}
       {\mathbb{E}\left[\sup_{\gamma\in\mathcal{C}_0(n)}{\Delta_{\I_-(\gamma)}(h)}\right]} \leq \varepsilon L_1^\prime n,
\end{equation*}
with $L_1^\prime\coloneqq   4 b_3 \overline{L} 2^{\frac{r}{2}}\sqrt{c_4}\left[ (c_1 + \log 2)^{\frac{1}{2}} + \sum_{\ell=1}^{\infty}\left(\frac{\ell^\frac{\kappa+1}{2}}{2^{\frac{r\ell(d-2)}{2}}} \right)\right]$. The desired result follows from Theorem \ref{MMT} taking the constant $L_1 \coloneqq L L_1^\prime$.

\qed

\section{Phase transition}   

\begin{theorem}\label{Theo: Transicao_de_fase}
For $d\geq 3$ and $\alpha>d$, there exists a constant $C\coloneqq C(d,\alpha)$ such that, for all $\beta>0$ and $e\leq C$, the event 
    \begin{equation}\label{Eq: PTLR}
        \nu_{\Lambda; \beta, \varepsilon h}^+(\sigma_0 = -1) \leq e^{-C\beta} + e^{-C/\varepsilon^2} 
    \end{equation}
    has $\mathbb{P}$-probability bigger then $1 - e^{-C\beta} - e^{-C/\varepsilon^2}$.\\
    
In particular, for $\beta>\beta_c$ and $\varepsilon$ small enough, there is phase transition for the long-range Ising model.  
\end{theorem}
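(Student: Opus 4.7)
The plan is to implement a Peierls-type argument inside the joint measure $\mathbb{Q}_{\Lambda;\beta,\varepsilon}^+$, following the Ding--Zhuang strategy but accounting for the fact that our long-range contours have non-trivial supports $\Sp(\gamma)$ carrying spins. The starting point is the bound
\[
\nu^+_{\Lambda;\beta,\varepsilon h}(\sigma_0=-1)\;\leq\;\sum_{\gamma\in\mathcal{C}_0}\nu^+_{\Lambda;\beta,\varepsilon h}\bigl(\gamma\in\Gamma(\sigma)\bigr),
\]
coming from the fact that if $\sigma_0=-1$ under the conditional measure $\nu^+$ then there must exist an external contour $\gamma$ with $0\in\I_-(\gamma)$. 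For each fixed $\gamma$, I will change variables via the bijection $\sigma\mapsto\tau_\gamma(\sigma)$, $h\mapsto\tau_{\I_-(\gamma)}(h)$, and use the ratio estimate \eqref{Eq: quotient.of.gs}, namely
\[
\nu^+_{\Lambda;\beta,\varepsilon h}\bigl(\gamma\in\Gamma(\sigma)\bigr)\;\leq\;
e^{-\beta c_2|\gamma|}\,\exp\!\bigl\{2\beta\varepsilon\textstyle\sum_{x\in\Sp(\gamma)}|h_x|\bigr\}\,
\dfrac{Z^+_{\Lambda;\beta,\varepsilon}(\tau_{\I_-(\gamma)}h)}{Z^+_{\Lambda;\beta,\varepsilon}(h)},
\]
where I have bounded the field contribution on $\Sp^-(\gamma,\sigma)$ by absolute values in order to remove the $\sigma$-dependence.

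The argument then rests on two good events. The first is $\mathcal{E}$ from Section~3, controlled by Proposition~\ref{Prop: Bound.bad.event.1}, which gives $Z^+(\tau_{\I_-(\gamma)}h)/Z^+(h)=e^{\beta\Delta_{\I_-(\gamma)}(h)}\leq e^{\beta c_2|\gamma|/4}$ simultaneously for all $\gamma\in\mathcal{C}_0$. The second good event is
\[
\mathcal{F}\;:=\;\bigl\{\,\textstyle\sum_{x\in\Sp(\gamma)}|h_x|\;\leq\;\tfrac{c_2}{8\varepsilon}|\gamma|\text{ for every }\gamma\in\mathcal{C}_0\bigr\},
\]
so that on $\mathcal{F}$ the field prefactor contributes at most $e^{\beta c_2|\gamma|/4}$. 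Combining these two estimates on $\mathcal{E}\cap\mathcal{F}$ yields, for every $\gamma\in\mathcal{C}_0$,
\[
\nu^+_{\Lambda;\beta,\varepsilon h}\bigl(\gamma\in\Gamma(\sigma)\bigr)\;\leq\;e^{-\beta c_2|\gamma|/2}.
\]
Summing over $\mathcal{C}_0=\bigsqcup_{n\geq 1}\mathcal{C}_0(n)$ and using the entropic bound $|\mathcal{C}_0(n)|\leq e^{b_6 n}$ from Corollary~\ref{Cor: Bound_on_C_0_n} gives, whenever $\beta c_2/2>2b_6$,
\[
\nu^+_{\Lambda;\beta,\varepsilon h}(\sigma_0=-1)\;\leq\;\sum_{n\geq 1}e^{b_6 n}e^{-\beta c_2 n/2}\;\leq\;e^{-C\beta}
\]
for some $C=C(d,\alpha)>0$.

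The remaining step is to estimate $\mathbb{P}(\mathcal{E}^c\cup\mathcal{F}^c)$. Proposition~\ref{Prop: Bound.bad.event.1} already yields $\mathbb{P}(\mathcal{E}^c)\leq e^{-C_1/\varepsilon^2}$ for $\varepsilon$ small. For $\mathcal{F}$, the function $h\mapsto \sum_{x\in\Sp(\gamma)}|h_x|$ is $\sqrt{|\gamma|}$-Lipschitz with mean $\sqrt{2/\pi}\,|\gamma|$, so Theorem~\ref{Theo: Gaussian.concentration} (via Remark~\ref{Rmk: MVT.Lipschitz}) gives
\[
\mathbb{P}\!\Bigl(\textstyle\sum_{x\in\Sp(\gamma)}|h_x|>\tfrac{c_2}{8\varepsilon}|\gamma|\Bigr)
\;\leq\;2\exp\!\Bigl\{-\tfrac{(c_2/(8\varepsilon)-\sqrt{2/\pi})^2|\gamma|}{2}\Bigr\}
\;\leq\;2\exp\!\Bigl\{-\tfrac{c_2^2}{128\,\varepsilon^2}|\gamma|\Bigr\}
\]
for $\varepsilon$ small. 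A union bound over $\mathcal{C}_0$, again using Corollary~\ref{Cor: Bound_on_C_0_n}, gives $\mathbb{P}(\mathcal{F}^c)\leq e^{-C_2/\varepsilon^2}$ once $c_2^2/(128\varepsilon^2)>2b_6$. Together with the bound on $\mathcal{E}^c$ one obtains \eqref{Eq: PTLR} with a uniform constant $C$; the conclusion $\mu^+_{\beta,\varepsilon h}\neq\mu^-_{\beta,\varepsilon h}$ $\mathbb{P}$-a.s.\ then follows by taking $\Lambda\uparrow\mathbb{Z}^d$ along any invading sequence, invoking a standard Borel--Cantelli argument on the quenched bound \eqref{Eq: PTLR} and spin-flip symmetry.

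The principal difficulty, which has already been handled in Section~3, is the probabilistic control of $\mathcal{E}$: there the interiors $\I_-(\gamma)$ can be disconnected, forcing the coarse-graining of Fisher--Fr\"ohlich--Spencer to be combined with the multiscale counting of Propositions~\ref{Prop. Bound.on.C_rl(gamma)} and~\ref{Prop: Bound_on_rl_coverings}. Once this is in hand, the event $\mathcal{F}$ is considerably easier because $\sum_{x\in\Sp(\gamma)}|h_x|$ depends only on the support and not on subsets of it, so no coarse-graining is required, only the exponential counting of $\mathcal{C}_0(n)$.
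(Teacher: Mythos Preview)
Your argument is correct in outline but differs from the paper's in how the field contribution on $\Sp(\gamma)$ is handled, and there is one small slip.

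\textbf{The slip.} You call $\sigma\mapsto\tau_\gamma(\sigma)$ a bijection; it is not. The map $\tau_\gamma$ forgets the spins on $\Sp(\gamma)$, so it is at most $2^{|\gamma|}$-to-one on $\Omega(\gamma)$. Consequently your displayed bound on $\nu^+_{\Lambda;\beta,\varepsilon h}(\gamma\in\Gamma(\sigma))$ is missing a factor $2^{|\gamma|}$. This is harmless (it is absorbed by $e^{-\beta c_2|\gamma|/2}$ for $\beta$ large, exactly as the $\log 2$ term in the paper's proof), but it should be stated. A second minor imprecision: when $\sigma_0=-1$ the origin may lie in $\Sp(\gamma)$ rather than $\I_-(\gamma)$, so one should write $0\in V(\gamma)$ as the paper does.

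\textbf{Comparison with the paper.} The paper does \emph{not} introduce your event $\mathcal{F}$. Instead it keeps the factor $\exp\{-2\beta\varepsilon\sum_{x\in\Sp^-(\gamma,\sigma)}h_x\}$ inside the $\mathbb{Q}^+$-integral and evaluates it by the Gaussian moment generating function, obtaining $\exp\{2(\beta\varepsilon)^2|\Sp^-(\gamma,\sigma)|\}$. This is then absorbed by demanding $8\beta\varepsilon^2\le c_2$, after which the paper bounds $\mathbb{Q}^+_{\Lambda;\beta,\varepsilon}(\sigma_0=-1)$ and recovers the quenched statement via Markov's inequality. Your route---a second bad event $\mathcal{F}$ controlled by Gaussian concentration plus the entropy bound of Corollary~\ref{Cor: Bound_on_C_0_n}---is equally valid and has the mild advantage that the smallness condition is on $\varepsilon$ alone rather than on the product $\beta\varepsilon^2$; the paper's coupling $8\beta\varepsilon^2\le c_2$ forces $\varepsilon$ to shrink as $\beta\to\infty$. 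On the other hand, the paper's version is shorter (one event, one integral) and yields the bound on the \emph{annealed} quantity $\mathbb{Q}^+(\sigma_0=-1)$ directly, from which the quenched probability follows by Markov. Both approaches rely identically on Proposition~\ref{Prop: Bound.bad.event.1} for $\mathcal{E}$ and on Corollary~\ref{Cor: Bound_on_C_0_n} for the contour entropy, which is where all the hard work of Section~3 enters.
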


\begin{proof}
        The proof is an application of the Peierls' argument, but now on the joint measure $\mathbb{Q}$. Define $\mathcal{E} = \mathcal{E}_0 \cap \mathcal{E}_1$. By Proposition \ref{Prop: Bound.bad.event.0} and Proposition \ref{Prop: Bound.bad.event.1}, we have
        \begin{align}\label{Eq: Upper.bound.on.Q.1}
            \mathbb{Q}_{\Lambda; \beta, \varepsilon}^+(\sigma_0 = -1) &=  \mathbb{Q}_{\Lambda; \beta, \varepsilon}^+(\{\sigma_0 = -1\} \cap \mathcal{E}_0) + \mathbb{Q}_{\Lambda; \beta, \varepsilon}^+(\{\sigma_0 = -1\}\cap \mathcal{E}_0^c) \nonumber \\
            & \leq \mathbb{Q}_{\Lambda; \beta, \varepsilon}^+(\{\sigma_0 = -1\} \cap \mathcal{E}_0) +  e^{-C_0/\varepsilon^2} \nonumber \\
            & \leq \mathbb{Q}_{\Lambda; \beta, \varepsilon}^+(\{\sigma_0 = -1\} \cap \mathcal{E}) + \mathbb{Q}_{\Lambda; \beta, \varepsilon}^+(\{\sigma_0 = -1\}\cap \mathcal{E}_0 \cap \mathcal{E}_{1}^c)  + e^{-C_0/\varepsilon^2} \nonumber \\
            & \leq \mathbb{Q}_{\Lambda; \beta, \varepsilon}^+(\{\sigma_0 = -1\} \cap \mathcal{E}) + e^{-C_1/\varepsilon^2}  + e^{-C_0/\varepsilon^2},
        \end{align}
since $\mathbb{Q}_{\Lambda; \beta, \varepsilon}^+(\{\sigma_0 = -1\}\cap \mathcal{E}_0^c) \leq \mathbb{Q}_{\Lambda; \beta, \varepsilon}^+(\mathcal{E}_0^c) = \mathbb{P}(\mathcal{E}_0^c)$ and, analogously, ${\mathbb{Q}_{\Lambda; \beta, \varepsilon}^+(\{\sigma_0 = -1\} \cap \mathcal{E}_0 \cap \mathcal{E}_{1}^c) \leq \mathbb{P}(\mathcal{E}_1^c)}$.  When $\sigma_0 = -1$, there must exist a contour $\gamma$ with $0\in V(\gamma)$, hence
\begin{equation*}
    \nu_{\Lambda; \beta, \varepsilon h}^+(\sigma_0 = -1) \leq \sum_{\gamma \in \mathcal{C}_0}\nu_{\Lambda; \beta, \varepsilon h}^+(\Omega(\gamma)),
\end{equation*}
where $\Omega(\gamma) \coloneqq \{\sigma\in\Omega : \Sp(\gamma) \subset \Gamma(\sigma)\}$. So we can write

\begin{align}\label{Eq: Upper.bound.on.Q.2}
    \mathbb{Q}_{\Lambda; \beta, \varepsilon}^+(\{\sigma_0 = -1\} \cap \mathcal{E}) &= \int_{\mathcal{E}}\sum_{\sigma : \sigma_0 = -1}g_{\Lambda; \beta, \varepsilon}^+(\sigma, h)dh \nonumber \\
    &\leq  \sum_{\gamma\in\mathcal{C}_0} \int_{\mathcal{E}}\sum_{\sigma\in\Omega(\gamma)}g_{\Lambda; \beta, \varepsilon}^+(\sigma, h)dh \nonumber \\
    &\leq  \sum_{\gamma \in \mathcal{C}_0} \frac{2^{|\gamma|}\int_{\mathcal{E}}\sum_{\sigma\in\Omega(\gamma)}g_{\Lambda; \beta, \varepsilon}^+(\sigma, h)dh}{\int_{\mathcal{E}}\sum_{\sigma\in\Omega(\gamma)}g_{\Lambda; \beta, \varepsilon}^+(\tau_{\gamma}(\sigma), \tau_{\I_-(\gamma)}(h))dh} \nonumber \\
    & \leq \sum_{\gamma\in\mathcal{C}_0}2^{|\gamma|} \sup_{\substack{h\in\mathcal{E}\\ \sigma\in\Omega(\gamma)}}\frac{g_{\Lambda; \beta, \varepsilon}^+(\sigma, h)}{g_{\Lambda; \beta, \varepsilon}^+(\tau_{\gamma}(\sigma), \tau_{\I_-(\gamma)}(h))}. 
\end{align}

In the third equation, we used that $\int_{\mathcal{E}}\sum_{\sigma\in\Omega(\gamma)}g_{\Lambda; \beta, \varepsilon}^+(\tau_{\gamma, \sigma}(\sigma), \tau_{\I_-(\gamma)}(h))dh \leq 2^{|\gamma|}$, since the number of configurations that are incorrect in $\Sp(\gamma)$ are bounded by $2^{|\gamma|}$. By \eqref{Eq: quotient.of.gs} and the definition of the event $\mathcal{E}$, 
\begin{align}\label{Eq: Upper.bound.on.Q.3}
    \sup_{\substack{h\in\mathcal{E}\\ \sigma\in\Omega(\gamma)}}\frac{g_{\Lambda; \beta, \varepsilon}^+(\sigma, h)}{g_{\Lambda; \beta, \varepsilon}^+(\tau_{\gamma, \sigma}(\sigma), \tau_{\I_-(\gamma)}(h))} &\leq \sup_{\substack{h\in\mathcal{E}\\ \sigma\in\Omega(\gamma)}}  \exp{\{{- \beta c_2 |\gamma| -2\beta\sum_{x\in \Sp^-(\gamma)}\varepsilon h_x}\}}\frac{Z_{\Lambda; \beta, \varepsilon}^{+}(\tau_{\I_-(\gamma)}(h))}{Z_{\Lambda; \beta, \varepsilon}^{+}(h)} \nonumber\\
    &= \sup_{\substack{h\in\mathcal{E}\\ \sigma\in\Omega(\gamma)}}  \exp{\{{- \beta c_2 |\gamma| -2\beta\sum_{x\in \Sp^-(\gamma)}\varepsilon h_x + \beta \Delta_{\gamma}(h)}\}} \nonumber\\
    &\leq  \exp{\{{- \beta \frac{c_2}{2} |\gamma| }\}},
\end{align}
since $\Delta_{\gamma}(h) -2\beta\sum_{x\in \Sp^-(\gamma)}\varepsilon h_x \leq \frac{c_2}{2}|\gamma|$, for all $h\in\mathcal{E}$. Equations \eqref{Eq: Upper.bound.on.Q.1}, \eqref{Eq: Upper.bound.on.Q.2} and \eqref{Eq: Upper.bound.on.Q.3} yields
\begin{align*}
     \mathbb{Q}_{\Lambda; \beta, \varepsilon}^+(\sigma_0 = -1) &\leq  \sum_{\substack{\gamma\in \mathcal{E}_\Lambda^+\\ 0\in V(\gamma)}} 2^{|\gamma|}\exp{\{{- \beta \frac{c_2}{2} |\gamma| }\}} + e^{-C_1/\varepsilon^2} + e^{-C_0/\varepsilon^2}\\
     &\leq \sum_{n\geq 1}\sum_{\substack{\gamma\in \mathcal{E}_\Lambda^+, |\gamma|=n \\ 0\in V(\gamma)}} \exp{\{{(-\beta \frac{c_2}{2} + \log2)n}\}} + e^{-C_1/\varepsilon^2} + e^{-C_0/\varepsilon^2}\\
     &\leq \sum_{n\geq 1}|\mathcal{C}_0(n)| \exp{\{{(-\beta \frac{c_2}{2} +\log2)n}\}} + + e^{-C_1/\varepsilon^2} + e^{-C_0/\varepsilon^2}\\
    &\leq \sum_{n\geq 1} e^{(c_1 -\beta \frac{c_2}{2} +\log2)n} + e^{-C_1/\varepsilon^2} + e^{-C_0/\varepsilon^2}.
\end{align*}
When $\beta$ is large enough, the sum above converges and there exists a constant $C$ such that   
\begin{equation*}
    \mathbb{Q}_{\Lambda; \beta, \varepsilon}^+(\sigma_0 = -1) \leq e^{-\beta 2C} + e^{-2C / \varepsilon^2}.
\end{equation*}
The Markov Inequality finally yields
\begin{align*}
    \mathbb{P}\left( \nu_{\Lambda; \beta, \varepsilon h}^+(\sigma_0 = -1) \geq e^{-C\beta} + e^{-C/\varepsilon^2}\right) &\leq \frac{\mathbb{Q}_{\Lambda; \beta, \varepsilon}^+(\sigma_0 = -1)}{e^{-C\beta} - e^{-C/\varepsilon^2}} \\
    &\leq \frac{e^{-\beta 2C} + e^{-2C / \varepsilon^2}}{e^{-C\beta} + e^{-C/\varepsilon^2}} \leq e^{-C\beta} + e^{-C/\varepsilon^2},
\end{align*}
what proves our claim.
\end{proof}

\section{Concluding Remarks}

In this paper, we proved phase transition for the long-range random field Ising model in $d\geq 3$ and $\alpha >d$, by following a new method of proving phase transition introduced by Ding and Zhuang \cite{Ding2021}, and using a modification of multidimensional contours defined in \cite{Affonso.2021}. The key part of the argument was to extend the results of \cite{FFS84} to contours that are not necessarily connected. This proof can be extended to other models with a contour system, as long as the probability of the event $\mathcal{E}^c$ decreases to zero for large $\varepsilon$.

The results presented by Bricmont and Kupiainen \cite{Bricmont.Kupiainen.88} are more general than ours since they only need the external field to be symmetric around zero and have a sub-Gaussian tail. In \cite{Ding2021}, Ding and Zhuang claim that it should be possible, with more care, to extend their results to an external field in the same generality.

Bricmont and Kupiainen also study the decay of correlation for the random field Ising model, and so far as we know, the only result for the long-range RFIM is \cite{Klein.Masooman.97}, for high temperature.   


\section*{Acknowledgements}

This study was financed, in part, by the S\~{a}o Paulo Research Foundation (FAPESP), Brazil. Process Numbers 2016/25053-8, 2017/18152-2, 2018/26698-8, 2020/14563-0, and 2023/00854-1. RB is supported by CNPq grants 311658/2025-3, 312294/2018-2 and 408851/2018-0, and by the University Center of Excellence \textquotedblleft Dynamics, Mathematical Analysis and Artificial Intelligence\textquotedblright, at the Nicolaus Copernicus University.  The authors thank Kelvyn Welsch and Gilberto Ara\'ujo for carefully reading the previous versions of the paper and, in particular, for Kelvyn pointing out the condition \textbf{(A1)}. We thank Amnon Aharony for pointing out the reference \cite{Aharony_Imry_Ma_76}. We thank Abel Klein for discussions about earlier results in the literature concerning multiscale analysis. JM and LA are very grateful to Eric Endo for the support and hospitality during their first visit to China and NYU-Shanghai; they also thank Weijun Xu, as well as Jian Ding, for the support and hospitality during their visit to Peking University, especially to Professor Ding for fruitful discussions.
	
\bibliographystyle{habbrv} 
\bibliography{bib} 

\end{document}